\newtheorem{proposition}{Proposition}
\newtheorem{theorem}{Theorem}
\newtheorem{lemma}{Lemma}
\newtheorem{corollary}{Corollary}
\theoremstyle{definition}
\newtheorem{example}{Example}
\newtheorem{definition}{Definition}
\newcommand{\de}{\,{\rm d}}
\newcommand{\R}{\mathbb{R}} 
\newcommand{\C}{\mathbb{C}} 
\newcommand{\nat}{\mathbb N} 
\newcommand{\Z}{\mathbb Z} 
\newcommand{\half}{\tfrac{1}{2}} 
\newcommand{\floor}[1]{\lfloor #1 \rfloor} 
\newcommand{\bigfloor}[1]{\left\lfloor #1 \right\rfloor} 
\newcommand{\pp}{\mathcal{P}}
\newcommand{\xx}{\mathcal{X}}
\newcommand{\rr}{\mathcal{R}}
\renewcommand{\tt}{\mathcal{T}}
\renewcommand{\gg}{\mathcal{G}}
\newcommand{\hi}{\mathcal{H}} 
\newcommand{\hh}{\mathcal{H}} 
\renewcommand{\ss}{\mathcal{S}} 
\renewcommand{\aa}{\mathcal{A}} 
\newcommand{\lh}{\mathcal{L(H)}} 
\newcommand{\lhs}{\mathcal{L}_s(\hi)} 
\newcommand{\spannoC}[1]{{\rm span}_\C \left\{ #1 \right\}}
\newcommand{\spannoR}[1]{{\rm span}_\R \left\{ #1 \right\}}
\newcommand{\trh}{\mathcal{T(H)}} 
\newcommand{\trhs}{\mathcal{T}_s(\hi)} 
\newcommand{\ip}[2]{\left\langle\,#1\,|\,#2\,\right\rangle} 
\newcommand{\kb}[2]{|#1\rangle\langle#2|} 
\newcommand{\no}[1]{\left\|#1\right\|} 
\newcommand{\tr}[1]{{\rm tr}\left[#1\right]} 
\renewcommand{\det}[1]{{\rm det}\left[#1\right]} 
\newcommand{\id}{\mathbbm{1}} 
\newcommand{\fii}{\varphi}
\renewcommand{\rho}{\varrho}
\newcommand{\lam}{\lambda}
\newcommand{\rank}{{\rm rank}\,} 
\newcommand{\lft}{\left(}
\newcommand{\rgt}{\right)}
\newcommand{\pair}[2]{\left\langle\,#1  , #2\,\right\rangle} 
\newcommand{\ggh}{\widehat{\mathcal{G}}}
\newcommand{\bor}[1]{\mathcal{B}(#1)} 
\newcommand{\Co}{\mathsf{C}}
\newcommand{\Mo}{\mathsf{M}}
\newcommand{\prem}{\mathcal{P}}
\newcommand{\task}{\mathcal{T}} 
\newcommand{\state}{\mathcal{S}} 
\begin{document}\setlength{\arraycolsep}{2pt}

\title{Tasks and premises in quantum state determination}

\begin{abstract}
The purpose of quantum tomography is to determine an unknown quantum state from measurement outcome statistics.
There are two obvious ways to generalize this setting.
First, our task need not be the determination of any possible input state but only some input states, for instance pure states.
Second, we may have some prior information, or premise, which guarantees that the input state belongs to some subset of states, for instance the set of states with rank less than half of the dimension of the Hilbert space.
We investigate state determination under these two supplemental features, concentrating on the cases where the task and the premise are statements about the rank of the unknown state.
We characterize the structure of quantum observables (POVMs) that are capable of fulfilling these type of determination tasks. 
After the general treatment we focus on the class of covariant phase space observables, thus providing physically relevant examples of observables both capable and incapable of performing these tasks. In this context, the effect of noise is discussed.
 \end{abstract}

\author[Carmeli]{Claudio Carmeli}
\address{\textbf{Claudio Carmeli}; Dipartimento di Fisica, Universit\`a di Genova, Via Dodecaneso 33, I-16146 Genova, Italy.}
 \email{claudio.carmeli@gmail.com}

\author[Heinosaari]{Teiko Heinosaari}
\address{\textbf{Teiko Heinosaari}; Turku Centre for Quantum Physics, Department of Physics and Astronomy, University of Turku, Finland}
\email{teiko.heinosaari@utu.fi}

\author[Schultz]{Jussi Schultz}
\address{\textbf{Jussi Schultz}; Dipartimento di Matematica, Politecnico di Milano, Piazza Leonardo da Vinci 32, I-20133 Milano, Italy}
\email{jussischultz@gmail.com}

\author[Toigo]{Alessandro Toigo}
\address{\textbf{Alessandro Toigo}; Dipartimento di Matematica, Politecnico di Milano, Piazza Leonardo da Vinci 32, I-20133 Milano, Italy, and I.N.F.N., Sezione di Milano, Via Celoria 16, I-20133 Milano, Italy}
\email{alessandro.toigo@polimi.it}

\maketitle

\section{Introduction}\label{sec:intro}

The usual task in quantum tomography is to determine an unknown quantum state from measurement outcome statistics.
There are two obvious ways to vary this setting.
First, our task need not be the determination of any possible input state but only some states belonging to a restricted subset of all states.
Second, we typically have some prior information, or premise, which tells us that the input state belongs to some subset of states.
It is clear that with this additional information and restricted task, this problem should be easier than the problem of determing an unknown quantum state without any prior information. 
As an example, consider the usual optical homodyne tomography of a single mode electromagnetic field \cite{VoRi89, Smithey93}. If the state is completely unknown, then, in principle, one needs to measure infinitely many rotated field quadratures \cite{KiSc13}. However, as soon as one knows that the state can be represented as a finite matrix in the photon number basis, then already  finitely many quadratures are enough, the exact number depending on the size of the matrix \cite{LeMu96}. It should be emphasized that the premise is not merely a mathematical assumption but carries also physical meaning. Indeed, it simply means that the probability of detecting  energies above a certain bound is zero. Since one might expect that also in general a given task and premise leads to the requirement of less or worse resources, an immediate question is the characterization of these resources.

The task and the premise can be described as subsets of the set of all states, hence this modified setting is specified by two subsets $\task$ (task) and $\prem$ (premise) of all states.
Clearly, we must have $\task\subseteq\prem$ to make the formulation meaningful. 
Smaller $\task$ means less demanding determination task and smaller $\prem$ means better prior knowledge.
In this work we study the previously explained question from the point of view of quantum observables, mathematically described as positive operator valued measures (POVMs).
A quantum observable is called \emph{informationally complete} if the measurement outcome probabilities uniquely determine each state {\cite{Prugovecki77}}, and this clearly relates to the usual task in quantum tomography.
The previously described generalized setting leads to the concept of {\em $(\task,\prem)$-informational completeness}.
We present a general formulation of this property, and then concentrate on some interesting special cases.

Our main results are related to situations when the premise tells that the rank of the input state is bounded by some number $p$, and the task is then to determine all states with rank $t\leq p$ or less. 
We show, in particular, that if there is no premise and the task is to determine all states with rank less than or equal to $\frac{d}{2}$, where $d$ is the dimension of the Hilbert space of the quantum system, then we actually need an informationally complete observable.

Perhaps the most important informationally complete observables are covariant phase space observables.
These are widely used in both finite and infinite dimensional quantum mechanics.
However, not all covariant phase space observables are informationally complete, and for instance noise can easily destroy this desired property. 
We will show that even if a covariant phase space observable fails to be informationally complete, it can be $(\task,\prem)$-informationally complete for some meaningful sets $\task$ and $\prem$.

\emph{Notation.}
We denote by $\nat$ the set of natural numbers (containing $0$) and $\nat_* = \nat\cup\{\infty\}$. 
We use the conventions $\infty \pm k = \infty + \infty =\frac{\infty}{k}=\infty$ for all nonzero $k\in\nat$. 
For every $x\in\R$, we denote by $\floor{x}$ the largest integer not greater than $x$, and we define $\floor{\infty}=\infty$.
If not specified, $\hh$ is a finite dimensional or separable infinite dimensional complex Hilbert space. 
We denote $d=\dim \hh\in\nat_*$. We denote by $\lh$ the complex Banach space of bounded linear operators on $\hh$ endowed with the uniform norm, and by $\lhs\subseteq\lh$ the real Banach subspace of selfadjoint operators. 
If $\xx \subseteq \lh$ is a complex linear space such that $A^\ast\in\xx$ whenever $A\in\xx$, we denote by $\xx_s = \xx\cap\lhs$ the selfadjoint part or $\xx$, and regard it as a real linear space. 
Then $\xx = \xx_s + i\xx_s$; in particular, $\dim_\R \xx_s = \dim_\C \xx$. We write $\trh$ for the complex Banach space of the trace class operators on $\hh$ endowed with the trace class norm, and $\trhs=\trh\cap\lhs$.
Clearly, if $\dim\hh<\infty$, then $\lh \equiv \trh$ as linear spaces.

We denote by $\state =\{\rho\in\trh\mid\rho\geq 0 \mbox{ and } \tr{\rho} = 1\}$ the set of all states (i.e., density operators) on $\hh$, and by $\state^1 = \{\rho\in\ss\mid \rho^2 = \rho\}$ the set of all pure states (i.e., one-dimensional projections).

\section{Observables}

In this section we generalize the linear algebra framework for quantum tomography as introduced in \cite{HeMaWo13} to a wider setting, also covering infinite dimensional Hilbert spaces and arbitrary measurable spaces.

Let $(\Omega,\aa)$ be a measurable space. An {\em observable} on $\Omega$ is a map $\Mo:\aa\to\lh$ such that
\begin{enumerate}[(1)]
\item each $\Mo(X)$ is a positive operator;
\item for all finite or denumerably infinite partitions $(X_i)_{i\in I}$ of $\Omega$ into disjoint measurable sets $X_i\in\aa$, we have $\sum_{i\in I} \Mo(X_i)=\id$, the sum converging in the weak operator topology.
\end{enumerate}
If $\Mo$ is an observable on $\Omega$ and $\rho\in\ss$, we can define the {\em associate measurement outcome probability distribution} $\rho^\Mo$ on the measurable space $(\Omega,\aa)$, given by $\rho^\Mo(X) = \tr{\rho\Mo(X)}$ for all $X\in\aa$. When $\Omega$ is a finite or denumerable set, we will take $\aa = \pp(\Omega)$, the set of all subsets of $\Omega$, and denote $\Mo(x) \equiv \Mo(\{x\})$ and $\rho^\Mo(x) \equiv \rho^\Mo(\{x\})$ for all $x\in\Omega$ for short.

A {\em weak*-closed real operator system} on $\hh$ is a weak*-closed real linear subspace $\rr\subseteq\lhs$ such that $\id\in\rr$. (Note that $\rr$ is a real operator system  if and only if $\rr_\C = {\rm span}_\C \rr$ is an operator system in the standard sense of operator theory, and then we have $\rr = (\rr_\C)_s$ {\cite{CBMOA03})}. 
If $\rr$ is a weak*-closed real operator system on $\hh$, then its {\em annihilator} is the following closed subspace of $\trhs$
$$
\rr^\perp = \{ T\in\trhs \mid \tr{TA}=0 \  \forall A\in\rr\} \, .
$$
Since $\id\in\rr$, we have $\tr{T} = 0$ for all $T\in\rr^\perp$.

Any observable $\Mo:\aa\to\lh$ generates a weak*-closed real operator system on $\hh$ as the weak*-closure of the real linear span of its range; we denote 
$$
\rr(\Mo) = \overline{\spannoR{\Mo (X) \mid X\in\aa }}^{\mathrm{w^*}} \, .
$$
Note that
\begin{align}
\label{eq:anngen}
\rr(\Mo)^\perp = \{ T\in\trhs \mid \tr{T \Mo(X)}=0 \  \forall X\in\aa \} \, .
\end{align}

Conversely, we have the following facts.

\begin{proposition}\label{prop:exM}
\begin{enumerate}[(a)]
\item Suppose $\rr$ is a weak*-closed real operator system on $\hh$. Then there exists a finite or denumerable set $\Omega$ satisfying $\#\Omega = \dim\rr$ and an observable $\Mo$ on $\Omega$ such that $\rr = \rr(\Mo)$.
\item Suppose $\mathcal{X} \subseteq\trhs$ is a closed subspace such that $\tr{T} = 0$ for all $T\in\mathcal{X} $. Then $\mathcal{X}  = \rr(\Mo)^\perp$ for some observable $\Mo$.
\end{enumerate}
\end{proposition}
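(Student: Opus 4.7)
My plan is to prove (a) by an explicit construction of the POVM out of a basis of $\rr$, and to deduce (b) from (a) via a bipolar/annihilator argument.

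For part (a), I fix a real linear basis $\{A_i\}_{i\in\Omega}$ of $\rr$ indexed by a finite or denumerable set $\Omega$ with $\#\Omega=\dim\rr$, arranged (without loss of generality, since $\id\in\rr$) so that one distinguished element, say $A_{i_0}$, equals $\id$. For each $i\neq i_0$ I pick $c_i\geq\no{A_i}$, so that $A_i+c_i\id\geq 0$, and then scalars $\lambda_i>0$ small enough that $\sum_{i\neq i_0}\lambda_i(c_i+\no{A_i})<1$. Setting $\Mo(i)=\lambda_i(A_i+c_i\id)$ for $i\neq i_0$ and $\Mo(i_0)=\id-\sum_{i\neq i_0}\Mo(i)$, the series converges absolutely in operator norm, $\Mo(i_0)\geq 0$, each $\Mo(i)\geq 0$, and $\Mo$ is a bona fide observable on $(\Omega,\pp(\Omega))$.

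To verify $\rr(\Mo)=\rr$, note that each $\Mo(i)$ lies in $\rr$ (as a linear combination of $\id$ and some $A_j$, together with weak*-closedness of $\rr$ to handle $\Mo(i_0)$), so $\rr(\Mo)\subseteq\rr$. Conversely, norm convergence of $\sum_{i\neq i_0}\Mo(i)$ places $\id=\Mo(i_0)+\sum_{i\neq i_0}\Mo(i)$ in the norm closure of $\spannoR{\Mo(i)\mid i\in\Omega}\subseteq\rr(\Mo)$; once $\id\in\rr(\Mo)$, the identity $A_i=\lambda_i^{-1}\Mo(i)-c_i\id$ shows $A_i\in\rr(\Mo)$ for every $i\neq i_0$, and taking the weak*-closure of the span of $\{A_i\}_{i\in\Omega}$ yields $\rr\subseteq\rr(\Mo)$. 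For part (b), I first observe that $\xx^\perp=\{A\in\lhs\mid\tr{TA}=0\ \forall T\in\xx\}$ is a weak*-closed real operator system: weak*-closedness is automatic as a preannihilator, and $\id\in\xx^\perp$ is precisely the hypothesis $\tr{T}=0$ for all $T\in\xx$. Part (a) then produces an observable $\Mo$ with $\rr(\Mo)=\xx^\perp$, and since $\xx$ is norm-closed in $\trhs$, the standard bipolar theorem for the dual pairing between $\trhs$ and $\lhs$ gives $\xx={}^\perp(\xx^\perp)=\rr(\Mo)^\perp$, the last equality by \eqref{eq:anngen}.

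The main technical point arises in the infinite-dimensional case of (a): the single sequence $\{\lambda_i\}$ must simultaneously secure convergence of $\sum_i\Mo(i)$ to $\id$, positivity of the ``correction'' effect $\Mo(i_0)$, and enough algebraic content in $\{\Mo(i)\}$ that its weak*-closed real linear span recovers all of $\rr$ rather than only a proper subspace. The summability condition $\sum_i\lambda_i(c_i+\no{A_i})<1$ handles all three constraints at once, so the obstacle is of a bookkeeping rather than a conceptual nature.
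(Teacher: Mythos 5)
Part (b) of your argument is the same as the paper's (pass to the preannihilator in $\lhs$, apply (a), then the bipolar theorem), and in the case $\dim\rr<\infty$ your construction for (a) is fine and essentially the standard one. The genuine gap is in (a) when $\dim\rr=\infty$: you start from ``a real linear basis $\{A_i\}_{i\in\Omega}$ of $\rr$ indexed by a finite or denumerable set $\Omega$,'' but no such object exists. A weak*-closed subspace of $\lhs$ is in particular norm-closed, hence a real Banach space, and by the Baire category theorem an infinite-dimensional Banach space has uncountable Hamel dimension; so a genuine linear basis of $\rr$ cannot be countable, and with uncountably many indices your summability requirement $\sum_{i\neq i_0}\lambda_i(c_i+\no{A_i})<1$ is impossible, since all the terms are strictly positive. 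If instead you intend $\{A_i\}$ to be merely a countable family whose weak*-closed real span is $\rr$ (which is all your final step actually uses), then the existence of such a family is precisely the nontrivial point, and you never establish it; it is not ``bookkeeping.'' This is where the paper's proof does its real work: it considers $\rr_0=\{A\in\rr\mid A\geq 0,\ \no{A}\leq 1\}$, notes that it is a weak*-closed subset of the unit ball of $\lh$, which is weak*-compact and \emph{metrizable} because $\hh$ (hence $\trh$) is separable, concludes that $\rr_0$ is weak*-separable, and only then builds the observable from a countable weak*-dense sequence in $\rr_0$, using the decomposition $A=(\no{A}\id+A)/2-(\no{A}\id-A)/2$ to see that $\rr={\rm span}_\R\,\rr_0$.

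Once you insert that separability step (take a countable weak*-dense subset of $\rr_0$, or of the unit ball of $\rr$, as your generating family), the rest of your construction goes through and is essentially equivalent to the paper's: your shift-and-rescale effects $\lambda_i(A_i+c_i\id)$ play the role of the paper's weights $2^{-(n+1)}A_n$ on positive contractions, and your verification that $\id\in\rr(\Mo)$ and hence each generator lies in $\rr(\Mo)$ is sound. Also note that the conclusion ``taking the weak*-closure of the span of $\{A_i\}$ yields $\rr\subseteq\rr(\Mo)$'' is only valid under the corrected reading of $\{A_i\}$ as a weak*-total family, so the fix is needed for that step as well.
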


\begin{proof}
\begin{enumerate}[(a)]
\item If $\dim\rr <\infty$, then this is proved in \cite[Prop.~1]{HeMaWo13} (note that the proof is not affected if $\dim\hh=\infty$). For $\dim\rr =\infty$, we use the following slight modification of the proof of \cite[Theorem~2.2]{CRQM97}. We define the set
\begin{equation*}
\rr_0 = \{A\in\rr\mid A\geq 0 \mbox{ and } \no{A}\leq 1\} \, .
\end{equation*}
Then, $\rr_0$ is weak*-compact and metrizable, being a weak*-closed subset of the unit ball of $\lh$. In particular, it is separable. Let $(A_n)_{n\in\nat}$ be a weak*-dense subset of $\rr_0$, and define an observable $\Mo : \pp(\nat_*) \to \lh$ by
\begin{equation*}
\Mo (\infty) = \id-\sum_{n=0}^\infty \frac{1}{2^{n+1}} A_n, \qquad \Mo (n) = \frac{1}{2^{n+1}} A_n \quad \mbox{for } n\geq 0 \, .
\end{equation*}
The series converges in norm, thus also in the weak*-topology.
Since $\rr_0$ is weak*-closed, we have $\Mo(\infty) \in \rr_0$. Each $A\in\rr$ can be written in the form
\begin{equation*}
A=(\no{A}\id+A)/2 - (\no{A}\id-A)/2  \, . 
\end{equation*}
Since $(\no{A}\id \pm A)/(2\no{A}) \in\rr_0$, we conclude that $\rr = {\rm span}_\R \rr_0$. 
By this fact and weak*-density of the set $(A_n)_{n\in\nat}$ in $\rr_0$, it follows that $\rr = \rr(\Mo)$.
\item Let $\rr = \{A\in\lhs\mid \tr{TA} = 0 \ \forall T\in\mathcal{X} \}$. 
It is straightforward to check that $\rr$ is a weak*-closed real operator system, hence $\rr = \rr(\Mo)$ for some observable $\Mo$ by item (a). Moreover, $\mathcal{X}  = \rr^\perp = \rr(\Mo)^\perp$ by the Bipolar Theorem; see e.g. {\cite[V.1.8]{CFA90}}.
\end{enumerate}
\end{proof}

\section{($\task$,$\prem$)-informationally complete observables}

We recall that an observable $\Mo$ is called \emph{informationally complete} if for any two different states $\varrho_1,\varrho_2\in\state$, the measurement outcome distributions $\varrho_1^\Mo$ and $\varrho_2^\Mo$ are different {\cite{Prugovecki77}}.
In other words, each state leads to a unique measurement outcome distribution.
In the following we generalize this concept of informational completeness.

Suppose that two nonempty subsets $\task\subseteq\prem\subseteq\state$ are given.
(They may also be equal.)
The larger subset $\prem$ corresponds to a certified premise, so we know that the initial state belongs to $\prem$ with certainty.
The smaller subset $\task$ specifies the given state determination task; we are required to determine the state whenever it belongs to $\task$.
Therefore, we can fulfill the given task if we are able to differentiate every state in $\task$ from every state in $\prem$.

This additional aspect leads to the following generalization of informational completeness.

\begin{definition}
Let $\emptyset\neq\task\subseteq\prem\subseteq\state$.
An observable $\Mo$ is \emph{$(\task,\prem)$-informationally complete} if $\varrho_1^\Mo \neq \varrho_2^\Mo$ for any two different states $\varrho_1\in\task$ and $\varrho_2\in\prem$.
\end{definition}

Clearly, an observable is informationally complete in the usual sense if and only if it is $(\state,\state)$-informationally complete.
This obviously corresponds to the most demanding state determination task without having any prior information.
We note that the smaller the set $\prem$ is, the more informative the premise is.
Likewise, the smaller the set $\task$ is, the less demanding the task is.

For all nonempty subsets $\task\subseteq\prem\subseteq\state$, we denote
\begin{equation*}
\task-\prem = \{ \varrho_1 - \varrho_2 \mid \varrho_1\in\task \, , \varrho_2\in\prem \} \subseteq \trhs
\end{equation*}
and
\begin{equation*}
\R(\task - \prem) = \{\lambda T \mid T \in \task - \prem \, ,  \lambda \in \R \} \subseteq \trhs\, .
\end{equation*}
Every $T\in \R(\task-\prem)$ satisfies $\tr{T}=0$.

\begin{proposition}\label{prop:prem}
Let $\emptyset \neq \task\subseteq\prem\subseteq\state$.
For an observable $\Mo$, the following conditions are equivalent:
\begin{enumerate}[(i)]
\item $\Mo$ is $(\task,\prem)$-informationally complete.
\item $\rr(\Mo)^\perp \cap (\task - \prem) = \{ 0 \}$.
\item $\rr(\Mo)^\perp \cap \ \R(\task - \prem) = \{ 0 \}$.
\end{enumerate}
\end{proposition}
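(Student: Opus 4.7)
The plan is to establish the chain (i) $\Leftrightarrow$ (ii) $\Leftrightarrow$ (iii) by essentially unfolding definitions and exploiting the linearity of $\rr(\Mo)^\perp$, with no substantive obstacles.

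\medskip
\noindent\textbf{Step 1: (i) $\Leftrightarrow$ (ii).} First I would observe that for any two states $\rho_1,\rho_2\in\state$, equality of the associated probability distributions $\rho_1^\Mo = \rho_2^\Mo$ means exactly that $\tr{(\rho_1-\rho_2)\Mo(X)}=0$ for every $X\in\aa$, which by the characterization \eqref{eq:anngen} is equivalent to $\rho_1-\rho_2\in\rr(\Mo)^\perp$. Therefore $(\task,\prem)$-informational completeness — i.e. $\rho_1^\Mo\neq\rho_2^\Mo$ whenever $\rho_1\in\task$, $\rho_2\in\prem$ and $\rho_1\neq\rho_2$ — is the same as saying that the only element of $\task-\prem$ lying in $\rr(\Mo)^\perp$ is the zero operator (noting $0\in\task-\prem$ since $\task\neq\emptyset$ and $\task\subseteq\prem$). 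This yields the equivalence of (i) and (ii).

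\medskip
\noindent\textbf{Step 2: (iii) $\Rightarrow$ (ii).} This implication is immediate from the inclusion $\task-\prem\subseteq\R(\task-\prem)$, obtained by taking $\lambda=1$.

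\medskip
\noindent\textbf{Step 3: (ii) $\Rightarrow$ (iii).} Assume (ii) and pick any $T\in\rr(\Mo)^\perp\cap\R(\task-\prem)$. Write $T=\lambda(\rho_1-\rho_2)$ with $\lambda\in\R$, $\rho_1\in\task$, $\rho_2\in\prem$. If $\lambda=0$ then $T=0$; if $\lambda\neq 0$, then since $\rr(\Mo)^\perp$ is a real linear subspace of $\trhs$, we have $\rho_1-\rho_2=\lambda^{-1}T\in\rr(\Mo)^\perp\cap(\task-\prem)$, which equals $\{0\}$ by (ii), so again $T=0$. Thus $\rr(\Mo)^\perp\cap\R(\task-\prem)=\{0\}$.

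\medskip
There is no genuine difficulty here: the only subtle point is to remember that $\rr(\Mo)^\perp$ is closed under scalar multiplication (which is what forces the step from the affine set $\task-\prem$ to the cone/line $\R(\task-\prem)$ to be free), and that the equivalence of $\rho_1^\Mo=\rho_2^\Mo$ with $\rho_1-\rho_2\in\rr(\Mo)^\perp$ is exactly the content of \eqref{eq:anngen}.
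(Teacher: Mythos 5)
Your proposal is correct and follows essentially the same route as the paper: the equivalence of (i) and (ii) via \eqref{eq:anngen}, and the equivalence of (ii) and (iii) from the fact that $\rr(\Mo)^\perp$ is a real linear subspace (which you merely spell out in slightly more detail by scaling with $\lambda^{-1}$). No gaps.
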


\begin{proof}
Let $\varrho_1\in\task$ and $\varrho_2\in\prem$.
Then, {using \eqref{eq:anngen}},
\begin{align*}
\varrho_1^\Mo = \varrho_2^\Mo \quad & \Leftrightarrow \quad \forall X\in\aa : \tr{\varrho_1 \Mo(X)} = \tr{\varrho_2 \Mo(X)}  \\
& \Leftrightarrow \quad \forall X\in\aa : \tr{(\varrho_1-\varrho_2) \Mo(X)} =0   \\
& \Leftrightarrow \quad \varrho_1-\varrho_2 \in \rr(\Mo)^\perp \, .
\end{align*}
Thus, $\Mo$ is $(\task,\prem)$-informationally complete if and only if $\rr(\Mo)^\perp \cap (\task - \prem) = \{ 0 \}$. 
Since $\rr(\Mo)^\perp$ is a linear space, the equivalence of (ii) and (iii) follows.
\end{proof}

A well known mathematical characterization of informationally complete observables is that $\rr(\Mo)=\lhs$ \cite{Busch91,SiSt92}.
As an application of Prop.~\ref{prop:prem} we give a short derivation of this fact.

\begin{corollary}\label{prop:well-known}
An observable $\Mo$ is informationally complete if and only if $\rr(\Mo)^\perp = \{ 0 \}$.
\end{corollary}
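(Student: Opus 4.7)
The plan is to invoke Proposition~\ref{prop:prem} with the choice $\task = \prem = \state$, so that $(\task,\prem)$-informational completeness reduces to ordinary informational completeness, and then identify $\R(\state-\state)$ explicitly.

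First I would apply part (iii) of Proposition~\ref{prop:prem} with $\task = \prem = \state$: the observable $\Mo$ is informationally complete if and only if $\rr(\Mo)^\perp \cap \R(\state - \state) = \{0\}$. The statement of the corollary therefore reduces to proving that
\begin{equation*}
\rr(\Mo)^\perp \cap \R(\state - \state) = \rr(\Mo)^\perp.
\end{equation*}
The inclusion $\subseteq$ is obvious, so the content is the reverse inclusion, i.e., that every element of $\rr(\Mo)^\perp$ already lies in $\R(\state-\state)$.

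Next I would use two observations. On one hand, since $\id \in \rr(\Mo)$, the remark right after the definition of the annihilator gives $\tr{T}=0$ for all $T \in \rr(\Mo)^\perp$. On the other hand, any $T \in \trhs$ with $\tr{T}=0$ can be written as a real multiple of an element of $\state-\state$ by means of the Jordan decomposition: write $T = T_+ - T_-$ with $T_\pm \geq 0$ and $T_+ T_- = 0$; since $\tr{T}=0$, one has $\tr{T_+} = \tr{T_-} =: \lambda \geq 0$. If $\lambda=0$ then $T=0 \in \R(\state-\state)$; otherwise $T = \lambda(\rho_1 - \rho_2)$ with $\rho_i := T_\pm/\lambda \in \state$, so again $T \in \R(\state - \state)$. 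This shows
\begin{equation*}
\{T \in \trhs \mid \tr{T}=0\} \subseteq \R(\state-\state),
\end{equation*}
and combined with the previous observation yields $\rr(\Mo)^\perp \subseteq \R(\state-\state)$, which completes the proof.

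No step looks hard: the whole argument is essentially the identification $\R(\state-\state) = \{T \in \trhs : \tr{T}=0\}$, which is a standard consequence of the Jordan (or spectral) decomposition of a selfadjoint trace class operator, and is valid in both finite and separable infinite dimensions. The only point worth double-checking is that the Jordan decomposition of $T \in \trhs$ indeed produces $T_\pm \in \trh$ with $\tr{T_\pm} < \infty$, but this is immediate from the spectral theorem for compact selfadjoint operators applied to $T$.
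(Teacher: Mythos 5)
Your proposal is correct and follows essentially the same route as the paper: reduce to Proposition~\ref{prop:prem} with $\task=\prem=\state$ and use the Jordan decomposition $T=T_+-T_-$ together with $\tr{T}=0$ to show that every element of $\rr(\Mo)^\perp$ lies in $\R(\state-\state)$. The only cosmetic difference is that you treat the case $\tr{T_\pm}=0$ separately, whereas the paper restricts to nonzero $T$ from the start; both are fine.
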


\begin{proof}
Each nonzero element $T\in\rr(\Mo)^\perp$ decomposes as $T = T_+ - T_-$, where $T_\pm\geq 0$ are the positive and negative parts of $T$ \cite[p.~241]{CFA90}. Since $\tr{T} = 0$, we have $\tr{T_+} = \tr{T_-} \equiv c$, and $c \neq 0$ as otherwise $T_+ = T_- = 0$. 
Setting $\rho_{\pm} = T_\pm/c \in \state$, we  have $T = c(\rho_+ - \rho_-) \in \R(\state-\state)$. 
Thus, $\rr(\Mo)^\perp = \rr(\Mo)^\perp \cap \R(\state-\state)$, and the claim follows by Prop.~\ref{prop:prem}.
\end{proof}

\section{Characterization of ($\task$,$\prem$)-informational completeness in various cases}

The rank of an operator $A\in\lh$ is the dimension of its range: $\rank (A) = \dim (A\hh) \in\nat_*$.
For each $r\in\nat_*$ such that $1\leq r \leq d$, we denote by $\state^{\leq r}$ the set of all states $\varrho\in\state$ satisfying $\rank(\varrho)\leq r$.
Clearly, $\state^{\leq 1}\equiv\state^1$ and $\state^{\leq d}\equiv\state$. Moreover, if $d=\infty$, we denote by $\ss^{\rm fin}$ the set of states with finite rank.
In this section we investigate observables that are ($\tt$,$\pp$)-informationally complete when $\tt = \ss^{\leq t}$ and $\pp = \ss^{\leq p}$ for some $t\in\nat$, $p\in\nat_\ast$ with $1\leq t\leq p \leq d$, or $\tt = \ss^{\rm fin}$ and $\pp = \ss$, or $\tt = \pp = \ss^{\rm fin}$.

\subsection{Mathematical characterization}

By the Spectral Theorem each $T\in\trhs$ has a spectral decomposition; there exists an orthonormal basis $\{\psi_j\}_{j=1}^d$ of $\hh$ such that
\begin{equation*}
T= \sum_{j=1}^d \lambda_j \kb{\psi_j}{\psi_j} \, , 
\end{equation*}
where $\lambda_j\in\R$ and $\sum_j \lambda_j = \tr{T}$.
(In the infinite dimensional case the sum in the right hand side is infinite, in which case it converges in the trace class norm and is independent on the order of the terms.)
Each $\lambda_j$ is an eigenvalue of $T$, and every eigenvalue of $T$ appears in this decomposition as many times as is its multiplicity. Clearly, $\rank(T)$ is the number of nonzero eigenvalues of $T$ counted with their multiplicities.

In addition to $\rank(T)$, we will also need the following other characteristic numbers of $T$:
\begin{itemize}
\item $\rank_+(T)$ = number of strictly positive eigenvalues of $T$ counted with their multiplicities;
\item $\rank_-(T)$ = number of strictly negative eigenvalues of $T$ counted with their multiplicities;
\item $\rank_\uparrow(T)$ = $\max(\rank_+(T),\rank_-(T))$;
\item $\rank_\downarrow(T)$ = $\min(\rank_+(T),\rank_-(T))$.
\end{itemize}
We clearly have 
\begin{align}\label{eq:clear}
\rank_+(T) + \rank_-(T) =\rank_\uparrow(T)+\rank_\downarrow(T) = \rank(T) \leq d \, .
\end{align} 
Note also that $\rank_\pm(-T) = \rank_\mp(T)$ while $\rank_\uparrow(-T)= \rank_\uparrow(T)$ and $\rank_\downarrow(-T)= \rank_\downarrow(T)$.

We will be interested in subspaces $\mathcal{X}\subseteq \trhs$ that are annihilators of certain weak*-closed real operator systems.
By Prop. \ref{prop:exM} these subspaces consist of operators $T\in\trhs$ with $\tr{T}=0$.
The following lemma will be used later several times.

\begin{lemma}\label{prop:pos-neg}
Let $T\in\trhs$ be a nonzero operator with $\tr{T}=0$. We then have the following facts.
\begin{enumerate}[(a)]
\item The inequalities 
\begin{equation}\label{eq:pos-neg-0}
1\leq \rank_\downarrow(T)\leq \rank_\uparrow(T)\leq \rank(T)-1 \leq d-1
\end{equation}
hold.
\item There are $\rho_+,\rho_-\in\state$ and $\lambda>0$ such that
\begin{equation*}
T= \lambda (\rho_+ - \rho_- ) \, , 
\end{equation*}
and
\begin{equation*}
\rank(\rho_+)=\rank_+(T) \, , \qquad \rank(\rho_-)=\rank_-(T) \, .
\end{equation*}
\item If $\rho_1,\rho_2\in\state$ and $\lambda>0$ are such that 
\begin{equation}\label{eq:pos-neg-2}
T= \lambda (\rho_1 - \rho_2 ) \, , 
\end{equation}
then 
\begin{equation}\label{eq:rank-ineq}
\rank(\rho_1)\geq \rank_+(T) \, , \qquad \rank(\rho_2) \geq \rank_-(T) \, .
\end{equation}
\end{enumerate}
\end{lemma}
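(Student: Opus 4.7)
The plan is to derive each of the three claims from the spectral decomposition of $T$ introduced just above the lemma. Throughout, I will freely use that $T$ admits an orthogonal decomposition $T = T_+ - T_-$ as the difference of its positive and negative parts, where $T_\pm\geq 0$ are trace class with mutually orthogonal supports, $\rank(T_\pm) = \rank_\pm(T)$, and $\tr{T_+}-\tr{T_-} = \tr{T} = 0$.

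For part (a), I would first observe that $T \neq 0$ together with $\tr{T} = 0$ forces both $T_+ \neq 0$ and $T_- \neq 0$: if, say, $T_- = 0$, then $T = T_+ \geq 0$ and $\tr{T_+} = 0$ would give $T = 0$. Hence $\rank_+(T), \rank_-(T) \geq 1$, so $\rank_\downarrow(T) \geq 1$. The inequality $\rank_\downarrow(T) \leq \rank_\uparrow(T)$ is the definition of min versus max, $\rank_\uparrow(T) \leq \rank(T) - 1$ follows from \eqref{eq:clear} together with $\rank_\downarrow(T) \geq 1$, and $\rank(T) \leq d$ is immediate. For part (b), define $\lambda := \tr{T_+} = \tr{T_-}$, which is strictly positive by the same argument, and set $\rho_\pm := \lambda^{-1} T_\pm$. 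These are states, and their ranks coincide with $\rank(T_\pm) = \rank_\pm(T)$ by construction.

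Part (c) is the substantive one. Assume $T = \lambda(\rho_1 - \rho_2)$ with $\rho_1,\rho_2 \in \state$ and $\lambda > 0$. Let $P$ be the orthogonal projection onto the closure of the range of $T_+$. Since the supports of $T_+$ and $T_-$ are orthogonal, $PT_+P = T_+$ and $PT_-P = 0$, hence $PTP = T_+$. Sandwiching the identity $T = \lambda(\rho_1 - \rho_2)$ between $P$'s therefore yields
\begin{equation*}
\lambda P\rho_1 P = T_+ + \lambda P\rho_2 P \geq T_+ \geq 0.
\end{equation*}
The key step is now the standard fact that if $A \geq B \geq 0$ are positive operators, then $\ker A \subseteq \ker B$ and consequently $\rank(A) \geq \rank(B)$ (apply this to $A = \lambda P\rho_1 P$ and $B = T_+$). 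Combined with the elementary bound $\rank(P\rho_1 P) \leq \rank(\rho_1)$, this gives $\rank(\rho_1) \geq \rank(T_+) = \rank_+(T)$. The second inequality in \eqref{eq:rank-ineq} follows by applying the argument to $-T = \lambda(\rho_2 - \rho_1)$, using $(-T)_+ = T_-$.

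The only conceptual obstacle is part (c), and specifically the idea of compressing by the support projection of $T_+$ to eliminate the $T_-$ contribution; after that, the proof is a routine comparison of ranks of positive operators. Parts (a) and (b) are purely bookkeeping on the spectral decomposition.
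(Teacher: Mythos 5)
Your proof is correct. Parts (a) and (b) coincide with the paper's argument (both are direct bookkeeping on the decomposition $T=T_+-T_-$), but your part (c) takes a genuinely different route: the paper invokes Fan's theorem (eigenvalue monotonicity under the operator order, cited from Simon's \emph{Trace Ideals}) to conclude from $T\leq\lambda\rho_1$ and $-T\leq\lambda\rho_2$ that $T$ and $-T$ cannot have more strictly positive eigenvalues than $\lambda\rho_1$ and $\lambda\rho_2$ respectively, whereas you compress by the support projection $P$ of $T_+$, use $PTP=T_+$ to get $\lambda P\rho_1 P\geq T_+\geq 0$, and then apply the elementary facts that $A\geq B\geq 0$ implies $\ker A\subseteq\ker B$ (hence $\rank(A)\geq\rank(B)$) and that $\rank(P\rho_1 P)\leq\rank(\rho_1)$. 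Your version is self-contained and avoids the external eigenvalue-comparison theorem, and it works verbatim in the infinite-dimensional trace-class setting since the kernel-inclusion argument needs no spectral ordering; the paper's route is shorter on the page but buys the stronger eigenvalue-by-eigenvalue comparison, which is more than the lemma actually needs. Note that the paper does not even compress: it applies Fan directly to $T\leq\lambda\rho_1$, so your projection step is an extra idea rather than a hidden gap --- both establish exactly \eqref{eq:rank-ineq}.
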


\begin{proof}
\begin{enumerate}[(a)]
\item Since $T\neq 0$, it must have a nonzero eigenvalue. Further, since $\tr{T}=0$, it must have both strictly positive and strictly negative eigenvalues, i.e.,  $\rank_\downarrow(T) \geq 1$. 
The third inequality in \eqref{eq:pos-neg-0} now follows from \eqref{eq:clear}, and the remaining inequalities are clear.

\item
Let $T=\sum_{j} \lambda_j \kb{\psi_j}{\psi_j}$ be the spectral decomposition of $T$.
{We denote by $T_+=\sum_{j\mid \lambda_j>0} \lambda_j \kb{\psi_j}{\psi_j}$ and $T_-=-\sum_{j \mid \lambda_j<0} \lambda_j \kb{\psi_j}{\psi_j}$ the positive and negative parts of $T$ respectively.}
We define $\lambda=\sum_{j\mid \lambda_j>0} \lambda_j>0$ and $\rho_{\pm}=\frac{1}{\lambda} T_{\pm}$.
Since $\tr{T}=0$, we have $\lambda=\sum_{j\mid \lambda_j<0} \lambda_j$ and $\rho_{\pm}$ thus satisfy the required conditions.

\item
We recall the following consequence of Fan's theorem {\cite[Theorem~1.7]{STIA2005}}: if two operators $A,B\in\trhs$ satisfy $A\leq B$, then $\alpha_j \leq \beta_j$ for every $j=1,2,\ldots$, where $\{\alpha_j\}$, $\{\beta_j\}$ are the eigenvalues of $A$ and $B$, respectively, ordered in the decreasing order and repeated according to their multiplicities.
From \eqref{eq:pos-neg-2} it follows that $T \leq \lambda \rho_1$ and $-T \leq \lambda \rho_2$. 
By Fan's theorem, $T$ cannot have more strictly positive eigenvalues than $\lambda \rho_1$ and thus $\rank_+(T)\leq\rank_+(\lambda \rho_1)=\rank(\rho_1)$.
Similarly, $-T$ cannot have more strictly positive eigenvalues than $\lambda \rho_2$ and thus $\rank_+(-T)\leq\rank_+(\lambda \rho_2)=\rank(\rho_2)$.
Since $\rank_+(-T)=\rank_-(T)$, we obtain \eqref{eq:rank-ineq}.
\end{enumerate}
\end{proof}

\begin{lemma}\label{prop:lemma2}
We have the following facts.
\begin{enumerate}[(a)]
\item $\R(\ss^{\leq t} - \ss^{\leq p}) = \{T\in\trhs \mid \tr{T}=0,\, \rank_\downarrow(T) \leq t \mbox{ and } \rank_\uparrow(T) \leq p\}$ for all $t,p\in\nat_\ast$ with $t\leq p\leq d$.
\item $\R(\ss^{\rm fin} - \ss^{\rm fin}) = \{T\in\trhs \mid \tr{T}=0 \mbox{ and }\rank_\uparrow(T) < \infty\}$.
\item $\R(\ss^{\rm fin} - \ss) = \{T\in\trhs \mid \tr{T}=0 \mbox{ and }\rank_\downarrow(T) < \infty\}$.
\end{enumerate}
\end{lemma}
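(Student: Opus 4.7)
My plan is to prove all three statements by one uniform sign-tracking argument built on the previous Lemma~\ref{prop:pos-neg}. The key flexibility is that the scalar factor in the definition of $\R(\task-\prem)$ ranges over all of $\R$, so multiplying by $-1$ exchanges the roles of $\rho_1$ and $\rho_2$ and hence of $\rank_+(T)$ and $\rank_-(T)$; this is what converts statements in terms of $\rank_\pm$ (which is what Lemma~\ref{prop:pos-neg} produces) into the statements in terms of $\rank_\uparrow,\rank_\downarrow$ appearing on the right-hand sides.

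For the forward inclusion in (a) I take $S=\mu(\rho_1-\rho_2)$ with $\rho_1\in\ss^{\leq t}$, $\rho_2\in\ss^{\leq p}$ and $\mu\in\R$. Then $\tr{S}=0$ is immediate; the case $\mu=0$ is trivial, and for $\mu>0$ Lemma~\ref{prop:pos-neg}(c) yields $\rank_+(S)\leq t$, $\rank_-(S)\leq p$, while for $\mu<0$ the same lemma applied to $-S$ yields $\rank_+(S)\leq p$, $\rank_-(S)\leq t$. In either case $\rank_\downarrow(S)\leq t$ and $\rank_\uparrow(S)\leq p$. Parts (b) and (c) follow by the same reasoning: for (b), both $\rank_\pm(S)$ are finite, so $\rank_\uparrow(S)<\infty$; for (c), the bound $\rank(\rho_1)<\infty$ forces at least one of $\rank_\pm(S)$ to be finite, hence $\rank_\downarrow(S)<\infty$.

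For the reverse inclusions I start from $T$ satisfying the right-hand condition. The case $T=0$ is handled by $T=\rho-\rho$ for any $\rho$ in the smaller of $\task,\prem$, so I assume $T\neq 0$ and invoke Lemma~\ref{prop:pos-neg}(b) to write $T=\lambda(\rho_+-\rho_-)$ with $\lambda>0$ and $\rank(\rho_\pm)=\rank_\pm(T)$. For (a), if $\rank_+(T)\leq\rank_-(T)$ then $\rank(\rho_+)=\rank_\downarrow(T)\leq t$ and $\rank(\rho_-)=\rank_\uparrow(T)\leq p$, giving $T\in\R(\ss^{\leq t}-\ss^{\leq p})$ directly; otherwise I rewrite $T=(-\lambda)(\rho_--\rho_+)$ and now $\rho_-\in\ss^{\leq t}$ while $\rho_+\in\ss^{\leq p}$. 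Part (b) is the subcase where both $\rho_\pm$ automatically lie in $\ss^{\rm fin}$; in part (c) the same sign-flip dichotomy puts whichever of $\rho_\pm$ has finite rank in the $\ss^{\rm fin}$ slot, while the other remains an arbitrary state in $\ss$. The only subtle point in the entire argument is the sign bookkeeping, which is not a genuine obstacle once Lemma~\ref{prop:pos-neg} is available.
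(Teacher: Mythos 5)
Your proof is correct and follows essentially the same route as the paper: both inclusions rest on Lemma~\ref{prop:pos-neg}(b) (for $\supseteq$) and \ref{prop:pos-neg}(c) (for $\subseteq$), with the passage from $\rank_\pm$ to $\rank_\uparrow,\rank_\downarrow$ handled by the freedom to flip the sign of the scalar. The paper merely packages the same sign bookkeeping differently, by first characterizing the positive-scalar sets $\R_+(\ss^{\leq m}-\ss^{\leq n})$ and writing $\R(\ss^{\leq t}-\ss^{\leq p})$ as a union of two such sets, and then obtaining (b) and (c) as unions over finite ranks of instances of (a), whereas you argue element-wise; this is an organizational, not a substantive, difference.
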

\begin{proof}
\begin{enumerate}[(a)]
\item If $m,n\in\nat_\ast$, then
$$
\R_+ (\ss^{\leq m} - \ss^{\leq n}) = \{T\in\trhs \mid \tr{T} = 0,\, \rank_+ (T) \leq m \mbox{ and } \rank_- (T) \leq n\}
$$
as an immediate consequence of Lemma \ref{prop:pos-neg}b,c. Since
$$
\R(\ss^{\leq t} - \ss^{\leq p}) = \R_+(\ss^{\leq t} - \ss^{\leq p}) \cup \R_+(\ss^{\leq p} - \ss^{\leq t})
$$
the claim follows.
\item We have $\R(\ss^{\rm fin} - \ss^{\rm fin}) = \bigcup_{m,n\in\nat \mid m\leq n} \R(\ss^{\leq m} - \ss^{\leq n})$, hence the claim follows from (a).
\item Similarly, $\R(\ss^{\rm fin} - \ss) = \bigcup_{m\in\nat} \R(\ss^{\leq m} - \ss^{\leq d})$, which by (a) and triviality of the condition $\rank_\uparrow(T) \leq d$ implies the claim.
\end{enumerate}
\end{proof}

The following theorem characterizes $(\state^{\leq t},\state)$-informational completeness and $(\state^{\leq t},\state^{\leq p})$-informational completeness  for all values of $t$ and $p$ in both cases $d<\infty$ and $d=\infty$.

\begin{theorem}\label{prop:structure}
Let $\Mo$ be an observable and $t\in\nat$ with $1\leq t\leq d$.
\begin{enumerate}[(a)]
\item The following conditions are equivalent:
\begin{enumerate}[(i)]
\item $\Mo$ is $(\state^{\leq t},\state)$-informationally complete.  
\item Every nonzero $T\in\rr(\Mo)^\perp$ has $\rank_\downarrow(T)\geq t+1$.
\end{enumerate}
\item If $p\in\nat$ with $t\leq p \leq d$,  then the following conditions are equivalent:
\begin{enumerate}[(i)]
\item $\Mo$ is $(\state^{\leq t},\state^{\leq p})$-informationally complete.
\item Every nonzero $T\in\rr(\Mo)^\perp$ has $\rank_\downarrow(T)\geq t+1$ or $\rank_\uparrow(T)\geq p+1$.
\end{enumerate}
\end{enumerate}
\end{theorem}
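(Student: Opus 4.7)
The plan is to combine Proposition \ref{prop:prem} with Lemma \ref{prop:lemma2}. The whole theorem should fall out once we express the relevant sets $\R(\task - \prem)$ in terms of the rank invariants $\rank_\downarrow$ and $\rank_\uparrow$, since these are exactly what Lemma \ref{prop:lemma2} computes.

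First I would tackle part (b), which is the cleaner case. By Proposition \ref{prop:prem}, $\Mo$ is $(\ss^{\leq t}, \ss^{\leq p})$-informationally complete if and only if $\rr(\Mo)^\perp \cap \R(\ss^{\leq t} - \ss^{\leq p}) = \{0\}$. Lemma \ref{prop:lemma2}(a) identifies $\R(\ss^{\leq t} - \ss^{\leq p})$ as the set of trace-zero operators $T \in \trhs$ with $\rank_\downarrow(T) \leq t$ and $\rank_\uparrow(T) \leq p$. Since every element of $\rr(\Mo)^\perp$ is already trace-zero (as noted immediately after the definition of the annihilator, because $\id \in \rr(\Mo)$), the intersection vanishes if and only if no nonzero $T \in \rr(\Mo)^\perp$ simultaneously satisfies both rank bounds. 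Taking the contrapositive gives exactly the condition that every nonzero $T \in \rr(\Mo)^\perp$ has $\rank_\downarrow(T) \geq t+1$ or $\rank_\uparrow(T) \geq p+1$, which is (ii).

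For part (a), essentially the same reasoning applies with $\prem = \ss$. Whether $d < \infty$ or $d = \infty$, we have $\ss = \ss^{\leq d}$, so Lemma \ref{prop:lemma2}(a) may be invoked with $p = d$. The bound $\rank_\uparrow(T) \leq d$ is then automatic by the inequalities in \eqref{eq:clear}, so $\R(\ss^{\leq t} - \ss)$ reduces to the set of trace-zero $T$ with $\rank_\downarrow(T) \leq t$. Applying Proposition \ref{prop:prem} as above yields the stated equivalence.

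Since the technical work has already been carried out in Proposition \ref{prop:prem} and Lemma \ref{prop:lemma2}, I do not expect a genuine obstacle; the argument is essentially a dictionary lookup followed by a contrapositive. The only subtle point is to make sure the infinite-dimensional case of (a) is handled by invoking Lemma \ref{prop:lemma2}(a) with $p = d = \infty$ (using the conventions on $\nat_\ast$), rather than trying to treat it through $\ss^{\mathrm{fin}}$, which would correspond to a different (and strictly stronger) determination task.
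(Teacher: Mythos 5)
Your proposal is correct and follows essentially the same route as the paper: both reduce the theorem to Proposition \ref{prop:prem} combined with Lemma \ref{prop:lemma2}(a), with part (a) obtained by taking $p=d$ (including $p=d=\infty$), where the condition $\rank_\uparrow(T)\leq p$ becomes trivial.
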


\begin{proof}
These are all immediate consequences of Prop.~\ref{prop:prem} and Lemma \ref{prop:lemma2}a. For (a), note that, if $p=d$, then the condition $\rank_\uparrow(T) \leq p$ in Lemma \ref{prop:lemma2}a is trivial.
\end{proof}

If $d=\infty$, then we can also consider $(\state^{\rm fin},\state)$-informational completeness and $(\state^{\rm fin},\state^{\rm fin})$-informational completeness. 
The following theorem characterizes these two properties.

\begin{theorem}\label{prop:structure-fin}
Let $\Mo$ be an observable and $d=\infty$.
\begin{enumerate}[(a)]
\item The following conditions are equivalent:
\begin{enumerate}[(i)]
\item $\Mo$ is $(\state^{\rm fin},\state^{\rm fin})$-informationally complete.  
\item Every nonzero $T\in\rr(\Mo)^\perp$ has $\rank_\uparrow(T) = \infty$.
\end{enumerate}
\item The following conditions are equivalent:
\begin{enumerate}[(i)]
\item $\Mo$ is $(\state^{\rm fin},\state)$-informationally complete.  
\item Every nonzero $T\in\rr(\Mo)^\perp$ has $\rank_\downarrow(T) = \infty$.
\end{enumerate}
\end{enumerate}
\end{theorem}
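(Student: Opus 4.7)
The proof is immediate from the machinery already in place: Proposition \ref{prop:prem} reduces $(\task,\prem)$-informational completeness to the condition $\rr(\Mo)^\perp \cap \R(\task-\prem) = \{0\}$, and Lemma \ref{prop:lemma2} describes the sets $\R(\ss^{\rm fin}-\ss^{\rm fin})$ and $\R(\ss^{\rm fin}-\ss)$ explicitly in terms of the rank invariants $\rank_\uparrow$ and $\rank_\downarrow$. The plan is simply to combine these two ingredients, taking advantage of the fact that every $T\in\rr(\Mo)^\perp$ already satisfies $\tr{T}=0$ (as noted right after the definition of the annihilator in the text).

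For part (a), I start from the equivalence (i)$\Leftrightarrow$(iii) of Proposition \ref{prop:prem}, which gives that $\Mo$ is $(\ss^{\rm fin},\ss^{\rm fin})$-informationally complete if and only if $\rr(\Mo)^\perp \cap \R(\ss^{\rm fin}-\ss^{\rm fin}) = \{0\}$. By Lemma \ref{prop:lemma2}(b), $\R(\ss^{\rm fin}-\ss^{\rm fin}) = \{T\in\trhs \mid \tr{T}=0,\, \rank_\uparrow(T)<\infty\}$. Since the trace-zero condition is automatic for elements of $\rr(\Mo)^\perp$, the intersection reduces to $\{T\in\rr(\Mo)^\perp \mid \rank_\uparrow(T)<\infty\}$. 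This set equals $\{0\}$ exactly when every nonzero $T\in\rr(\Mo)^\perp$ satisfies $\rank_\uparrow(T)=\infty$, which is (ii).

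Part (b) proceeds identically, using Lemma \ref{prop:lemma2}(c) in place of (b): $\R(\ss^{\rm fin}-\ss) = \{T\in\trhs \mid \tr{T}=0,\, \rank_\downarrow(T)<\infty\}$, so $(\ss^{\rm fin},\ss)$-informational completeness is equivalent to the nonexistence of a nonzero $T\in\rr(\Mo)^\perp$ with $\rank_\downarrow(T)<\infty$, i.e.\ to every such $T$ having $\rank_\downarrow(T)=\infty$.

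There is no real obstacle here; the only thing worth flagging is that, as in the proof of Theorem \ref{prop:structure}, one must remember that the trace-zero constraint appearing in Lemma \ref{prop:lemma2} is already built into $\rr(\Mo)^\perp$, so it disappears from the final statement. The whole proof can therefore be written as a single short paragraph invoking Proposition \ref{prop:prem} together with parts (b) and (c) of Lemma \ref{prop:lemma2}, exactly parallel to the proof of Theorem \ref{prop:structure}.
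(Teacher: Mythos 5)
Your proof is correct and follows exactly the paper's route: the paper also derives both parts as immediate consequences of Proposition \ref{prop:prem} combined with Lemma \ref{prop:lemma2}(b),(c), with the trace-zero condition absorbed into $\rr(\Mo)^\perp$. Your write-up merely spells out the short argument that the paper leaves implicit.
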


\begin{proof}
These are all immediate consequences of Prop.~\ref{prop:prem} and Lemma \ref{prop:lemma2}b,c.
\end{proof}

With certain choices of $t$ and $p$ the conditions in Theorem~\ref{prop:structure} become simpler.
In the following we list some special cases.

Since every nonzero $T\in\trhs$ with $\tr{T}=0$ satisfies 
\begin{equation*}
2\ \rank_\uparrow(T) \geq \rank(T) \geq \rank_\uparrow(T)+1 \geq \rank_\downarrow(T)+1 \, , 
\end{equation*}
(the second inequality following from Lemma \ref{prop:pos-neg}a) we recover the following two consequences of Theorem~\ref{prop:structure}, first presented in \cite{HeMaWo13}.

\begin{corollary}\label{prop:S1S1}
Let $\Mo$ be an observable. The following conditions are equivalent:
\begin{enumerate}[(i)]
\item $\Mo$ is $(\state^1,\state^1)$-informationally complete.
\item Every nonzero $T\in\rr(\Mo)^\perp$ satisfies $\rank(T)\geq 3$.
\end{enumerate}
\end{corollary}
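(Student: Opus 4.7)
The plan is to deduce the corollary directly from Theorem \ref{prop:structure}(b) applied with $t=p=1$, combined with the elementary rank inequalities stated immediately before the corollary.

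First, observe that $\state^1 = \state^{\leq 1}$ (pure states are exactly rank-one states), so Theorem \ref{prop:structure}(b) with $t=p=1$ gives that $\Mo$ is $(\state^1,\state^1)$-informationally complete if and only if every nonzero $T\in\rr(\Mo)^\perp$ satisfies $\rank_\downarrow(T)\geq 2$ or $\rank_\uparrow(T)\geq 2$. Since by definition $\rank_\downarrow(T)\leq\rank_\uparrow(T)$, the disjunction collapses to the single condition $\rank_\uparrow(T)\geq 2$. Thus the task reduces to showing that, for nonzero $T\in\trhs$ with $\tr{T}=0$, the condition $\rank_\uparrow(T)\geq 2$ is equivalent to $\rank(T)\geq 3$.

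For the forward implication, suppose $\rank_\uparrow(T)\geq 2$. By Lemma \ref{prop:pos-neg}(a) we also have $\rank_\downarrow(T)\geq 1$, and by \eqref{eq:clear} it follows that $\rank(T) = \rank_\uparrow(T)+\rank_\downarrow(T) \geq 2+1 = 3$. For the converse, suppose $\rank(T)\geq 3$; then the displayed inequality just preceding the corollary, namely $2\,\rank_\uparrow(T) \geq \rank(T)$, gives $\rank_\uparrow(T) \geq 3/2$, and since $\rank_\uparrow(T)$ is an integer we conclude $\rank_\uparrow(T) \geq 2$.

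There is essentially no obstacle here: the work was done in Theorem \ref{prop:structure} and Lemma \ref{prop:pos-neg}(a), and the corollary is just the arithmetic translation of the $(t,p)=(1,1)$ instance of that theorem into a statement about $\rank(T)$ alone. The only minor point to be careful about is noting that the ``or'' in Theorem \ref{prop:structure}(b) becomes redundant when $t=p$, so that a single rank condition suffices.
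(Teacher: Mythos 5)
Your proof is correct and follows essentially the same route as the paper: both reduce to the $(t,p)=(1,1)$ case of Theorem~\ref{prop:structure}, collapse the disjunction to $\rank_\uparrow(T)\geq 2$, and translate this into $\rank(T)\geq 3$ via the inequalities $2\,\rank_\uparrow(T)\geq\rank(T)\geq\rank_\uparrow(T)+1$ coming from \eqref{eq:clear} and Lemma~\ref{prop:pos-neg}(a).
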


\begin{corollary}\label{prop:SpSp}
Let $\Mo$ be an observable and $t\in\nat$ such that $1\leq t\leq d$.
The following conditions are equivalent:
\begin{enumerate}[(i)]
\item $\Mo$ is $(\state^{\leq t},\state^{\leq t})$-informationally complete.  
\item Every nonzero $T\in\rr(\Mo)^\perp$ has $\rank_\uparrow(T)\geq t+1$.
\end{enumerate}
A necessary condition for these equivalent conditions is that every nonzero $T\in\rr(\Mo)^\perp$ satisfies $\rank(T)\geq t+2$, and a sufficient condition is that every nonzero $T\in\rr(\Mo)^\perp$ satisfies $\rank(T)\geq 2t+1$.
\end{corollary}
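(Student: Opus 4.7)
The plan is to derive this corollary as a direct specialization of Theorem~\ref{prop:structure}(b) and then verify the two one-sided rank bounds by elementary arithmetic on the quantities $\rank_\uparrow(T)$ and $\rank_\downarrow(T)$.

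First, for the main equivalence (i)$\Leftrightarrow$(ii), I would apply Theorem~\ref{prop:structure}(b) with $p=t$. That theorem states (i) is equivalent to: every nonzero $T\in\rr(\Mo)^\perp$ satisfies $\rank_\downarrow(T)\geq t+1$ or $\rank_\uparrow(T)\geq t+1$. By the definition $\rank_\uparrow(T)=\max(\rank_+(T),\rank_-(T))\geq\min(\rank_+(T),\rank_-(T))=\rank_\downarrow(T)$, so the first clause of the disjunction always implies the second. Hence the disjunction collapses to the single condition $\rank_\uparrow(T)\geq t+1$, which is exactly (ii).

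Next, for the stated necessary condition, I would argue as follows. Suppose (ii) holds and let $T\in\rr(\Mo)^\perp$ be nonzero. By Proposition~\ref{prop:exM}(b) we have $\tr{T}=0$, so Lemma~\ref{prop:pos-neg}(a) gives $\rank_\downarrow(T)\geq 1$. Combining with (ii) and the identity $\rank(T)=\rank_\uparrow(T)+\rank_\downarrow(T)$ from \eqref{eq:clear}, we get $\rank(T)\geq(t+1)+1=t+2$.

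Finally, for the sufficient condition, assume every nonzero $T\in\rr(\Mo)^\perp$ satisfies $\rank(T)\geq 2t+1$. Again using $\rank(T)=\rank_\uparrow(T)+\rank_\downarrow(T)$ together with $\rank_\uparrow(T)\geq\rank_\downarrow(T)$, we obtain $2\rank_\uparrow(T)\geq\rank(T)\geq 2t+1$, whence $\rank_\uparrow(T)\geq t+1$ since $\rank_\uparrow(T)$ is an integer. Thus (ii) holds. No step here appears to be a real obstacle; the whole argument is a direct unpacking of the definitions of $\rank_\uparrow$ and $\rank_\downarrow$ combined with the already-established Theorem~\ref{prop:structure}(b) and Lemma~\ref{prop:pos-neg}(a).
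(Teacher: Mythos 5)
Your proof is correct and follows essentially the same route as the paper: specialize Theorem~\ref{prop:structure}(b) to $p=t$ and collapse the disjunction via $\rank_\uparrow(T)\geq\rank_\downarrow(T)$, then obtain the necessary and sufficient rank bounds from the chain $2\,\rank_\uparrow(T)\geq\rank(T)\geq\rank_\uparrow(T)+1$, which is exactly the inequality the paper records before the corollary using Lemma~\ref{prop:pos-neg}(a). The only cosmetic point is that $\tr{T}=0$ for $T\in\rr(\Mo)^\perp$ is most directly justified by $\id\in\rr(\Mo)$ rather than by Proposition~\ref{prop:exM}(b), but this does not affect the argument.
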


\subsection{Equivalent and inequivalent properties}\label{sec:equivalent}

Obviously, the set $\state^{\leq r_1}$ contains $\state^{\leq r_2}$ whenever $r_1\geq r_2$.
It follows that a $(\state^{\leq t_1},\state^{\leq p_1})$-informationally complete observable is also $(\state^{\leq t_2},\state^{\leq p_2})$-informationally complete for all $t_2\leq t_1$ and $p_2\leq p_1$.
Physically speaking, smaller $t$ means easier task while smaller $p$ means stronger premise, hence the relation between the above properties is easy to understand. Moreover, if $d=\infty$, then $(\state^{\rm fin},\state)$-informational completeness implies $(\state^{\rm fin},\state^{\rm fin})$-informational completeness, and for the different kinds of informational completeness we have the following implications:
\begin{gather*}
(\state^{\leq t},\state)   \\
\Nearrow \qquad \Searrow  \\
\qquad (\state,\state) \Rightarrow (\state^{\rm fin},\state)   \qquad \qquad  (\state^{\leq t},\state^{\leq p}) \Rightarrow (\state^{\leq t},\state^{\leq t}) \\
\Searrow \qquad \Nearrow  \\
 (\state^{\rm fin},\state^{\rm fin}) 
\end{gather*}
for $t,p\in\nat$ with $1\leq t\leq p \leq d$.

For a fixed dimension $d<\infty$, there are $\half d (d+1)$ pairs $(t,p)$ consisting of integers with $1\leq t\leq p \leq d$.
We would thus expect to have $\half d (d+1)$ different properties of $(\state^{\leq t},\state^{\leq p})$-informational completeness.
But as we will see, for some values of $t_1,p_1$ and $t_2,p_2$ the corresponding properties are equivalent.
We will derive a complete classification of the inequivalent properties and it turns out that there are only
\[
\bigfloor{\frac{d}{2}} \left(d-\frac{1}{2}-\frac{1}{2} \bigfloor{\frac{d}{2}}\right)
\]
inequivalent forms of $(\state^{\leq t},\state^{\leq p})$-informational completeness.

\begin{figure}
\begin{center}
\includegraphics[scale=0.35]{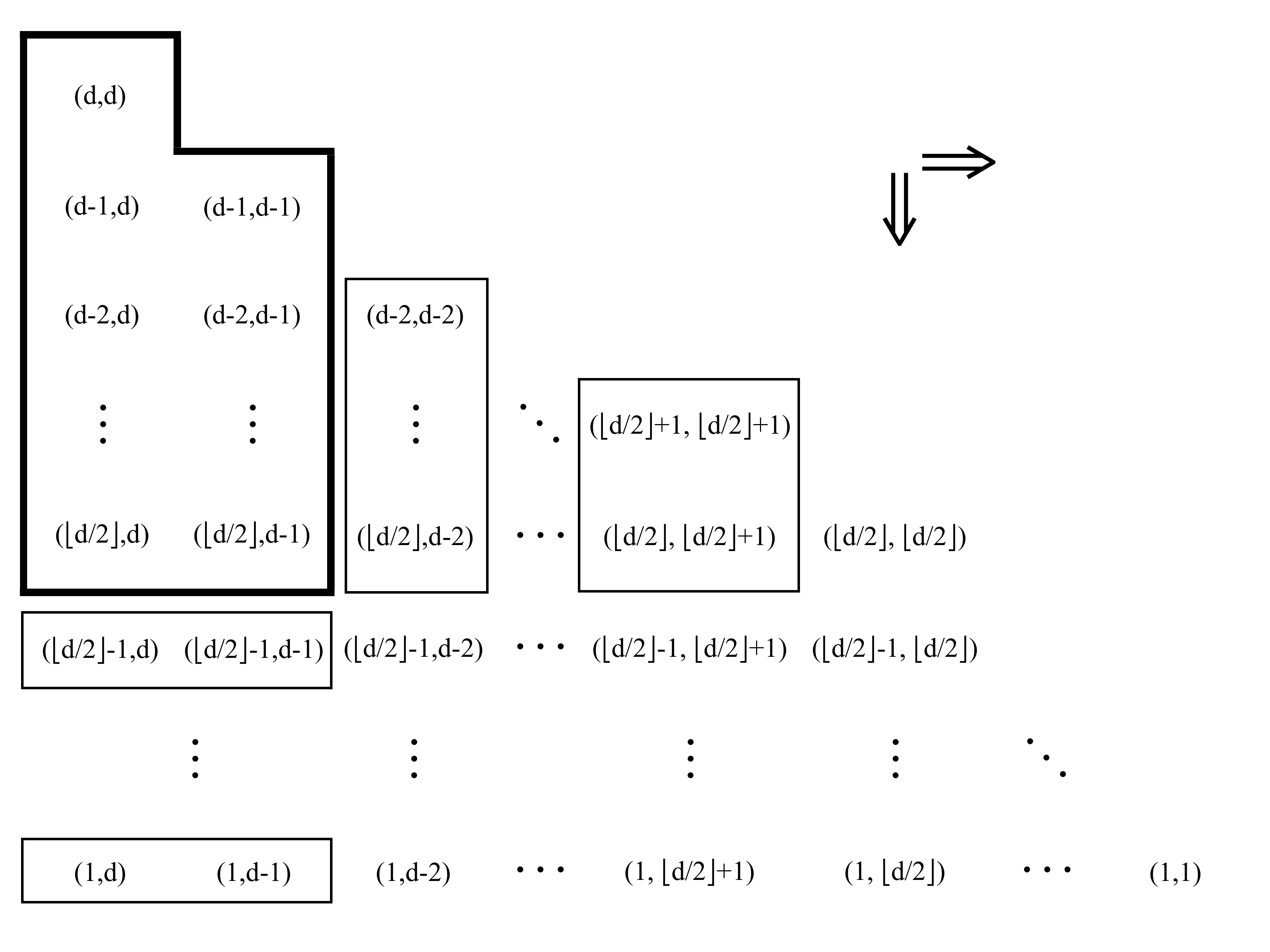}
\end{center}
\caption{In this picture $d<\infty$. 
Each $(t,p)$ represents the property of $(\state^{\leq t},\state^{\leq p})$-informational completeness.
One property implies another one if the latter can be reached from the first by moving down and right. Equivalent properties are in the same box. The box with thick boundary is the set of all properties that are equivalent to informational completeness.}
\label{fig:equivalence}
\end{figure}

\begin{proposition}\label{prop:equivalence-p}
(Equivalence of different premises.)
Let $2 \leq d < \infty$. Let $t\in\nat$ be such that $1\leq t \leq d-1$.
The following properties are equivalent:
\begin{enumerate}[(i)]
\item $(\state^{\leq t},\state^{\leq d-1})$-informational completeness.
\item $(\state^{\leq t},\state)$-informational completeness.
\end{enumerate}
\end{proposition}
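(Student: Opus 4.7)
My plan is to reduce everything to the rank-function characterizations provided by Theorem~\ref{prop:structure}. Condition (ii) implies condition (i) is immediate from the general monotonicity observation in Section~\ref{sec:equivalent}: since $\state^{\leq d-1} \subseteq \state$, every observable that distinguishes states in $\state^{\leq t}$ from all states certainly distinguishes them from the smaller class. So the work is in the converse direction (i) $\Rightarrow$ (ii), which on the surface strengthens the premise from $\state^{\leq d-1}$ to all of $\state$.

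By Theorem~\ref{prop:structure}(b) with $p = d-1$, property (i) says that every nonzero $T\in\rr(\Mo)^\perp$ satisfies $\rank_\downarrow(T)\geq t+1$ or $\rank_\uparrow(T)\geq d$. By Theorem~\ref{prop:structure}(a), property (ii) says that every nonzero $T\in\rr(\Mo)^\perp$ satisfies $\rank_\downarrow(T)\geq t+1$. So I need to argue that, given (i), the disjunct $\rank_\uparrow(T)\geq d$ cannot actually occur for any nonzero $T\in\rr(\Mo)^\perp$, so the other disjunct must always hold.

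The key input is Lemma~\ref{prop:pos-neg}(a): any nonzero $T\in\trhs$ with $\tr{T}=0$ (and every element of $\rr(\Mo)^\perp$ has trace zero) has $\rank_\downarrow(T) \geq 1$. Combined with the identity $\rank_\uparrow(T) + \rank_\downarrow(T) = \rank(T) \leq d$ from \eqref{eq:clear}, this forces $\rank_\uparrow(T) \leq d-1$. Hence the second alternative in (i) is vacuous, and (i) reduces to the statement of (ii). I do not foresee any real obstacle: the only subtle point is making sure that the hypothesis $d<\infty$ is used when applying $\rank(T)\leq d$, and that $d\geq 2$ is what guarantees the range $1\leq t \leq d-1$ is nonempty so the statement has content.
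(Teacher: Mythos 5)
Your proof is correct and follows essentially the same route as the paper: both reduce (i) and (ii) to the rank conditions of Theorem~\ref{prop:structure} and then observe that the alternative $\rank_\uparrow(T)\geq d$ is impossible because Lemma~\ref{prop:pos-neg}(a) (equivalently, $\rank_\downarrow(T)\geq 1$ together with \eqref{eq:clear}) gives $\rank_\uparrow(T)\leq d-1$ for every nonzero trace-zero $T$. No gaps; this matches the paper's argument.
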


\begin{proof}
It is clear that (ii)$\Rightarrow$(i).
To show that (i)$\Rightarrow$(ii), assume that $\Mo$ is a $(\state^{\leq t},\state^{\leq d-1})$-informationally complete observable.
By Theorem~\ref{prop:structure}b this means that every nonzero $T\in\rr(\Mo)^\perp$ has $\rank_\downarrow(T)\geq t+1$ or $\rank_\uparrow(T)\geq d$.
But the second condition cannot hold since $\rank_\uparrow(T) \leq d-1$ by Lemma \ref{prop:pos-neg}a.
Therefore, every nonzero $T\in\rr(\Mo)^\perp$ has $\rank_\downarrow(T)\geq t+1$.
Then, Theorem~\ref{prop:structure}a shows that $\Mo$ is $(\state^{\leq t},\state)$-informationally complete.
\end{proof}

\begin{proposition}\label{prop:equivalence-t}
(Equivalence of different tasks.)
Let $2 \leq d < \infty$. 
Let $t,p\in\nat$ be such that $\floor{\frac{d}{2}} \leq t\leq p\leq d$.
The following properties are equivalent:
\begin{enumerate}[(i)]
\item $(\state^{\leq \floor{\frac{d}{2}}},\state^{\leq p})$-informational completeness.
\item $(\state^{\leq t},\state^{\leq p})$-informational completeness.
\item $(\state^{\leq p},\state^{\leq p})$-informational completeness.
\end{enumerate}
\end{proposition}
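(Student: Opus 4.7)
My plan is as follows. The implications (iii)$\Rightarrow$(ii)$\Rightarrow$(i) are immediate from the monotonicity observation made in Subsection~\ref{sec:equivalent}: since $\lfloor d/2 \rfloor \leq t \leq p$, an observable that determines all states of rank at most $p$ (against the premise $\state^{\leq p}$) also determines all states of rank at most $t$, which in turn also determines all states of rank at most $\lfloor d/2 \rfloor$, always keeping the same premise $\state^{\leq p}$. So the real content is (i)$\Rightarrow$(iii).

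For this I would translate both properties into the rank conditions of Theorem~\ref{prop:structure}b and Corollary~\ref{prop:SpSp}. Explicitly: (i) says every nonzero $T\in\rr(\Mo)^\perp$ satisfies $\rank_\downarrow(T)\geq \lfloor d/2 \rfloor + 1$ or $\rank_\uparrow(T)\geq p+1$, while (iii) (using Corollary~\ref{prop:SpSp}) just says every nonzero $T\in\rr(\Mo)^\perp$ satisfies $\rank_\uparrow(T)\geq p+1$.

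To prove the implication, I would argue by contradiction: assume $\Mo$ satisfies (i) but not (iii), so there is a nonzero $T\in\rr(\Mo)^\perp$ with $\rank_\uparrow(T)\leq p$. Since (i) must still hold, this forces $\rank_\downarrow(T)\geq \lfloor d/2 \rfloor + 1$. Combining with $\rank_\uparrow(T)\geq\rank_\downarrow(T)$ (which is the definition of these quantities) and \eqref{eq:clear}, one gets
\begin{equation*}
d \geq \rank(T) = \rank_\uparrow(T)+\rank_\downarrow(T) \geq 2\bigl(\bigfloor{\tfrac{d}{2}}+1\bigr).
\end{equation*}

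The main (and only) obstacle is noticing that $2\lfloor d/2\rfloor + 2 > d$ in all cases: if $d$ is even this gives $d+2 > d$, and if $d$ is odd it gives $d+1 > d$. Either way we reach a contradiction, so (iii) must hold. The parameter $p$ plays essentially no role in this argument beyond the initial assumption $\rank_\uparrow(T)\leq p$; the whole proof hinges on the fact that $\lfloor d/2 \rfloor$ is the largest value of $t$ that still allows the "or" clause in Theorem~\ref{prop:structure}b to be satisfied nontrivially, and that as soon as $t\geq \lfloor d/2\rfloor$ the condition $\rank_\downarrow(T)\geq t+1$ becomes incompatible with $\rank_\uparrow(T)\leq p\leq d$.
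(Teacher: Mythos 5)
Your proof is correct and follows essentially the same route as the paper: both reduce (i) and (iii) to the rank conditions of Theorem~\ref{prop:structure}b and Corollary~\ref{prop:SpSp}, and both observe that $\rank_\downarrow(T)\geq\bigfloor{\tfrac{d}{2}}+1$ is impossible for a nonzero traceless selfadjoint $T$ since $2\,\rank_\downarrow(T)\leq\rank(T)\leq d$. The only cosmetic difference is that you phrase this as a contradiction while the paper states directly that $\rank_\downarrow(T)\leq\bigfloor{\tfrac{d}{2}}$ always holds, so the second disjunct must hold for every nonzero $T\in\rr(\Mo)^\perp$.
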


\begin{proof}
It is clear that (iii)$\Rightarrow$(ii)$\Rightarrow$(i).
To show that (i)$\Rightarrow$(iii), assume that $\Mo$ is a $(\state^{\leq \floor{\frac{d}{2}}},\state^{\leq p})$-informationally complete observable.
By Theorem~\ref{prop:structure}b this means that every nonzero $T\in\rr(\Mo)^\perp$ has $\rank_\downarrow(T)\geq \floor{\frac{d}{2}}+1$ or $\rank_\uparrow(T)\geq p+1$.
But the first condition cannot hold since $\rank_\downarrow(T)\leq \floor{\frac{d}{2}}$ for every {nonzero} selfadjoint operator $T$.
Therefore, every nonzero $T\in\rr(\Mo)^\perp$ has $\rank_\uparrow(T)\geq p+1$.
Then, by Cor.~\ref{prop:SpSp} $\Mo$ is $(\state^{\leq p},\state^{\leq p})$-informationally complete.
\end{proof}

\begin{example}(\emph{Dimension $2$}.)\label{ex:2}
Let $d=2$.
By Prop.~\ref{prop:equivalence-p} and Prop.~\ref{prop:equivalence-t} the property of $(\state^{\leq t},\state^{\leq p})$-informational completeness is equivalent to informational completeness for the all three possible pairs $(t,p)$: $(2,2)$, $(1,2)$, $(1,1)$.
\end{example}

\begin{proposition}\label{prop:inequivalence-p}
(Inequivalence of different premises.)
Let $3 \leq d \leq \infty$. 
Let $t,p_1\in\nat$, $p_2\in\nat_\ast$ be such that $1\leq t \leq p_1< p_2 \leq d-1$. 
The following properties are not equivalent:
\begin{enumerate}[(i)]
\item $(\state^{\leq t},\state^{\leq p_1})$-informational completeness.
\item $(\state^{\leq t},\state^{\leq p_2})$-informational completeness.
\end{enumerate}
\end{proposition}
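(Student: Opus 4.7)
My plan is to construct an explicit observable $\Mo$ that separates the two properties; the implication (ii)$\Rightarrow$(i) being trivial from $\state^{\leq p_1}\subseteq\state^{\leq p_2}$, the content is to exhibit $\Mo$ which is $(\state^{\leq t},\state^{\leq p_1})$-informationally complete but fails to be $(\state^{\leq t},\state^{\leq p_2})$-informationally complete. By Theorem~\ref{prop:structure}(b), this amounts to producing a one-dimensional subspace $\mathcal{X}\subseteq\trhs$ of trace-zero operators such that every nonzero $T\in\mathcal{X}$ satisfies $\rank_\downarrow(T)\geq t+1$ or $\rank_\uparrow(T)\geq p_1+1$, while some nonzero $T_0\in\mathcal{X}$ satisfies both $\rank_\downarrow(T_0)\leq t$ and $\rank_\uparrow(T_0)\leq p_2$.

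First, since $p_1<p_2\leq d-1$, we have $p_1+2\leq d$, so we may pick an orthonormal system $\{e_1,\ldots,e_{p_1+2}\}$ in $\hh$ and set
\begin{equation*}
T_0 = \sum_{j=1}^{p_1+1}\kb{e_j}{e_j} - (p_1+1)\kb{e_{p_1+2}}{e_{p_1+2}}.
\end{equation*}
This operator is selfadjoint, trace-class, satisfies $\tr{T_0}=0$, and by inspection of its spectrum has $\rank_+(T_0)=p_1+1$ and $\rank_-(T_0)=1$, hence $\rank_\uparrow(T_0)=p_1+1$ and $\rank_\downarrow(T_0)=1$. Because $1\leq t$ and $p_1+1\leq p_2$, this is exactly the ``counterexample'' we want to feed into Theorem~\ref{prop:structure}(b).

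Next, I let $\mathcal{X}=\R T_0$, which is a closed real subspace of $\trhs$ consisting of trace-zero operators. By Proposition~\ref{prop:exM}(b) there exists an observable $\Mo$ with $\rr(\Mo)^\perp=\mathcal{X}$. Every nonzero $T\in\rr(\Mo)^\perp$ is of the form $\lambda T_0$ with $\lambda\neq 0$, and $\rank_\uparrow$ and $\rank_\downarrow$ are invariant under scalar multiplication by nonzero reals; in particular $\rank_\uparrow(T)=p_1+1$, so the condition in Theorem~\ref{prop:structure}(b) for $(\state^{\leq t},\state^{\leq p_1})$-informational completeness is met. On the other hand, $T_0$ itself witnesses the failure of the corresponding condition for $(\state^{\leq t},\state^{\leq p_2})$-informational completeness, since $\rank_\downarrow(T_0)=1\leq t$ and $\rank_\uparrow(T_0)=p_1+1\leq p_2$.

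There is no real obstacle here; the only delicate point is the arithmetic check that the hypotheses $1\leq t\leq p_1<p_2\leq d-1$ allow both rank targets $\rank_\uparrow(T_0)=p_1+1$ and $\rank_\downarrow(T_0)=1$ to be realized simultaneously in dimension $d$, which is precisely why the statement requires $p_2\leq d-1$ (forcing $p_1+2\leq d$) rather than $p_2\leq d$. The same construction works verbatim when $d=\infty$, taking the orthonormal system inside any separable infinite-dimensional $\hh$.
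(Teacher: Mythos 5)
Your proof is correct and follows essentially the same route as the paper's: the same operator with a $(p_1+1)$-dimensional positive part and one-dimensional negative part (up to scaling), Prop.~\ref{prop:exM} to realize $\R T_0$ as $\rr(\Mo)^\perp$, and Theorem~\ref{prop:structure} to read off the two properties. The only nuance is that when $p_2=d=\infty$ the characterization of $(\state^{\leq t},\state^{\leq p_2})$-informational completeness is given by Theorem~\ref{prop:structure}(a) rather than (b) (which assumes $p\in\nat$), but your witness $T_0$ with $\rank_\downarrow(T_0)=1\leq t$ violates that condition as well, so the conclusion is unaffected.
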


\begin{proof}
Fix an orthonormal basis $\{\psi_j\}_{j=1}^d$ and define an operator $T$ by
$$
T = \frac{1}{p_1+1}\sum_{j=1}^{p_1+1}\kb{\psi_j}{\psi_j} - \kb{\psi_{p_1+2}}{\psi_{p_1+2}} \, .
$$
It follows from $p_1 < p_2 \leq d-1$ that $p_1\leq d-2$, hence, as we are also assuming $p_1\in\nat$, this definition makes sense.
Since $T^\ast=T$ and $\tr{T}=0$, we conclude from Prop. \ref{prop:exM} that there exists an observable $\Mo$ such that $\rr(\Mo)^\perp = \R T$. 
As $\rank_\uparrow(\lam T)=p_1+1$ and $\rank_\downarrow(\lam T)=1$ for every $\lam\neq 0$, it follows from Theorem~\ref{prop:structure}b that $\Mo$ is $(\state^{\leq t},\state^{\leq p_1})$-informationally complete, but not $(\state^{\leq t},\state^{\leq p_2})$-informationally complete by Theorem~\ref{prop:structure}b (if $p_2\in\nat$) or Theorem~\ref{prop:structure}a (if $p_2 = d = \infty$).
\end{proof}

\begin{example}(\emph{Dimension $3$}.)\label{ex:3}
Let $d=3$.
Using Prop.~\ref{prop:equivalence-p} and Prop.~\ref{prop:equivalence-t} we see that the property of $(\state^{\leq t},\state^{\leq p})$-informational completeness is equivalent to informational completeness for five choices of $(t,p)$: $(3,3)$, $(2,3)$, $(1,3)$, $(2,2)$ and $(1,2)$.
The remaining property, namely  $(\state^{1},\state^{1})$-informational completeness, is not equivalent to $(\state^{1},\state^{\leq 2})$-informational completeness (and hence not to any other) by Prop.~\ref{prop:inequivalence-p}.
\end{example}

\begin{proposition}\label{prop:inequivalence-t}
(Inequivalence of different tasks.)
Let $4 \leq d \leq \infty$.
Let $t_1,t_2\in\nat$ and $p\in\nat_\ast$ such that $1\leq t_1<t_2 \leq p\leq d$ and $t_2 \leq \floor{\frac{d}{2}}$.
The following properties are not equivalent:
\begin{enumerate}[(i)]
\item $(\state^{\leq t_1},\state^{\leq p})$-informational completeness.
\item $(\state^{\leq t_2},\state^{\leq p})$-informational completeness.
\end{enumerate}
\end{proposition}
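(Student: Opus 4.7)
The implication (ii)$\Rightarrow$(i) is immediate since $\state^{\leq t_1}\subseteq\state^{\leq t_2}$, so I only need to exhibit an observable $\Mo$ that is $(\state^{\leq t_1},\state^{\leq p})$-informationally complete but fails to be $(\state^{\leq t_2},\state^{\leq p})$-informationally complete. The natural strategy, parallel to the proof of Prop.~\ref{prop:inequivalence-p}, is to construct a single selfadjoint trace-class operator $T$ of trace zero with specific rank characteristics, invoke Prop.~\ref{prop:exM}b to obtain an observable $\Mo$ with $\rr(\Mo)^\perp=\R T$, and then read off both properties from Theorem~\ref{prop:structure}.

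Translating the two desired properties through Theorem~\ref{prop:structure}b (or Theorem~\ref{prop:structure}a when $p=d=\infty$), the conditions I need on every nonzero scalar multiple $\lambda T$ (equivalently, on $T$ itself) are:
\begin{equation*}
\rank_\downarrow(T)\geq t_1+1 \quad\text{(gives (i))},\qquad \rank_\downarrow(T)\leq t_2 \text{ and } \rank_\uparrow(T)\leq p \quad\text{(defeats (ii))}.
\end{equation*}
These are consistent because $t_1+1\leq t_2$ and $t_2\leq p$. The obvious choice is to force $\rank_+(T)=\rank_-(T)=t_2$: fixing an orthonormal basis $\{\psi_j\}_{j=1}^{d}$, set
\begin{equation*}
T=\sum_{j=1}^{t_2}\kb{\psi_j}{\psi_j}-\sum_{j=t_2+1}^{2t_2}\kb{\psi_j}{\psi_j}.
\end{equation*}
The hypothesis $t_2\leq\lfloor d/2\rfloor$ ensures $2t_2\leq d$, so this operator is well-defined, and it satisfies $T^\ast=T$, $\tr{T}=0$, $\rank_\uparrow(T)=\rank_\downarrow(T)=t_2$.

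I then apply Prop.~\ref{prop:exM}b to the closed one-dimensional subspace $\mathcal{X}=\R T\subseteq\trhs$ (whose trace annihilation hypothesis holds because $\tr{T}=0$) to obtain an observable $\Mo$ with $\rr(\Mo)^\perp=\R T$. Every nonzero element of $\R T$ has $\rank_\downarrow=\rank_\uparrow=t_2$, so by Theorem~\ref{prop:structure}b (respectively Theorem~\ref{prop:structure}a if $p=\infty$) the observable $\Mo$ is $(\state^{\leq t_1},\state^{\leq p})$-informationally complete since $t_2\geq t_1+1$, but it is not $(\state^{\leq t_2},\state^{\leq p})$-informationally complete because $\rank_\downarrow(T)=t_2\not\geq t_2+1$ and $\rank_\uparrow(T)=t_2\leq p<p+1$ (with the convention $p+1=\infty$ when $p=\infty$). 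No step poses a real obstacle; the only care needed is in splitting the cases $p<\infty$ and $p=\infty$ when quoting Theorem~\ref{prop:structure}, and verifying the inequality $2t_2\leq d$ from the hypothesis on $t_2$.
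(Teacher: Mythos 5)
Your proof is correct and follows essentially the same route as the paper: construct a single trace-zero selfadjoint operator with equal positive and negative ranks, use Prop.~\ref{prop:exM}b to get an observable whose annihilator is its real span, and read off the two properties from Theorem~\ref{prop:structure}. The only (immaterial) difference is that you make both ranks equal to $t_2$ while the paper makes them equal to $t_1+1$; either value in the admissible window works, and your verification of $2t_2\leq d$ from $t_2\leq\floor{d/2}$ and the case split at $p=d=\infty$ matches the paper's argument.
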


\begin{proof}
Fix an orthonormal basis $\{\psi_j\}_{j=1}^d$ and define an operator $T$ by
$$
T=\frac{1}{t_1+1} \sum_{j=1}^{t_1+1}\kb{\psi_j}{\psi_j} - \frac{1}{t_1+1} \sum_{j=t_1+2}^{2t_1+2}\kb{\psi_j}{\psi_j} \, .
$$
It follows from $t_1 < t_2 \leq \floor{\frac{d}{2}}$ that $t_1\leq \floor{\frac{d}{2}}-1$, hence $2t_1+2\leq d$ and, as we are also assuming $t_1\in\nat$, this definition makes sense.
Since $T^\ast=T$ and $\tr{T}=0$, we conclude from Prop.~\ref{prop:exM} that there exists an observable $\Mo$ such that $\rr(\Mo)^\perp = \R T$. 
We have $\rank_\downarrow(\lam T)=\rank_\uparrow(\lam T)=t_1+1$ for every $\lam\neq 0$.
By Theorem~\ref{prop:structure}b (if $p\in\nat$) or Theorem~\ref{prop:structure}a (if $p = d = \infty$), $\Mo$ is $(\state^{\leq t_1},\state^{\leq p})$-informationally complete but not $(\state^{\leq t_2},\state^{\leq p})$-informationally complete.
\end{proof}

\begin{example}(\emph{Dimension $d=4$}.)\label{ex:4}
Let $d=4$.
Using Prop. \ref{prop:equivalence-p} and Prop.~\ref{prop:equivalence-t} we see that the property of $(\state^{\leq t},\state^{\leq p})$-informational completeness is equivalent to informational completeness for five choices of $(t,p)$: $(4,4)$, $(3,4)$, $(2,4)$, $(3,3)$ and $(2,3)$.
We also see that the properties corresponding to $(1,4)$ and $(1,3)$ are equivalent but inequivalent to informational completeness.
By Prop.~\ref{prop:inequivalence-p} and Prop.~\ref{prop:inequivalence-t}, the remaining properties corresponding to $(2,2)$, $(1,2)$ and $(1,1)$ are not equivalent to any other choices of $(t,p)$.
\end{example}

A moment's thought shows that Props.~\ref{prop:equivalence-p} - \ref{prop:inequivalence-t} give a complete classification of the $(\state^{\leq t},\state^{\leq p})$-informational completeness properties into equivalent and inequivalent collections in all finite dimensions.
The classification for $4 \leq d < \infty$ is summarized in Fig.~\ref{fig:equivalence}. 
One of the most interesting consequences of this classification is the following.

\begin{corollary}
(Equivalence to informational completeness.)
Let $2 \leq d < \infty$.
For integers $1\leq t \leq p \leq d$, $(\state^{\leq t},\state^{\leq p})$-informational completeness is equivalent to informational completeness if and only if $p\geq d-1$ and $t\geq \floor{\frac{d}{2}}$.
\end{corollary}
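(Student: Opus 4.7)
The plan is to derive both directions of the equivalence by directly assembling Props.~\ref{prop:equivalence-p}--\ref{prop:inequivalence-t}; no new construction of observables is needed beyond those already supplied in the four propositions.

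For the sufficient direction, I assume $p\geq d-1$ and $t\geq\floor{\frac{d}{2}}$. Since informational completeness always implies $(\state^{\leq t},\state^{\leq p})$-informational completeness, only the converse requires work, and I would split on the value of $p$. If $p=d$, then $(\state^{\leq t},\state^{\leq p})=(\state^{\leq t},\state)$ and Prop.~\ref{prop:equivalence-t} (whose hypothesis $\floor{\frac{d}{2}}\leq t\leq p\leq d$ is exactly what was assumed) immediately yields $(\state^{\leq t},\state^{\leq d})\Leftrightarrow(\state^{\leq d},\state^{\leq d})=(\state,\state)$, i.e.\ informational completeness. If instead $p=d-1$, I first invoke Prop.~\ref{prop:equivalence-p} to rewrite $(\state^{\leq t},\state^{\leq d-1})$-IC as $(\state^{\leq t},\state)$-IC and then reduce to the previous case.

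For the necessary direction, I argue contrapositively. Assume $(\state^{\leq t},\state^{\leq p})$-informational completeness is equivalent to informational completeness. If $p\leq d-2$, Prop.~\ref{prop:inequivalence-p} applied with $p_1=p$ and $p_2=d-1$ produces an observable that is $(\state^{\leq t},\state^{\leq p})$-IC but fails $(\state^{\leq t},\state^{\leq d-1})$-IC; since any informationally complete observable automatically satisfies the latter, the equivalence is contradicted, forcing $p\geq d-1$. With $p\geq d-1\geq\floor{\frac{d}{2}}$ now available, if in addition $t\leq\floor{\frac{d}{2}}-1$, then Prop.~\ref{prop:inequivalence-t} with $t_1=t$, $t_2=\floor{\frac{d}{2}}$ and this same $p$ supplies an observable that is $(\state^{\leq t},\state^{\leq p})$-IC but not $(\state^{\leq\floor{d/2}},\state^{\leq p})$-IC, again contradicting the assumed equivalence; hence $t\geq\floor{\frac{d}{2}}$.

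The only remaining point is to verify that the parameter hypotheses in the cited propositions are met in the relevant ranges and to sanity-check the low-dimensional corners. For $d=2$ both conditions reduce to $t\geq 1$ and $p\geq 1$ and the statement is vacuously true, in agreement with Example~\ref{ex:2}; for $d=3$, the condition $t\geq\floor{\frac{3}{2}}=1$ is automatic so Prop.~\ref{prop:inequivalence-t} is not used, and only the premise bound $p\geq 2$ is nontrivial, which is covered by the case $p\leq d-2$ above. The main obstacle is really just this bookkeeping of parameter ranges; otherwise the proof is a direct assembly of the four preceding propositions.
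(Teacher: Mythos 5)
Your proposal is correct and follows essentially the same route as the paper, which presents this corollary as a direct consequence of Props.~\ref{prop:equivalence-p}--\ref{prop:inequivalence-t} (the ``complete classification'' summarized in Fig.~\ref{fig:equivalence}); your case split and the parameter checks (in particular that $t\leq\floor{d/2}-1$ forces $d\geq 4$ and $p\geq d-1\geq\floor{d/2}$, so the hypotheses of Prop.~\ref{prop:inequivalence-t} are automatic) are exactly the bookkeeping the paper leaves to the reader. The only cosmetic quibble is that for $d=2$ the sufficiency direction is not vacuous but is still covered by your general argument, since Props.~\ref{prop:equivalence-p} and \ref{prop:equivalence-t} hold for all $d\geq 2$.
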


This result is implying, in particular, that if there is no premise and the task is to determine all states with rank less or equal to $\floor{\frac{d}{2}}$, then we actually need an informationally complete observable.

When $d=\infty$, the infinite dimensional state space has also proper subsets that do not exist in the finite dimensional case, hence leading to new kinds of tasks and premises.
For example, we can have a premise that the system has finite rank but we do not know any upper bound for its rank, i.e., $\pp = \ss^{\rm fin}$.
Similarly, we could be interested in the task of determining all states with finite rank, i.e., $\tt = \ss^{\rm fin}$, when $\pp = \ss$. We already characterized $(\ss^{\rm fin},\ss^{\rm fin})$- and $(\ss^{\rm fin},\ss)$-informational completeness in Theorem~\ref{prop:structure-fin}.
The next proposition shows that these two properties are not the same.

\begin{proposition}\label{prop:Sf}
Let $d = \infty$.
The following properties are all inequivalent:
\begin{enumerate}[(i)]
\item $(\state^{\rm fin},\state^{\rm fin})$-informational completeness.
\item $(\state^{\rm fin},\state)$-informational completeness.
\item informational completeness.
\end{enumerate}
\end{proposition}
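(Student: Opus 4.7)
The plan is to verify the three properties are pairwise inequivalent by producing concrete counterexamples. Since the implications (iii)$\Rightarrow$(ii)$\Rightarrow$(i) are trivial (the task sets shrink), it suffices to exhibit one observable that is $(\ss^{\rm fin},\ss^{\rm fin})$-IC but not $(\ss^{\rm fin},\ss)$-IC, and one that is $(\ss^{\rm fin},\ss)$-IC but not IC. In both cases I will pick a single selfadjoint trace-class operator $T$ with $\tr{T}=0$, invoke Proposition \ref{prop:exM}(b) to obtain an observable $\Mo$ with $\rr(\Mo)^\perp = \R T$, and then read off the informational completeness properties from Theorem \ref{prop:structure-fin} and Corollary \ref{prop:well-known}. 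The only point requiring attention is to tune the spectral properties of $T$ so that the correct combination of $\rank_\uparrow$ and $\rank_\downarrow$ holds.

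For (i) $\not\Rightarrow$ (ii), fix an orthonormal basis $\{\psi_j\}_{j=0}^\infty$ of $\hh$ and set
\begin{equation*}
T_1 = \sum_{j=1}^\infty 2^{-j}\,\kb{\psi_j}{\psi_j} - \kb{\psi_0}{\psi_0}\,.
\end{equation*}
Then $T_1 \in\trhs$, $\tr{T_1}=0$, and for every $\lambda\neq 0$ we have $\rank_\uparrow(\lambda T_1)=\infty$ and $\rank_\downarrow(\lambda T_1)=1$. By Proposition \ref{prop:exM}(b) there is an observable $\Mo_1$ with $\rr(\Mo_1)^\perp = \R T_1$. Theorem \ref{prop:structure-fin}(a) then gives $(\ss^{\rm fin},\ss^{\rm fin})$-informational completeness, while Theorem \ref{prop:structure-fin}(b) rules out $(\ss^{\rm fin},\ss)$-informational completeness.

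For (ii) $\not\Rightarrow$ (iii), let
\begin{equation*}
T_2 = \sum_{j=1}^\infty j^{-2}\bigl(\kb{\psi_{2j-1}}{\psi_{2j-1}} - \kb{\psi_{2j}}{\psi_{2j}}\bigr)\,,
\end{equation*}
which is a trace-class selfadjoint operator with $\tr{T_2}=0$ and $\rank_+(\lambda T_2)=\rank_-(\lambda T_2)=\infty$ for every $\lambda\neq 0$, so $\rank_\downarrow(\lambda T_2)=\infty$. Again Proposition \ref{prop:exM}(b) gives an observable $\Mo_2$ with $\rr(\Mo_2)^\perp = \R T_2 \neq \{0\}$. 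By Theorem \ref{prop:structure-fin}(b), $\Mo_2$ is $(\ss^{\rm fin},\ss)$-informationally complete, but by Corollary \ref{prop:well-known} it is not informationally complete.

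Combining the two examples with the trivial chain (iii)$\Rightarrow$(ii)$\Rightarrow$(i) shows that no two of the three properties coincide, proving the proposition. The only nontrivial ingredient is the observation that $\trhs$ is rich enough to contain selfadjoint trace-zero operators realizing all four combinations of finite/infinite $\rank_\uparrow$ and $\rank_\downarrow$, which is immediate from the spectral theorem and the absolute convergence of the eigenvalue series; no real obstacle arises.
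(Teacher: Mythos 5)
Your proof is correct and follows essentially the same route as the paper: both reduce the question to exhibiting trace-zero selfadjoint $T_1$ with $\rank_-(T_1)<\infty=\rank_+(T_1)$ and $T_2$ with $\rank_\pm(T_2)=\infty$, invoke Proposition~\ref{prop:exM} to get observables with annihilator $\R T_i$, and conclude via Theorem~\ref{prop:structure-fin} and Corollary~\ref{prop:well-known}; you merely write out explicit operators where the paper asserts their obvious existence.
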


\begin{proof}
Clearly, (iii) $\Rightarrow$ (ii) $\Rightarrow$ (i). Hence we need to show that (i) $\nRightarrow$ (ii) and (ii) $\nRightarrow$ (iii). There clearly exist  $T_1,T_2\in \trhs$ such that $\tr{T_i}=0$, $\rank_+(T_1)<\infty$ and $\rank_-(T_1)=\rank_\pm(T_2)=\infty$. By Prop.~\ref{prop:exM}, there exists two observables $\Mo_1$ and $\Mo_2$ such that $\rr(\Mo_i)^\perp = \R T_i$. As in the proofs of Props.~\ref{prop:inequivalence-p} and \ref{prop:inequivalence-t}, we have $\rank_\downarrow(T_1') < \infty$ and $\rank_\uparrow(T_1') = \infty$ for all nonzero $T_1'\in\rr(\Mo_1)^\perp$, and $\rank_\downarrow(T_2') = \rank_\uparrow(T_2') = \infty$ for all nonzero $T_2'\in\rr(\Mo_2)^\perp$. Thus, by Theorem~\ref{prop:structure-fin} the observable $\Mo_1$ is $(\ss^{\rm fin},\ss^{\rm fin})$-informationally complete but not $(\ss^{\rm fin},\ss)$-informationally complete. Similarly, by Theorem~\ref{prop:structure-fin} and Cor.~\ref{prop:well-known} the observable $\Mo_2$ is $(\ss^{\rm fin},\ss)$-informationally complete but not informationally complete.
\end{proof}

The content of Prop. \ref{prop:Sf} is, essentially, that knowing that the unknown state has finite rank is useful information for state determination.

\section{Minimal number of outcomes}\label{sec:min}

In this section we assume that $d<\infty$ and $\# \Omega < \infty$.

\subsection{General formulation of the problem}

By a \emph{minimal $(\task,\prem)$-informationally complete observable} we mean a $(\task,\prem)$-informationally complete observable with minimal number of outcomes. More precisely, an observable $\Mo$ with an outcome space $\Omega$ is minimal $(\task,\prem)$-informationally complete if any other $(\task,\prem)$-informationally complete observable $\Mo'$ with an outcome space $\Omega'$ satisfies $\#\Omega \leq \#\Omega'$. 

Since $\hi$ is finite dimensional, the real vector spaces $\lhs$ and $\trhs$ are the same and
\begin{equation}\label{eq:dimsum}
\dim \rr + \dim \rr^\perp = \dim\lhs = d^2 \, .
\end{equation}
We then see that a $(\task,\prem)$-informationally complete observable $\Mo$ with $n$ outcomes exists if and only if there is a $(d^2 - n)$-dimensional subspace $\xx \subseteq\trhs$ satisfying
\begin{enumerate}[(1)]
\item $\tr{T} = 0$ for all $T\in\xx$;
\item $\xx \cap \ \R(\task-\prem)  = \{0\}$.
\end{enumerate}
Indeed, in this case by Prop.~\ref{prop:exM} we can find an observable $\Mo$ with $\rr(\Mo)^\perp = \xx$ and having $d^2 - \dim\rr(\Mo)^\perp = n$ outcomes. Such an observable is $(\task,\prem)$-informationally complete by Prop.~\ref{prop:prem}. We thus conclude that seeking a minimal $(\task,\prem)$-informationally complete observable is equivalent to looking for a real subspace $\xx\subseteq\trhs$ satisfying (1) and (2), and having maximal dimension among all subspaces of $\trhs$ with the properties (1) and (2).
Once such a maximal subspace $\xx$ is found, then the minimal number of outcomes for a $(\task,\prem)$-informationally complete observable is $d^2 - \dim\xx$.

\subsection{Review of some known bounds}

If $d=2$, then all $(\state^{\leq t},\state^{\leq p})$-informational completeness properties are equivalent (see Example \ref{ex:2}), hence the minimal number of outcomes is $d^2=4$ in all of them.

If $d=3$, then only $(\state^{1},\state^{1})$-informationally completeness is inequivalent to informational completeness (see Example \ref{ex:3}).
In the latter case the minimal number is $d^2=9$, while in the first case a simple argument shows that the minimal number of outcomes is $8$; see Prop. 5 in \cite{HeMaWo13}.

Let us then assume $4 \leq  d<\infty$ and recall some bounds for the minimal number of $(\state^{\leq t},\state^{\leq t})$- and $(\state^{\leq t},\state)$-informationally complete observables.
In these cases we need to find subspaces $\xx$ such that every nonzero $T\in\xx$ satisfies $\tr{T} = 0$ and $\rank_\uparrow(T)\geq t+1$ (Cor.~\ref{prop:SpSp}) or $\rank_\downarrow(T)\geq t+1$ (Theorem~\ref{prop:structure}a), respectively.
To find a good upper bound for the minimal number of outcomes, we need to find as large $\xx$ as possible.
A useful method for constructing these kind of subspaces was presented in \cite{CuMoWi08}. 
Using this method the following upper bounds (a) and (b) were proved in \cite{HeMaWo13} and \cite{ChDaJiJoKrShZe}, respectively.

\begin{proposition}\label{prop:upper}
Let $1\leq t < d/2$.
There exists 
\begin{enumerate}[(a)]
\item $(\state^{\leq t},\state^{\leq t})$-informationally complete observable with $4t(d-t)$ outcomes. 
\item $(\state^{\leq t},\state)$-informationally complete observable with $4t(d-t)+d-2t$ outcomes. 
\end{enumerate}
\end{proposition}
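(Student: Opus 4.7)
The plan is to apply the framework developed in Section~\ref{sec:min}: an observable $\Mo$ with $n$ outcomes and the prescribed informational completeness property exists if and only if there is a real subspace $\xx \subseteq \trhs$ of dimension $d^2 - n$ whose elements have trace zero and satisfy the appropriate rank condition on nonzero elements. Thus I would reduce both (a) and (b) to constructing trace-zero subspaces of prescribed dimensions and prescribed rank behavior, and then invoke Prop.~\ref{prop:exM}(b) to obtain the corresponding observable with $d^2 - \dim\xx$ outcomes.

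For part (a), by the sufficient condition in Cor.~\ref{prop:SpSp}, it is enough to exhibit a real subspace $\xx \subseteq \trhs$ of dimension $(d-2t)^2$, consisting of trace-zero selfadjoint operators, with the property that every nonzero $T \in \xx$ has $\rank(T) \geq 2t+1$. Once such an $\xx$ is produced, the resulting observable has exactly $d^2 - (d-2t)^2 = 4t(d-t)$ outcomes. I would obtain $\xx$ by invoking the low-rank-avoiding subspace construction of \cite{CuMoWi08}, in which a subspace of selfadjoint matrices of exactly this dimension is built so that every nonzero element has rank at least $2t+1$; the trace-zero restriction, if not already automatic from the construction, can be absorbed by recentering each generator against $\id/d$ without losing dimension, since the rank condition is unaffected.

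For part (b), Thm.~\ref{prop:structure}(a) requires the stronger property that every nonzero $T\in\xx$ satisfies $\rank_\downarrow(T) \geq t+1$, i.e., both the numbers of strictly positive and strictly negative eigenvalues of $T$ are at least $t+1$. One expects this stronger balancing to cost a few dimensions relative to (a), and indeed the target is $\dim\xx = d^2 - 4t(d-t) - (d-2t) = (d-2t)(d-2t-1)$. The strategy would be to start from the subspace produced in (a) and cut it down by $d-2t$ further linear conditions that rule out configurations where the negative spectrum of $T$ has fewer than $t+1$ eigenvalues (the positive case being symmetric). This refined construction is carried out in \cite{ChDaJiJoKrShZe}, so I would simply cite the subspace produced there and translate it into an observable via Prop.~\ref{prop:exM}(b).

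The main obstacle is genuinely the combinatorial/algebro-geometric step of building the subspaces with the precise rank lower bounds; dimension counting alone (the variety of rank-$\leq r$ Hermitian operators has real dimension $r(2d-r)$) predicts the right numerical targets but does not produce an explicit $\xx$, and generic-position arguments over $\R$ are not automatic. Since \cite{CuMoWi08} and \cite{ChDaJiJoKrShZe} carry out the required constructions in detail, the role of the present proposition is to rephrase their existence results in the language of $(\task,\prem)$-informational completeness through Cor.~\ref{prop:SpSp} and Thm.~\ref{prop:structure}(a), and then read off the outcome counts $4t(d-t)$ and $4t(d-t)+d-2t$.
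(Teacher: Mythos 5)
Your argument is essentially the paper's own: Proposition~\ref{prop:upper} is presented there as a review of known bounds, with the reduction through Prop.~\ref{prop:exM}, Cor.~\ref{prop:SpSp} and Theorem~\ref{prop:structure}(a) and the actual subspace constructions delegated to \cite{CuMoWi08}, \cite{HeMaWo13} (part (a)) and \cite{ChDaJiJoKrShZe} (part (b)), so citing those constructions is exactly what the paper does. One small correction: your claim that tracelessness can be restored by replacing $A$ with $A-\tfrac{\tr{A}}{d}\id$ ``without affecting the rank condition'' is false, since the rank of $T-\tfrac{\tr{T}}{d}\id$ equals $d$ minus the multiplicity of $\tr{T}/d$ as an eigenvalue of $T$ and can drop well below $\rank(T)$; this patch is not needed, though, because the subspaces built in \cite{HeMaWo13} and \cite{ChDaJiJoKrShZe} consist of traceless selfadjoint operators by construction.
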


In the case of $(\state^{\leq t},\state^{\leq t})$-informationally complete observables, it is possible to obtain lower bounds from the known non-embedding results for Grassmannian manifolds \cite{HeMaWo13}.
In some cases the obtained lower bounds agree or are very close with the upper bounds written in Prop. \ref{prop:upper}a.
In particular, it was proved in \cite{HeMaWo13} that  in the case of $(\state^{1},\state^{1})$-informational completeness, the minimal number of outcomes is not a linear function of $d$ but differs from the upper bound $4d-4$ at most $2 \log_2(d)$.
Also a slightly better upper bound was derived, and these results give the exact answer for many $d$. 
For instance, for the dimensions between $2$ and $100$, the results of  \cite{HeMaWo13} give the exact minimal number in $45$ cases.

In the case of $(\state^{1},\state)$-informational completeness, the upper bound for the minimal number of outcomes is $5d-6$ \cite{ChDaJiJoKrShZe}.
Obviously, the known lower bound for minimal $(\state^{1},\state^{1})$-informational completeness is also a lower bound for minimal $(\state^{1},\state)$-informational completeness. 
We are not aware of any better lower bound.
In the following subsection we prove that the minimal number of outcomes for $d=4$ is $11$.
This means that $5d-6$ is generally just an upper bound for the minimal number of outcomes, not the exact answer.
Our result for $d=4$ also implies that, as in the case of $(\state^{1},\state^{1})$-informational completeness, the minimal number is not a linear function of $d$.

\subsection{Dimension 4}\label{sec:4}

In this subsection we concentrate on minimal observables in dimension $4$.
A minimal informationally complete observable has $d^2=16$ outcomes.
Further, it was shown in \cite{HeMaWo13} that a minimal $(\state^1,\state^1)$-informationally complete observable has $10$ outcomes.
In Prop. \ref{prop:d=4/2} below we give the minimal numbers for the remaining three inequivalent properties (see Example \ref{ex:4}). 
These results are summarized in Fig. \ref{fig:equivalence-4}.
Before deriving the minimal numbers we characterize these properties in convenient forms.

\begin{figure}
\begin{center}
\includegraphics[scale=0.3]{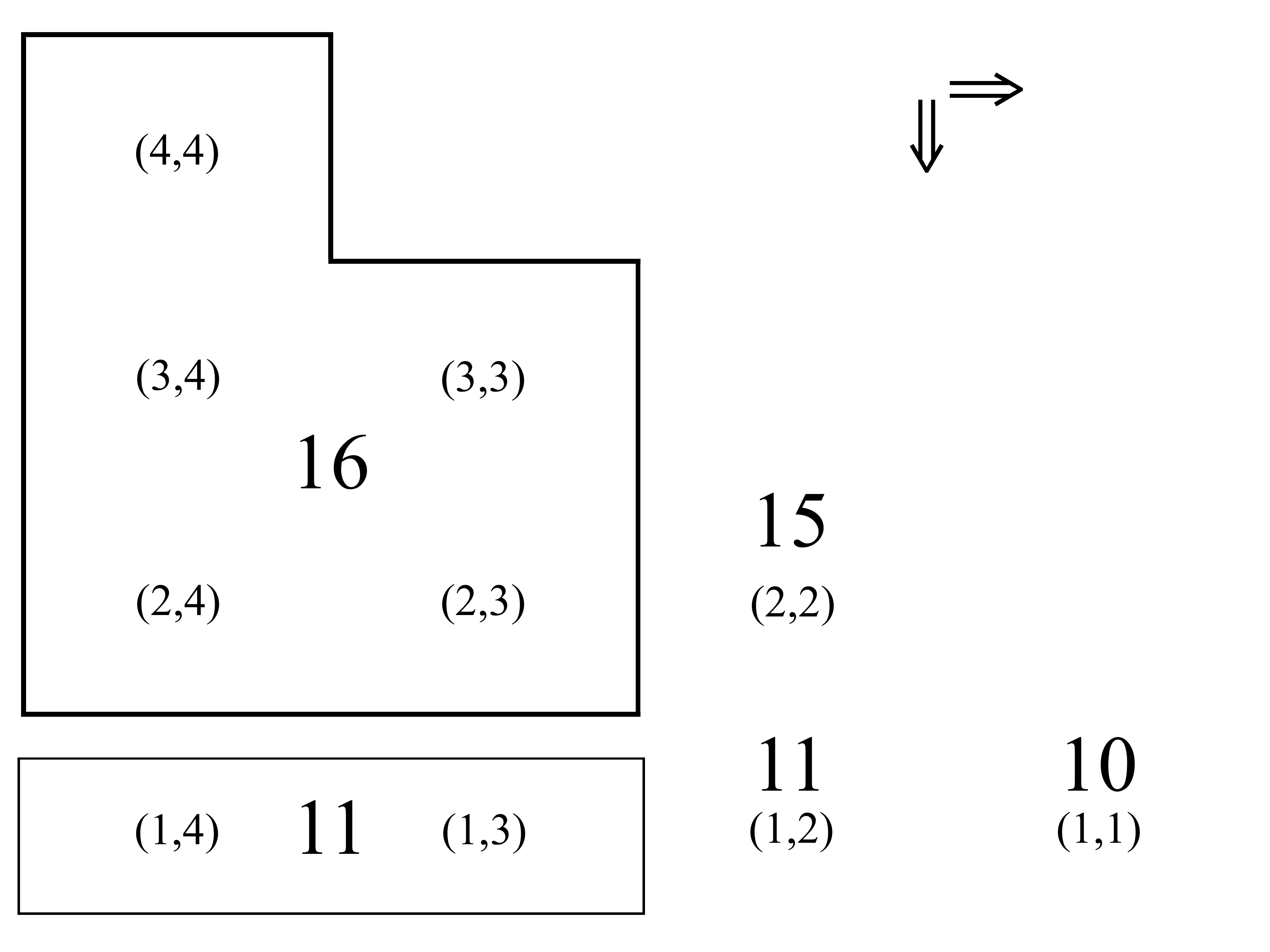}
\end{center}
\caption{In this picture $d=4$.
Each $(t,p)$ represents the property of $(\state^{\leq t},\state^{\leq p})$-informational completeness, and equivalent properties are in the same box.
As explained in Example \ref{ex:4}, there are five inequivalent properties. 
The big numbers give the minimal number of outcomes that an $(\state^{\leq t},\state^{\leq p})$-informationally complete observable must have.}
\label{fig:equivalence-4}
\end{figure}

\begin{proposition}\label{prop:d=4/1}
Let $d=4$. An observable $\Mo$ is 
\begin{enumerate}[(a)]
\item $(\state^1,\state)$-informationally complete if and only if every nonzero $T\in \rr(\Mo)^\perp$ satisfies $\det{T}>0$. 
\item $(\state^{\leq 2},\state^{\leq 2})$-informationally complete if and only if every nonzero $T\in \rr(\Mo)^\perp$ satisfies $\det{T}<0$. 
\item $(\state^1,\state^{\leq 2})$-informationally complete if and only if every nonzero $T\in \rr(\Mo)^\perp$ satisfies $\det{T}\neq 0$. 
\end{enumerate}
If an observable is $(\state^1,\state^{\leq 2})$-informationally complete, then it is either $(\state^1,\state)$-informationally complete or $(\state^{\leq 2},\state^{\leq 2})$-informationally complete.
\end{proposition}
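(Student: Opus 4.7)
My plan is to translate the rank criteria of Theorem~\ref{prop:structure} and Cor.~\ref{prop:SpSp} into equivalent sign conditions on the determinant, using the very restricted list of possible signatures that a traceless self-adjoint $4\times 4$ matrix can have.

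First I would build the signature dictionary. For a nonzero $T\in\trhs$ with $\tr{T}=0$ and $d=4$, Lemma~\ref{prop:pos-neg}(a) forces $1\leq\rank_\downarrow(T)\leq\rank_\uparrow(T)\leq\rank(T)-1\leq 3$, so the pair $(\rank_-(T),\rank_+(T))$ must be one of $(1,1),(1,2),(2,1),(1,3),(3,1),(2,2)$. Since $\det{T}$ is the product of the four (possibly zero) eigenvalues, I claim $\det{T}=0$ exactly when $\rank(T)<4$, $\det{T}>0$ exactly when the signature is $(2,2)$ (equivalently $\rank_\downarrow(T)=\rank_\uparrow(T)=2$), and $\det{T}<0$ exactly when the signature is $(1,3)$ or $(3,1)$ (equivalently $\rank_\uparrow(T)=3$). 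Each of these three cases is a one-line check on the sign of a product of four nonzero reals whose signs are known.

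With this dictionary the three equivalences fall out immediately. For (a), Theorem~\ref{prop:structure}(a) with $t=1$ characterizes $(\state^1,\state)$-informational completeness by $\rank_\downarrow(T)\geq 2$ for every nonzero $T\in\rr(\Mo)^\perp$, and in dimension $4$ this forces the signature $(2,2)$, equivalent to $\det{T}>0$. For (b), Cor.~\ref{prop:SpSp} with $t=2$ asks $\rank_\uparrow(T)\geq 3$, which, combined with the bound $\rank_\uparrow(T)\leq 3$, is equivalent to $\rank_\uparrow(T)=3$ and hence to $\det{T}<0$. For (c), Theorem~\ref{prop:structure}(b) with $t=1$, $p=2$ requires $\rank_\downarrow(T)\geq 2$ or $\rank_\uparrow(T)\geq 3$, the complementary signatures being exactly $(1,1),(1,2),(2,1)$, i.e., those with $\rank(T)<4$ and therefore $\det{T}=0$.

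For the final assertion I would assume $\Mo$ is $(\state^1,\state^{\leq 2})$-informationally complete, so by (c) the function $T\mapsto\det{T}$ is continuous and nowhere-vanishing on the real linear subspace $\rr(\Mo)^\perp\setminus\{0\}$, and aim to show its sign is constant there; together with (a) and (b) this yields the stated dichotomy. Given two nonzero $T_1,T_2\in\rr(\Mo)^\perp$, either they are linearly dependent, say $T_2=\lambda T_1$ with $\lambda\neq 0$, in which case $\det{T_2}=\lambda^4\det{T_1}$ has the same sign as $\det{T_1}$; or they are linearly independent, in which case the segment $\gamma(s)=(1-s)T_1+sT_2$ for $s\in[0,1]$ lies entirely in $\rr(\Mo)^\perp$ and avoids $0$, so $s\mapsto\det{\gamma(s)}$ is a continuous real function that never vanishes, and the intermediate value theorem forces $\det{T_1}$ and $\det{T_2}$ to have the same sign. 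The only step that needs real care is the signature enumeration in dimension four; the closing continuity step is then immediate.
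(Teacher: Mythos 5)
Your proposal is correct and follows essentially the same route as the paper: you translate the rank conditions of Theorem~\ref{prop:structure} and Cor.~\ref{prop:SpSp} into determinant-sign conditions via the signature constraints of Lemma~\ref{prop:pos-neg}(a), and you prove the final dichotomy by the same intermediate-value-theorem argument on line segments in $\rr(\Mo)^\perp\setminus\{0\}$ (your explicit treatment of the linearly dependent case via $\det{\lambda T}=\lambda^4\det{T}$ is a harmless addition, since matrices with determinants of opposite sign are automatically independent).
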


\begin{proof}
By Lemma \ref{prop:pos-neg}a every nonzero $T\in\rr(\Mo)^\perp$ has $1 \leq \rank_\pm(T)\leq 3$.
Since $\det{T}$ is the product of eigenvalues, we conclude that every nonzero $T\in \rr(\Mo)^\perp$ satisfies 
\begin{enumerate}[(a)]
\item $\det{T}>0$ if and only if every nonzero $T\in \rr(\Mo)^\perp$ satisfies $\rank_\downarrow(T)=2$.
\item $\det{T}<0$ if and only if every nonzero $T\in \rr(\Mo)^\perp$ satisfies $\rank_\uparrow(T)=3$.
\item $\det{T}\neq 0$ if and only if every nonzero $T\in \rr(\Mo)^\perp$ satisfies $\rank_\downarrow(T)=2$ or $\rank_\uparrow(T)=3$.
\end{enumerate}
The claims (a), (b) and (c) in Prop. \ref{prop:d=4/1} now follow from Theorem~\ref{prop:structure}a, Cor.~\ref{prop:SpSp} and Theorem~\ref{prop:structure}b, respectively.

To prove the last claim, suppose $\xx\subseteq \trhs$ is a subspace such that every nonzero $X\in\xx$ satisfies $\det{X}\neq 0$. 
We need to prove that the sign of $\det{X}$ is constant for all nonzero $X\in \xx$.
If $\dim\xx=1$, then this is clearly true.
So assume that $\dim\xx\geq 2$.
We make a counter assumption: $X,Y\in\xx$ are two linearly independent matrices with $\det{X}<0$ and $\det{Y}>0$.
Then $tX+(1-t)Y\in\xx\setminus\{0\}$ for every $t\in\R$, and $\det{t_0X+(1-t_0)Y}=0$ for some $0<t_0<1$ by the intermediate value theorem.
\end{proof}

\begin{proposition}\label{prop:d=4/2}
Let $d=4$.
\begin{enumerate}[(a)]
\item A minimal $(\state^1,\state)$-informationally complete observable has $11$ outcomes.
\item A minimal $(\state^{\leq 2},\state^{\leq 2})$-informationally complete observable has $15$ outcomes.
\item A minimal $(\state^1,\state^{\leq 2})$-informationally complete observable has $11$ outcomes.
\end{enumerate}
\end{proposition}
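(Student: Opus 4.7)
The plan is to apply the reformulation at the beginning of Section~\ref{sec:min}: the minimum number of outcomes in each case equals $16-\dim\xx$, where $\xx\subseteq\trhs$ is a real subspace of maximal dimension satisfying $\tr T=0$ on $\xx$ and $\xx\cap\R(\task-\prem)=\{0\}$. By Proposition~\ref{prop:d=4/1} this second requirement is equivalent, in the three subcases (a), (b), (c), to asking that every nonzero $T\in\xx$ satisfies $\det T>0$, $\det T<0$, or $\det T\neq 0$ respectively. I will handle (b) first, then the construction and the upper bound in (a), and finally deduce (c).

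For (b) I will show $\dim\xx\leq 1$ by a connectedness argument. Suppose $\dim\xx\geq 2$ and pick two linearly independent $X,Y\in\xx$. Along the circle $\{X\cos\theta+Y\sin\theta:\theta\in[0,2\pi)\}\subseteq\xx\setminus\{0\}$ the determinant never vanishes, so no eigenvalue crosses zero, and the continuous integer-valued function $\theta\mapsto \rank_-(X\cos\theta+Y\sin\theta)$ is therefore constant on the connected $S^1$. The substitution $\theta\mapsto\theta+\pi$, however, negates the operator and swaps $\rank_+$ with $\rank_-$; constancy consequently forces $\rank_+=\rank_-=2$, so the signature is $(2,2)$ and $\det T>0$, contradicting (b). Hence $\dim\xx\leq 1$, and this bound is realized by $T=\mathrm{diag}(3,-1,-1,-1)$, yielding $16-1=15$ outcomes.

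For (a) I will first exhibit a valid $5$-dimensional $\xx$ via a Clifford-type construction. With $X,Y,Z$ the Pauli matrices, set
$$A_1=X\otimes I,\quad A_2=Y\otimes I,\quad A_3=Z\otimes X,\quad A_4=Z\otimes Y,\quad A_5=Z\otimes Z.$$
These are five linearly independent, pairwise anticommuting, traceless Hermitian operators with $A_i^2=I$. Therefore $(\sum_i c_i A_i)^2=\|c\|^2 I$ for every $c\in\R^5\setminus\{0\}$, so $\sum_i c_i A_i$ has eigenvalues $\pm\|c\|$, each of multiplicity $2$, and determinant $\|c\|^4>0$; by Prop.~\ref{prop:d=4/1}(a) this produces a $(\state^1,\state)$-informationally complete observable with $16-5=11$ outcomes. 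The main obstacle will be the matching upper bound $\dim\xx\leq 5$: no $6$-dimensional subspace can consist (apart from $0$) entirely of signature-$(2,2)$ matrices. My intended route is topological: the assignment $T\mapsto V^-(T)$, sending a signature-$(2,2)$ matrix to its $2$-dimensional negative eigenspace, is continuous, scale-invariant, and satisfies the swap $V^-(-T)=V^-(T)^\perp$; hence it descends to a continuous map $\phi:S(\xx)\to G_\C(2,4)$ into the complex Grassmannian that is $\Z_2$-equivariant with respect to the antipodal action on the sphere and the (free) orthocomplement involution $V\mapsto V^\perp$ on $G_\C(2,4)$. The goal is then to rule out such an equivariant map when $\dim\xx=6$, i.e., from $S^5$. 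A natural way to extract the obstruction is to pull back the tautological rank-$2$ complex bundle $\gamma$; its sum $\gamma\oplus\gamma^\perp$ with its orthogonal complement is the trivial $\C^4$-bundle but the antipodal $\Z_2$-action exchanges the two summands, and on passing to $\R P^5$ this forces a characteristic-class identity whose failure in $H^*(\R P^5;\Z/2)=\Z/2[w]/w^6$ gives the desired contradiction.

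For (c) the last assertion of Proposition~\ref{prop:d=4/1} shows that if every nonzero $T\in\xx$ satisfies $\det T\neq 0$, then the sign of $\det T$ is constant on $\xx\setminus\{0\}$; hence $\xx$ automatically falls into case (a) or case (b), and $\dim\xx\leq\max(5,1)=5$. The Clifford construction of (a) meets this bound, producing a minimal $(\state^1,\state^{\leq 2})$-informationally complete observable with $16-5=11$ outcomes.
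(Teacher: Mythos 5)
Parts (b) and (c) of your argument are correct and essentially identical to the paper's: the connectedness/sign-reversal argument showing $\dim\xx\leq 1$ when $\det{T}<0$ on $\xx\setminus\{0\}$, the rank-one example realizing $15$ outcomes, and the deduction of (c) from the sign-constancy statement in Prop.~\ref{prop:d=4/1}. Your five anticommuting operators $X\otimes I, Y\otimes I, Z\otimes X, Z\otimes Y, Z\otimes Z$ are traceless, linearly independent, and satisfy $(\sum_i c_iA_i)^2=\no{c}^2\id$, so they give a valid $5$-dimensional subspace with $\det{T}>0$ on the nonzero elements; this is the same Clifford-type construction as the paper's map $N'$.

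The gap is the matching upper bound in (a): you must show that no $6$-dimensional real subspace of selfadjoint $4\times 4$ matrices can have all nonzero elements of signature $(2,2)$. The paper obtains this by citing Adams--Lax--Phillips \cite{ALP66} (which bounds the dimension by $5$ even without the trace condition) and only adds the observation that the trace-zero constraint can be met in dimension $5$. You instead sketch a topological route: the negative-eigenspace map gives a $\Z_2$-equivariant map from the unit sphere of $\xx$ to the complex Grassmannian with the (free) orthocomplement involution, and you assert that pulling back the tautological bundle ``forces a characteristic-class identity whose failure in $H^*(\R P^5;\Z/2)$ gives the desired contradiction.'' That identity is never stated or verified, and it is precisely the decisive step. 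Moreover, it is not clear it can be extracted this easily: the bundles that actually descend to $\R P^5$ (e.g.\ the rank-$4$ bundle built from $V^-(T)\oplus V^-(-T)$) satisfy relations like $w(\mathcal{F})^2=1$ inside $\Z_2[w]/(w^6)$, which admit solutions and yield no contradiction, and over $S^5$ all Chern and Stiefel--Whitney classes of the pulled-back bundle vanish, so the invariant you would need to see is finer than ordinary mod-$2$ characteristic classes. In effect you are trying to prove the stronger statement that no \emph{odd continuous} map $S^5\to G_{\C}(2,4)$ exists, which requires computing a $\Z_2$-index of the Grassmannian (or K-theoretic input of the kind underlying \cite{ALP66}), not just a cohomology calculation on $\R P^5$. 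As written, part (a)'s minimality claim is therefore unproved; either carry out that equivariant-topology computation in full or, as the paper does, invoke the Adams--Lax--Phillips bound.
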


\begin{proof}
(a) For all $n\in\nat$, denote by $M_n(\C)$ the complex linear space of $n\times n$ complex matrices, and by $M_n(\C)_s$ the real space of selfadjoint elements in $M_n(\C)$. By Prop.~\ref{prop:d=4/1} we need to look for subspaces $\xx\subseteq M_4 (\C)_s$ such that
\begin{enumerate}[(1)]
\item  $\tr{X} =0$ for every $X\in\xx$;
\item $\det{X} > 0$ for every nonzero $X\in\xx$.
\end{enumerate}
Indeed, if $\xx$ has maximal dimension among all subspaces of $M_4 (\C)_s$ satisfying these two conditions, then any observable $\Mo$ with $\rr(\Mo)^\perp = \xx$ and $d^2 - \dim\xx$ outcomes (which exists by Prop.~\ref{prop:exM} and \eqref{eq:dimsum}) is minimal $(\state^1,\state)$-informationally complete.
It was shown in \cite{ALP66} that the maximal dimension of a real subspace $\xx\subseteq M_4 (\C)_s$ satisfying (2) is $5$.
We show that if the additional requirement (1) is added, this maximal dimension remains the same, and thus the minimal number of outcomes is $4^2 - 5 = 11$.
To do this, we introduce four $2\times 2$ matrices
\begin{align*}
& \sigma^1=\lft
\begin{array}{cc}
0 & 1 \\
1 & 0
\end{array}
\rgt
\, , \quad 
\sigma^2=\lft
\begin{array}{cc}
0 & -i \\
i & 0
\end{array}
\rgt \, ,  \\
& \sigma^3=\lft
\begin{array}{cc}
1 & 0 \\
0 & -1
\end{array}
\rgt \, , \quad
\sigma^4=\lft
\begin{array}{cc}
i & 0 \\
0 & i
\end{array}
\rgt \, ,
\end{align*}
and define the following linear map $N:\R^4 \to M_2 (\C)$
$$
N(\vec{a}) = \sum_{i=1}^4 a_i \sigma^i \qquad \forall \vec{a} = (a_1,a_2,a_3,a_4)\in\R^4 \, .
$$
Note that
$$
N(\vec{a})^* N(\vec{a}) = N(\vec{a})N(\vec{a})^* = \no{\vec{a}}^2 \id \qquad \forall \vec{a}\in\R^4 \, .
$$
Next, we define five $4\times 4$ selfadjoint matrices
\begin{align*}
& A_0=\lft
\begin{array}{cc}
\id & 0\\
 0 & -\id
\end{array}
\rgt 
\, , \qquad
 A_i=\lft
\begin{array}{cc}
0& \sigma^i \\
\sigma^{i\,\ast} & 0
\end{array}
\rgt
\quad \mbox{for } i\in\{1,2,3,4\} 
\end{align*}
and the following linear map $N':\R^5 \to M_4 (\C)_s$
$$
N'(a_0,\vec{a}) = \sum_{i=0}^4 a_i A_i =
\lft
\begin{array}{cc}
a_0\id& N(\vec{a})\\
N(\vec{a})^* & -a_0\id
\end{array}
\rgt \, .
$$
Clearly, $\tr{N'(a_0,\vec{a})} = 0$ for all $(a_0,\vec{a})$. 
Moreover, it is easy to verify that the matrices $A_0,\ldots,A_4$ are linearly independent.
It follows that the map $N'$ is injective, hence the image $N'(\R^5)$ of $N'$ is a $5$-dimensional subspace of $M_4 (\C)_s$. 
Finally, using the formula for the determinant of square block matrices {\cite[Theorem~3]{SDBM2000}} we obtain
$$
\det{N'(a_0,\vec{a})} = \det{-a_0^2\id - N(\vec{a})N(\vec{a})^*} = (a_0^2 + \no{\vec{a}}^2)^2 \geq 0
$$
and  $\det{N'(a_0,\vec{a})}=0$ if and only if $a_0 = 0$ and $\vec{a}=0$. Thus, $\xx=N'(\R^5)$ has the required properties.

(b) Suppose  $\xx\neq\{0\}$ is a real subspace of $\trhs$ such that $\det{T}<0$ for all nonzero $T\in\xx$. We claim that $\dim\xx = 1$. 
To prove this, let us first make a counter assumption that $\dim\xx \geq 2$.
We fix two linearly independent $X,Y\in\xx$, and then the map
\begin{align*}
& r:\spannoR{X,Y}\setminus\{0\} \to \Z \\
& r(T) = \rank_+(T) - \rank_-(T) = \tr{T^{-1}|T|}
\end{align*}
is continuous by the continuity of each map $T\mapsto T^{-1}$ and $T\mapsto |T|$. 
Since the set $\spannoR{X,Y}\setminus\{0\}$ is connected, this would imply that $r$ is constant, hence $r(T) = r(-T)$. 
It follows that $\rank_+(T) = \rank_-(T) = 2$ for all $T\in\spannoR{X,Y}\setminus\{0\}$, but this is impossible as $\det{T}<0$. 
Thus, the counter assumption is false.

We still need to prove that there exists a $1$-dimensional subspace of $\trhs$ such that $\det{T}<0$ and $\tr{T} = 0$ for all nonzero $T\in\xx$.
Fix an orthonormal basis $\{\varphi_j\}_{j=1}^4$ of $\hh$, and set
$$
T = \frac{1}{3}\sum_{j=1}^{3}\kb{\varphi_j}{\varphi_j} - \kb{\varphi_4}{\varphi_4} \, .
$$
Then, $\xx=\R T$ is a $1$-dimensional subspace of $\trhs$ such that $\det{T}<0$ and $\tr{T} = 0$ for all nonzero $T\in\xx$. Thus, there exists an observable $\Mo$ with $4^2-1 = 15$ outcomes such that $\xx=\rr(\Mo)^\perp$, and such observable is minimal $(\state^{\leq 2},\state^{\leq 2})$-informationally complete by Props.~\ref{prop:exM} and \ref{prop:d=4/1}.

(c) This follows from Prop. \ref{prop:d=4/1} combined with items (a) and (b) above.
\end{proof}

\section{Covariant phase space observables}

We now turn our attention to covariant phase space observables. After introducing these observables in the general case of a phase space defined by an Abelian group, we treat the finite and infinite dimensional cases separately. 
As an application we study the effect of noise on the observable's ability to perform the required state determination tasks.

\subsection{General formalism}

Let $\gg$ be a locally compact and second countable Abelian group with the dual group $\ggh$. The composition laws of $\gg$ and $\ggh$ will be denoted by addition, and the canonical pairing of $x\in \gg$ and $\xi\in\ggh$ will be denoted by $\pair{\xi}{x}$. 
We fix Haar measures $\de x$ and $\de \xi$ on $\gg$ and $\ggh$, respectively. If $\mu$ is any bounded measure on $\gg\times\ggh$, the {\em symplectic Fourier transform} of $\mu$ is the bounded continuous function $\widehat{\mu}$ on $\gg\times\ggh$ given by
$$
\widehat{\mu} (x,\xi) = \int \overline{\pair{\zeta}{x}} \pair{\xi}{y} \de\mu (y,\zeta) \, .
$$
This definition clearly extends to any integrable function: if $f\in L^1(\gg\times\ggh)$, we define
$$
\widehat{f} (x,\xi) = \int \overline{\pair{\zeta}{x}} \pair{\xi}{y} f(y,\zeta) \de y \de\zeta \, .
$$
For the rest of this section, we will assume that the Haar measures $\de x$ and $\de \xi$ are normalized so that $(\widehat{f})^\wedge (x,\xi) = f(x,\xi)$ whenever also $\widehat{f}\in L^1(\gg\times\ggh)$.

Let $\hh=L^2 (\gg)$. We define the following two unitary representations $U$ and $V$ of $\gg$ and $\ggh$ on $\hh$
$$
[U(x) \psi] (y) = \psi(y-x) \, , \qquad [V(\xi) \psi] (y) = \pair{\xi}{y} \psi(y) \, .
$$
Note that
$$
V(\xi)U(x) = \pair{\xi}{x} U(x)V(\xi) \, ,
$$
so that the following {\em Weyl map}
$$
W:\gg \times\ggh \to \lh \qquad W(x,\xi) = U(x)V(\xi)
$$
is a projective square integrable representation of the direct product group $\gg \times\ggh$ on $\hh$ (for square integrability of $W$, see e.g.~\cite{Werner84} in the case $\gg=\ggh=\R^n$, and \cite[Theorem~6.2.1]{Gro98} and \cite{KiSc20??} for the general case).
The Weyl map has the useful properties
\begin{align}\label{eq:J**}
W(x,\xi)W(y,\zeta) & = \pair{\xi}{y} W(x+y,\xi+\zeta) \nonumber \\
& = \pair{\xi}{y} \overline{\pair{\zeta}{x}} W(y,\zeta)W(x,\xi)
\end{align}
and
\begin{align}\label{eq:-x-xi}
W(-x,-\xi) = \overline{\pair{\xi}{x}} W(x,\xi)^* \, .
\end{align}

For any set $X\subset\gg \times\ggh$, we denote
$$
W(X) = \overline{\spannoC{W(x,\xi)\mid (x,\xi)\in X}}^{\mathrm{w^*}} \, .
$$
If $X$ is a symmetric set, i.e., 
\begin{equation*}
X \equiv -X = \{(x,\xi)\in\gg \times\ggh \mid (-x,-\xi) \in X\} \, , 
\end{equation*}
then from \eqref{eq:-x-xi} it follows that $W(X)^\ast = W(X)$, and we can thus consider the selfadjoint part $W(X)_s$ of $W(X)$. 
If in addition $(0,0)\in X$, then $W(X)_s$ is a weak* closed real operator system on $\hh$.

Let $\bor{\gg\times\ggh}$ be the Borel $\sigma$-algebra of the locally compact and second countable space $\Omega\equiv\gg\times\ggh$. For any state $\tau\in\state$, a {\em covariant phase space observable} with the \emph{fiducial state} $\tau$ is the following observable $\Co_\tau$ on $\gg\times\ggh$
\begin{equation*}
\Co_\tau(X) = \int_X W(x,\xi)\tau W(x,\xi)^*\de x \de\xi \qquad \forall X\in\bor{\gg\times\ggh}
\end{equation*}
(see \cite{Werner84} or \cite{CaCaDeToVa04} for the case of $\gg=\ggh=\R^n$, and \cite[Prop.~2.1, p.~166]{PSAQT82} and \cite[Theorem~3.4.2]{ATthesis} 
for the general form of covariant phase space observables). The integral in the definition of $\Co_\tau$ is understood in the weak*-sense, i.e., for all $S\in\trh$,
$$
\tr{S\Co_\tau(X)} = \int_X \tr{SW(x,\xi)\tau W(x,\xi)^*} \de x \de\xi \qquad \forall X\in\bor{\gg\times\ggh} \, .
$$

More generally, the map $(x,\xi)\mapsto \tr{SW(x,\xi)T W(x,\xi)^*}$ is continuous and integrable for all $S,T\in\trh$, with
\begin{equation}\label{eq:J*}
\int \tr{SW(x,\xi)T W(x,\xi)^*} \de x \de\xi = \tr{S}\tr{T}
\end{equation}
(see \cite[Lemma 3.1]{Werner84} for the case $\gg=\ggh=\R^n$, and \cite{KiSc20??} for the general case).

For any $T\in\trh$, the {\em inverse Weyl transform} of $T$ is the continuous function $\widehat{T}$ on $\gg\times\ggh$ given by
$$
\widehat{T} (x,\xi) = \tr{TW(x,\xi)} \, .
$$
The zero-level set of $T$ is the closed set
$$
Z(T)=\{(x,\xi)\in\gg\times \ggh \mid \widehat{T} (x,\xi) = 0\} \, .
$$
As usual, ${\rm supp}\, \widehat{T}$ stands for the support of $\widehat{T}$, that is, 
$$
{\rm supp}\, \widehat{T}=\overline{\{(x,\xi)\in\gg\times \ggh \mid \widehat{T} (x,\xi) \neq  0\}} \, .
$$

By injectivity of the inverse Weyl transform \cite{Werner84,KiSc20??}, we have $T=0$ if and only if $Z(T)=\gg\times\ggh$ or, equivalently, ${\rm supp}\, \widehat{T} = \emptyset$. Since $\widehat{T^*}(x,\xi) = \overline{\pair{\xi}{x}} \, \overline{\widehat{T}(-x,-\xi)}$, if $T\in\trhs$, then the sets $Z(T)$ and ${\rm supp}\, \widehat{T}$ are symmetric. Moreover, if $T$ is such that $\tr{T}\neq 0$, then $(0,0)\notin Z(T)$, since $\widehat{T}(0,0)=\tr{T}$.

\begin{proposition}\label{prop:covariant_nassela}
For any covariant phase space observable $\Co_\tau$ we have 
\begin{equation}\label{eqn:weyl_span}
\rr(\Co_\tau)= W({\rm supp}\, \widehat{\tau})_s
\end{equation}
and 
\begin{equation}\label{eqn:weyl_span_comp}
\begin{split}
\rr(\Co_\tau)^\perp & = \{S\in\trhs\mid {\rm supp}\, \widehat{\tau}\subseteq Z(S)\} \\
& = \{S\in\trhs\mid {\rm supp}\, \widehat{S}\subseteq Z(\tau)\} \, .
\end{split}
\end{equation}
\end{proposition}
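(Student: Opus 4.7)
The plan is to prove (\ref{eqn:weyl_span_comp}) first via an explicit symplectic Fourier calculation, and then derive (\ref{eqn:weyl_span}) from it by the Bipolar theorem. For $S\in\trhs$, set $g(x,\xi)=\tr{SW(x,\xi)\tau W(x,\xi)^*}$; this function is continuous (as noted just above (\ref{eq:J*})) and real (both $S$ and $W(x,\xi)\tau W(x,\xi)^*$ are selfadjoint). Since $\tr{S\Co_\tau(X)}=\int_X g\,\de x\de\xi$, we have $S\in\rr(\Co_\tau)^\perp$ if and only if $g\equiv 0$ on $\gg\times\ggh$. The task reduces to characterizing the nulling of $g$ in terms of the inverse Weyl transforms of $S$ and $\tau$.

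To do so, cyclicity of the trace together with (\ref{eq:J**}) and (\ref{eq:-x-xi}) yields
\[
\widehat{W(x,\xi)\tau W(x,\xi)^*}(y,\zeta)=\overline{\pair{\xi}{y}}\pair{\zeta}{x}\,\widehat{\tau}(y,\zeta).
\]
Applying the Plancherel identity for the inverse Weyl transform (legitimate since $S,\tau\in\trh$ are in particular Hilbert--Schmidt; see \cite{Werner84,KiSc20??}) and exploiting $\tau=\tau^*$ to rewrite the resulting integral, I expect $g$ to coincide --- up to a harmless pointwise complex conjugation that does not affect the zero set --- with the symplectic Fourier transform of $\widehat{S}\cdot\overline{\widehat{\tau}}$. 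Integrability of $\widehat{S}\cdot\overline{\widehat{\tau}}$ comes from Cauchy--Schwarz combined with the Weyl--Plancherel isometry, while integrability of $g$ itself follows from (\ref{eq:J*}) after splitting $S$ into positive and negative parts. The Fourier inversion $(\widehat{f})^\wedge=f$ assumed in the set-up then gives $g\equiv 0$ if and only if $\widehat{S}\cdot\widehat{\tau}\equiv 0$ pointwise. By continuity of $\widehat{S},\widehat{\tau}$ together with closedness of $Z(S),Z(\tau)$, pointwise vanishing of the product is equivalent to each of the inclusions $\mathrm{supp}\,\widehat{\tau}\subseteq Z(S)$ and $\mathrm{supp}\,\widehat{S}\subseteq Z(\tau)$, which proves (\ref{eqn:weyl_span_comp}).

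For (\ref{eqn:weyl_span}) I invoke the Bipolar theorem \cite[V.1.8]{CFA90}: since $\rr(\Co_\tau)\subseteq\lhs$ is weak*-closed with annihilator $\rr(\Co_\tau)^\perp$, we have $\rr(\Co_\tau)=\{A\in\lhs\mid\tr{SA}=0\ \forall S\in\rr(\Co_\tau)^\perp\}$. The inclusion $W(\mathrm{supp}\,\widehat{\tau})_s\subseteq\rr(\Co_\tau)$ follows by a short support/continuity argument: for $(x,\xi)\in\mathrm{supp}\,\widehat{\tau}$ and $S\in\rr(\Co_\tau)^\perp$, if $\widehat{S}(x,\xi)\neq 0$ then $\widehat{S}$ is nonzero on an open neighborhood $U$ of $(x,\xi)$, forcing $U\subseteq\mathrm{supp}\,\widehat{S}\subseteq Z(\tau)$ and contradicting the fact that $(x,\xi)\in\mathrm{supp}\,\widehat{\tau}$ makes $U$ contain points where $\widehat{\tau}\neq 0$; hence $\tr{SW(x,\xi)}=\widehat{S}(x,\xi)=0$, which extends by complex linearity, weak*-closure, and finally restriction to selfadjoint parts. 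The reverse inclusion is a Hahn--Banach separation: for $A\in\lhs\setminus W(\mathrm{supp}\,\widehat{\tau})_s$, symmetry of $\mathrm{supp}\,\widehat{\tau}$ (coming from $\tau=\tau^*$ via (\ref{eq:-x-xi})) gives $W(\mathrm{supp}\,\widehat{\tau})^*=W(\mathrm{supp}\,\widehat{\tau})$, so $A\notin W(\mathrm{supp}\,\widehat{\tau})$; thus there is $S\in\trh$ separating $A$ from this weak*-closed subspace. Writing $S=S_1+iS_2$ with $S_j\in\trhs$, the same symmetry guarantees that $S^*$ also annihilates $W(\mathrm{supp}\,\widehat{\tau})$, whence both $S_j\in\rr(\Co_\tau)^\perp$; at least one of them satisfies $\tr{S_jA}\neq 0$, witnessing $A\notin\rr(\Co_\tau)$.

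The main obstacle is the Fourier identification of $g$ with $(\widehat{S}\cdot\overline{\widehat{\tau}})^\wedge$, which calls for careful bookkeeping of the Weyl phase factors together with the right convention for the Weyl--Plancherel identity on $\trh$. The remaining support/continuity, Bipolar, and Hahn--Banach steps are otherwise routine.
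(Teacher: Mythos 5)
Your proposal is correct and follows essentially the same route as the paper: both hinge on the symplectic-Fourier identification of $(x,\xi)\mapsto\tr{SW(x,\xi)\tau W(x,\xi)^*}$ with $\widehat{S}\cdot\overline{\widehat{\tau}}$ (the paper computes the symplectic Fourier transform of this function directly from \eqref{eq:J**} and \eqref{eq:J*}, while you recover the function from $\widehat{S}\,\overline{\widehat{\tau}}$ via the Weyl--Plancherel identity, which is the same duality read in the opposite direction), followed by closedness of the zero sets and injectivity of the Fourier transform to get \eqref{eqn:weyl_span_comp}. Your two-inclusion Hahn--Banach argument for \eqref{eqn:weyl_span} is simply an unpacking of the paper's shorter step, which computes $W({\rm supp}\,\widehat{\tau})_s^\perp=\{S\in\trhs\mid {\rm supp}\,\widehat{\tau}\subseteq Z(S)\}$ and invokes the Bipolar Theorem once.
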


\begin{proof}
Note  that  $S\in\rr(\Co_\tau)^\perp$ if and only if $\tr{SW(x,\xi)\tau W(x,\xi)^*}=0$ for all $(x,\xi)\in\gg\times \ggh$. In this case, using \eqref{eq:J**} and \eqref{eq:J*}, we obtain that the symplectic Fourier transform
$$
\int \overline{\pair{\zeta}{x}} \pair{\xi}{y} \tr{SW(y,\zeta)\tau W(y,\zeta)^*} \de y \de\zeta =  \widehat{S}(x,\xi)\overline{\widehat{\tau}(x,\xi)} \equiv 0
$$
so that $Z(\tau)^c\subseteq Z(S)$, but since $Z(S)$ is closed, this implies that ${\rm supp}\, \widehat{\tau}\subseteq Z(S)$. On the contrary, if $S\in\trhs$ is such that ${\rm supp}\, \widehat{\tau}\subseteq Z(S)$, then by the injectivity of the symplectic Fourier transform we have $\tr{SW(x,\xi)\tau W(x,\xi)^*}=0$ for all $(x,\xi)\in\gg\times \ggh$ so that $S\in \rr(\Co_\tau)^\perp$. This shows the first equality in \eqref{eqn:weyl_span_comp}. For the second, note that, if $T_1,T_2\in\trh$ are such that ${\rm supp}\, \widehat{T}_1 \subseteq Z(T_2)$, then
$$
{\rm supp}\, ( \widehat{T}_2)  = \overline{Z(T_2)^c} \subseteq \overline{({\rm supp}\, \widehat{T}_1)^c}\subseteq Z(T_1) \, .
$$
Therefore, ${\rm supp}\, \widehat{T}_1 \subseteq Z(T_2)$ $\Leftrightarrow$ ${\rm supp}\, \widehat{T}_2 \subseteq Z(T_1)$, hence the second equality in \eqref{eqn:weyl_span_comp} holds.

In order to complete the proof, we note that
\begin{align*}
W({\rm supp}\, \widehat{\tau})_s^\perp & = \{S\in\trhs\mid \tr{SW(x,\xi)} = 0 \ \forall (x,\xi) \in {\rm supp}\, \widehat{\tau}\} \\
& = \{S\in\trhs\mid {\rm supp}\, \widehat{\tau} \subseteq Z(S)\} \, .
\end{align*}
Comparing this with \eqref{eqn:weyl_span_comp}, \eqref{eqn:weyl_span} follows by the Bipolar Theorem.
\end{proof}

It follows from Prop.~\ref{prop:covariant_nassela} that, vaguely speaking, the larger the support of the inverse Weyl transform of the fiducial state is, the better the corresponding observable is from the state determination point of view. 
In particular, the extreme case ${\rm supp}\, \widehat{\tau} = \mathcal{G}\times \widehat{\mathcal{G}}$, if any, must yield an informationally complete observable. The next proposition shows that this is indeed a necessary and sufficient condition for informational completeness. The proof (taken from \cite{KiSc20??}) is a straightforward generalization of the corresponding result for the case $\mathcal{G}\times \widehat{\mathcal{G}}=\R^2$ proved in \cite{Kiukas2012}.

\begin{proposition}\label{prop:covariant_infocomplete}
The following conditions are equivalent:
\begin{enumerate}[(i)]
\item $\Co_\tau$ is informationally complete
\item ${\rm supp}\, \widehat{\tau} = \mathcal{G}\times \widehat{\mathcal{G}}$.
\end{enumerate}
\end{proposition}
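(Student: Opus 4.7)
The plan is to reduce the equivalence to a statement about supports and annihilators, then to an elementary topological fact on the ``if'' side and a functional-analytic construction on the ``only if'' side. By Cor.~\ref{prop:well-known}, $\Co_\tau$ is informationally complete exactly when $\rr(\Co_\tau)^\perp = \{0\}$, and by the second description of this annihilator in~\eqref{eqn:weyl_span_comp} we have $\rr(\Co_\tau)^\perp = \{S\in\trhs \mid {\rm supp}\,\widehat S \subseteq Z(\tau)\}$. So it suffices to show that this set is trivial if and only if ${\rm supp}\,\widehat\tau = \gg\times\ggh$.

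For the ``if'' direction I would use pure topology. The hypothesis ${\rm supp}\,\widehat\tau = \gg\times\ggh$ is exactly the assertion that $Z(\tau)$ has empty interior, i.e.\ that $Z(\tau)^c$ is dense. For any $S\in\rr(\Co_\tau)^\perp$, continuity of $\widehat S$ makes the set $\{\widehat S\neq 0\}$ open, while the containment ${\rm supp}\,\widehat S\subseteq Z(\tau)$ forces $\{\widehat S\neq 0\}\subseteq Z(\tau)$. Empty interior of $Z(\tau)$ then gives $\{\widehat S\neq 0\}=\emptyset$, i.e.\ $\widehat S\equiv 0$, and injectivity of the inverse Weyl transform yields $S=0$.

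For the ``only if'' direction I would argue by contraposition: assuming ${\rm supp}\,\widehat\tau\neq \gg\times\ggh$, the open set $U := (\gg\times\ggh)\setminus{\rm supp}\,\widehat\tau$ is nonempty, and the task becomes producing a nonzero $S\in\trhs$ with ${\rm supp}\,\widehat S\subseteq U$. Two features of $U$ simplify the search: the identity $\widehat{T^*}(x,\xi) = \overline{\pair{\xi}{x}}\,\overline{\widehat T(-x,-\xi)}$ applied to $\tau = \tau^*$ shows that $U$ is symmetric; and $\widehat\tau(0,0) = \tr\tau = 1$ means $(0,0)\notin U$. One may therefore choose a nonempty symmetric open $V\subseteq U$ with $(0,0)\notin\overline V$ and seek a self-adjoint trace-class $S\neq 0$ whose inverse Weyl transform is supported in $V$.

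The main obstacle is this last existence statement. The Weyl--Plancherel correspondence gives a unitary isomorphism between $L^2(\gg\times\ggh)$ and the Hilbert--Schmidt operators on $L^2(\gg)$, so manufacturing a nonzero Hilbert--Schmidt operator whose inverse Weyl transform is supported in $V$ with the right conjugation symmetry is routine; the delicate point is promoting the construction to $\trh$. This can be done by choosing the target symbol sufficiently regular (for instance continuous and compactly supported in $V$, with the symmetry $f(-x,-\xi) = \overline{\pair{\xi}{x}}\,\overline{f(x,\xi)}$) so that its Weyl quantisation lies in $\trh$, a step which relies on the square-integrability of the Weyl representation and the resulting orthogonality relations. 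In the Euclidean case $\gg=\ggh=\R$ this is the argument of~\cite{Kiukas2012}, and its extension to general locally compact second countable abelian groups is carried out in~\cite{KiSc20??}.
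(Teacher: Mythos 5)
Your reduction via Cor.~\ref{prop:well-known} and \eqref{eqn:weyl_span_comp}, and your proof of (ii)$\Rightarrow$(i), are correct and essentially the paper's argument: whether one says ``$Z(\tau)$ has empty interior, so the open set $\{\widehat S\neq 0\}\subseteq Z(\tau)$ is empty'' or ``$Z(S)\supseteq{\rm supp}\,\widehat\tau=\gg\times\ggh$'', the direction reduces to injectivity of the inverse Weyl transform.

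The gap is in (i)$\Rightarrow$(ii), where the whole content of the direction is the existence of a nonzero $S\in\trhs$ with ${\rm supp}\,\widehat S$ contained in the open set $U=({\rm supp}\,\widehat\tau)^c$, and your mechanism for producing it does not work. Square integrability of $W$ and the orthogonality relations give only the unitary correspondence between $L^2(\gg\times\ggh)$ and the \emph{Hilbert--Schmidt} class; they say nothing about trace class, and the step from Hilbert--Schmidt to $\trh$ is precisely the delicate point you identify. Your proposed sufficient condition --- a continuous, compactly supported symbol --- does not close it: exactly as a continuous compactly supported function need not be the Fourier transform of an $L^1$ function, a continuous compactly supported function on $\gg\times\ggh$ need not be the inverse Weyl transform of any trace-class operator, so the claimed ``Weyl quantisation lies in $\trh$'' is unjustified (and false in general); deferring it to \cite{Kiukas2012,KiSc20??} amounts to citing the result rather than proving it. The paper gets trace-classness from \emph{positivity} instead of regularity: pick $(x_0,\xi_0)\in U$ and a symmetric open neighbourhood $V$ of $(0,0)$ of finite measure with $V+V\subseteq (U-(x_0,\xi_0))\cap(U+(x_0,\xi_0))=:U_0$; the function $\chi_V*\chi_V$ is of positive type, so Bochner's theorem yields a positive bounded measure $\mu$ with $\widehat\mu=\chi_V*\chi_V$, and averaging a nonzero positive $S_0\in\trh$ gives $S_1=\int W(x,\xi)S_0W(x,\xi)^*\,\de\mu(x,\xi)$, which is automatically positive and trace class, satisfies $\widehat{S_1}=\widehat\mu\,\widehat{S_0}$, vanishes off $U_0$ because $\widehat\mu(x,\xi)=|V\cap((x,\xi)+V)|$, and is nonzero since $\widehat{S_1}(0,0)=|V|\,\tr{S_0}$. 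Since a positive operator necessarily has $\widehat{S_1}(0,0)=\tr{S_1}>0$, its support must contain the origin, which is why the construction is done near $(0,0)$ and only then translated into $U$ by setting $S_\pm=W(x_0,\xi_0)S_1\pm S_1W(x_0,\xi_0)^*$, of which at least one is nonzero and $S_+$, $iS_-$ are selfadjoint with ${\rm supp}\,\widehat\tau\subseteq Z(S_\pm)$. Your plan of directly prescribing a symbol supported in a set $V\subseteq U$ away from the origin cannot exploit positivity (such an $S$ is forced to be traceless), so without this two-step device you would need a genuine trace-class criterion for Weyl quantisation, which your sketch does not supply.
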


\begin{proof}
If (ii) holds, then by Prop.~\ref{prop:covariant_nassela} and injectivity of the inverse Weyl transform we have $\rr(\Co_\tau)^\perp = \{0\}$, so that $\Co_\tau$ is informationally complete by Cor.~\ref{prop:well-known}. 

Suppose then that (ii) does not hold. In order to complete the proof, by \eqref{eqn:weyl_span_comp} we need to show that there exists a nonzero $S\in\trhs$ such that ${\rm supp}\, \widehat{\tau}\subseteq Z(S)$.
The set $U = ({\rm supp}\, \widehat{\tau})^c $ is nonempty, symmetric, open, and does not contain the identity $(0,0)$. Let $(x_0,\xi_0)\in U$. 
Then we can find a symmetric open neighbourhood $V$ of $(0,0)$ such that $V+V\subseteq  (U-(x_0,\xi_0))\cap (U+(x_0,\xi_0)) \equiv U_0$ and the measure $\vert V\vert $ of $V$ is finite.  The function $f = \chi_V *\chi_V$ (convolution in $\gg\times\ggh$) is then of positive type {\cite[Cor. 3.16]{CAHA95}}, so by Bochner's theorem {\cite[Theorem~4.18]{CAHA95}} there exists a positive bounded measure $\mu:\mathcal{B}(\mathcal{G}\times \widehat{\mathcal{G}})\to[0,\infty)$ such that $\widehat{\mu} = f$.

Let $S_0\in\trhs$, $S_0\geq 0$, be  nonzero and define 
$$
S_1 = \int W(x,\xi) S_0W(x,\xi)^*\, d\mu(x,\xi) 
$$ 
so that $S_1\geq 0$ and $\widehat{S_1}(x,\xi) = \widehat{\mu}(x,\xi) \widehat{S}_0(x,\xi) $. Now, $V \cap ((x,\xi) + V) = \emptyset$ for all $(x,\xi)\notin U_0$, since $(y,\zeta) \in V \cap ((x,\xi) + V)$ implies $(x,\xi) = (y,\zeta) + ((x,\xi) - (y,\zeta)) \in V-V \subset U_0$. Therefore, $\widehat{\mu}(x,\xi) = \vert V \cap ((x,\xi) + V)\vert = 0$ for all $(x,\xi)\notin U_0$ so that $\widehat{S_1}(x,\xi) =0$ for all $(x,\xi)\notin U_0$, but $S_1\neq 0$ since $\widehat{S_1}(0,0) = \vert V\vert\, \tr{S_0} \neq 0 $.  Finally, define $S_+ = W(x_0,\xi_0)S_1 + S_1W(x_0,\xi_0)^*$ and $S_- = W(x_0,\xi_0)S_1 - S_1W(x_0,\xi_0)^*$, so that at least one of $S_+$, $S_-$ is nonzero and
$$
\widehat{S_{\pm}}(x,\xi) = {\pair{\xi}{x_0}} \widehat{S_1}(x+x_0,\xi+\xi_0) \pm {\pair{\xi_0}{x_0-x}} \widehat{S_1}(x-x_0,\xi-\xi_0) \equiv 0
$$
for all $(x,\xi)\notin U$. Hence, ${\rm supp}\, \widehat{\tau} = U^c \subseteq Z(S_\pm)$ and, as $S_+ ,\, iS_-\in\trhs$, the proof is complete.
\end{proof}

\subsection{Finite dimension}

For any nonzero $d\in\nat$, we denote by $\Z_d$ the cyclic group with $d$ elements, and let $\omega = e^{2\pi i /d}$. Then, $\widehat{\Z_d} \equiv \Z_d$, the pairing of $x\in\Z_d$ and $\xi\in\widehat{\Z_d}$ being $\pair{\xi}{x} = \omega^{\xi x}$. 
Moreover, the Haar measures of $\Z_d$ and $\widehat{\Z_d}$ are just $d^{-1/2}\times$ the respective counting measures.
Let $\hh$ be a $d$-dimensional Hilbert space, and choose an orthonormal basis $\{\psi_j\}_{j\in\Z_d}$  of $\hi$. The Weyl map $W:\Z_d \times \Z_d \to \lh$ is then given by
$$
W(x,\xi)\psi_j = \omega^{\xi j}\psi_{j+x} \, .
$$
Since $\tr{W(x,\xi)^* W(y,\zeta)} = d \delta_{x,y} \delta_{\xi,\zeta}$ for all $(x,\xi),(y,\zeta)\in\Z_d\times\Z_d$, the set $\{d^{-1/2}W(x,\xi)\}_{(x,\xi)\in\Z_d\times\Z_d}$ is an orthonormal basis of the linear space $\lh$ endowed with the Hilbert-Schmidt inner product $\ip{A}{B} = \tr{A^* B}$.

Prop.~\ref{prop:covariant_nassela} now reduces to 
\begin{eqnarray}
\rr(\Co_\tau) &=& \spannoC{W(x,\xi) \mid (x,\xi)\in Z(\tau)^c}\cap\lhs \label{eqn:finite_weyl_span} \\
\rr(\Co_\tau)^\perp &=& \spannoC{W(x,\xi) \mid (x,\xi)\in Z(\tau)}\cap\lhs \label{eqn:finite_weyl_span_comp}
\end{eqnarray}
so that the real operator system is completely characterized by the zero set $Z(\tau)$. Therefore the essential question is the characterization of possible zero sets. This is done in the next proposition.

\begin{proposition}\label{prop:finite_zero_set}
For any state $\tau\in\state$, we have $-Z(\tau)=Z(\tau)$ and $(0,0)\notin Z(\tau)$. Conversely, if  $X\subset \Z_d \times \Z_d$ is such that $-X = X$ and $(0,0)\notin X$, then there exists a state $\tau$ such that $Z(\tau)=X$. 
\end{proposition}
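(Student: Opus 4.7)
The necessity is immediate: since $\tau^{*}=\tau$, the identity $\widehat{T^{*}}(x,\xi)=\overline{\pair{\xi}{x}}\,\overline{\widehat{T}(-x,-\xi)}$ noted in the text yields $|\widehat{\tau}(-x,-\xi)|=|\widehat{\tau}(x,\xi)|$, so $Z(\tau)=-Z(\tau)$; and $\widehat{\tau}(0,0)=\tr{\tau}=1$ rules out $(0,0)\in Z(\tau)$.

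For the converse I set $Y=(\Z_{d}\times\Z_{d})\setminus X$ and look for a selfadjoint traceless $T\in\lhs$ whose inverse Weyl transform $\widehat{T}$ vanishes precisely on $X\cup\{(0,0)\}$. Once such a $T$ is found, the state $\tau=\frac{1}{d}\id+\epsilon T$ will do the job for any $\epsilon>0$ small enough to guarantee $\tau\geq 0$ (such $\epsilon$ exists since $\frac{1}{d}\id$ lies in the interior of $\state$): indeed $\tr{T}=0$ gives $\tr{\tau}=1$, and by linearity of the inverse Weyl transform $\widehat{\tau}(0,0)=1$ while $\widehat{\tau}(x,\xi)=\epsilon\widehat{T}(x,\xi)$ for $(x,\xi)\neq(0,0)$, so $Z(\tau)=X$.

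I construct $T$ in the Weyl basis: $T=\frac{1}{d}\sum_{(x,\xi)\in Y\setminus\{(0,0)\}} c(x,\xi)W(x,\xi)$. Using \eqref{eq:J**} together with the easy identity $\tr{W(a,b)}=d\,\delta_{a,0}\delta_{b,0}$ one gets $\tr{W(y,\zeta)W(x,\xi)}=d\,\omega^{\zeta x}\delta_{y,-x}\delta_{\zeta,-\xi}$. This forces $\tr{T}=0$ automatically and $\widehat{T}(x,\xi)=c(-x,-\xi)\omega^{-x\xi}$ when $(-x,-\xi)\in Y\setminus\{(0,0)\}$, with $\widehat{T}(x,\xi)=0$ otherwise. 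Since $-Y=Y$ and $(-x,-\xi)=(0,0)\Leftrightarrow(x,\xi)=(0,0)$, the first case occurs exactly on $Y\setminus\{(0,0)\}$, so $\widehat{T}$ has the prescribed zero set as long as $c$ is nowhere vanishing on $Y\setminus\{(0,0)\}$. Selfadjointness of $T$, via \eqref{eq:-x-xi}, translates into the Hermiticity constraint $c(-x,-\xi)=\overline{c(x,\xi)}\,\omega^{-x\xi}$.

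The only remaining task is to exhibit a nowhere-vanishing $c$ subject to this constraint, and this is the main (mild) obstacle. For each pair $\{(x,\xi),(-x,-\xi)\}\subset Y\setminus\{(0,0)\}$ with $(x,\xi)\neq(-x,-\xi)$, I set $c(x,\xi)=1$ and $c(-x,-\xi)=\omega^{-x\xi}$, both nonzero. For self-paired points (which arise only when $d$ is even, among $(d/2,0)$, $(0,d/2)$, $(d/2,d/2)$), the constraint reduces to $c(x,\xi)=\overline{c(x,\xi)}\,\omega^{-x\xi}$; writing $c(x,\xi)=re^{i\theta}$ this becomes $e^{2i\theta}=\omega^{-x\xi}$, which has nonzero solutions since every unit-modulus complex number admits a square root. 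This completes the construction and hence the proof.
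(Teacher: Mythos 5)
Your proof is correct, but it takes a genuinely different route from the paper's. The paper constructs the fiducial state by smearing: it picks a state $\tau_0$ with nowhere-vanishing $\widehat{\tau_0}$, builds a phase-space probability density $f$ (a sum of cosine functions) whose symplectic Fourier transform vanishes exactly on $X$, and sets $\tau$ equal to the corresponding convex combination of translates $W(y,\zeta)\tau_0 W(y,\zeta)^*$, so that $\widehat{\tau}\propto\widehat{f}\,\widehat{\tau_0}$; the case $X=\emptyset$ is handled separately with an explicit pure state. You instead work directly in the Weyl basis: you build a traceless selfadjoint $T$ whose Weyl coefficients are supported exactly on $\bigl(X\cup\{(0,0)\}\bigr)^c$ and then perturb the maximally mixed state, $\tau=\tfrac1d\id+\epsilon T$, using that $\tfrac1d\id$ is an interior point of $\state$. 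Your argument is more elementary and treats $X=\emptyset$ uniformly, but it is intrinsically finite-dimensional (there is no maximally mixed state when $d=\infty$), whereas the paper's smearing construction is exactly the one that carries over to the $\R^n$ case in the next proposition and has a direct physical reading as noise. One small slip: from \eqref{eq:-x-xi} one gets $W(x,\xi)^*=\pair{\xi}{x}W(-x,-\xi)$, so the Hermiticity constraint is $c(-x,-\xi)=\overline{c(x,\xi)}\,\omega^{+x\xi}$ rather than $\omega^{-x\xi}$; this is harmless, since your explicit choices (adjusting the phase accordingly, and noting $\omega^{\pm x\xi}=\pm1$ at the self-paired points) remain nonzero and the construction goes through unchanged.
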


\begin{proof}
We have already observed that $\tau\in\trhs$ and $\tr{\tau}>0$ implies the first part of the proposition.

For the second part, suppose first that  $X=\emptyset$. Then one may choose $\tau= \vert \psi\rangle\langle\psi\vert$ with $\psi = C \sum_{j=0}^{d-1} \alpha^j \psi_j$, where $0 < \vert\alpha\vert<1$ and $C>0$ is a normalization constant, which gives $\widehat{\tau} (x,\xi) = C^2 (\alpha\omega^\xi)^{-x} (1-|\alpha|^{2d})/(1-|\alpha|^2 \omega^\xi)$ and hence $Z(\tau)=\emptyset$. Suppose now that $X\neq \emptyset$.
For any $(x,\xi)\in X^c$ define the function 
$$
f_{(x,\xi)}(y,\zeta) = \frac{1}{d} (1 + \cos(2\pi (\zeta x- \xi y)/d)).
$$ 
By taking the symplectic Fourier transform we have for $(x,\xi)\neq (0,0)$
$$
\widehat{f}_{(x,\xi)} (z,\eta) = \frac{1}{d}\sum_{y,\zeta =0}^{d-1} e^{2\pi i (\eta y - \zeta z)/d} f_{(x,\xi)}(y,\zeta) =\left\{\begin{array}{ll} 1 & {\rm if }\, (z,\eta) = (0,0)\\ 1/2 & {\rm if }\, (z,\eta) = \pm(x,\xi) \\ 0 &{\rm otherwise}  \end{array}\right. \, ,
$$
and $\widehat{f}_{(0,0)} (0,0) = 2$ and $\widehat{f}_{(0,0)} (z,\eta) =0$ otherwise.
Now define $f = \sum_{(x,\xi)\in X^c} f_{(x,\xi)}$ which then satisfies 
$$
\widehat{f}(z,\eta) =0 \quad\Leftrightarrow\quad \widehat{f}_{(x,\xi)}(z,\eta) =0 \, \, \forall \, \, (x,\xi)\in X^c \quad\Leftrightarrow\quad (z,\eta)\in X \, .
$$ 
Let $\tau_0\in\ss$ be such that $\widehat{\tau_0} (x,\xi)\neq0$ for all $(x,\xi)\in\Z_d\times \Z_d$, and define
$$
\tau = \frac{1}{d(\# X^c +1)} \sum_{y,\zeta =0}^{d-1}  f(y,\zeta) W(y,\zeta) \tau_0 W(y,\zeta)^* \, .
$$
Then  $\tau\in\ss$ since $f\geq 0$ and $\sum_{y,\zeta =0}^{d-1}  f(y,\zeta) = d(\# X^c +1)$, and moreover $\widehat{\tau} (x,\xi) = (\# X^c +1)^{-1} \widehat{f} (x,\xi) \widehat{\tau_0} (x,\xi) \equiv 0$ if and only $\widehat{f} (x,\xi)=0$. 
That is, $Z(\tau) = X$.
\end{proof}

Since in the finite dimensional setting no topological considerations are needed, Prop.~\ref{prop:covariant_infocomplete} takes the following simple and well-known form.
\begin{proposition}\label{prop:eqZ0finite}
The  following conditions are equivalent:
\begin{enumerate}[(i)]
\item The observable $\Co_\tau$ is informationally complete.
\item $Z(\tau)=\emptyset$.
\end{enumerate}
\end{proposition}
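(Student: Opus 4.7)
The plan is to deduce this as a direct specialization of Proposition \ref{prop:covariant_infocomplete}. That proposition already establishes, in the general locally compact abelian setting, that $\Co_\tau$ is informationally complete if and only if ${\rm supp}\, \widehat{\tau} = \mathcal{G}\times\widehat{\mathcal{G}}$. Here $\mathcal{G} = \widehat{\mathcal{G}} = \Z_d$ carries the discrete topology, so every subset is closed and ${\rm supp}\,\widehat{\tau}$ coincides with the set $\{(x,\xi)\in\Z_d\times\Z_d \mid \widehat\tau(x,\xi)\neq 0\} = Z(\tau)^c$. Hence ${\rm supp}\,\widehat{\tau} = \Z_d\times\Z_d$ if and only if $Z(\tau) = \emptyset$, which gives the equivalence.

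As a self-contained alternative, I would argue straight from \eqref{eqn:finite_weyl_span_comp} and Corollary \ref{prop:well-known}. The direction (ii)$\Rightarrow$(i) is trivial: if $Z(\tau) = \emptyset$, then \eqref{eqn:finite_weyl_span_comp} gives $\rr(\Co_\tau)^\perp = \spannoC{\emptyset}\cap\lhs = \{0\}$, so $\Co_\tau$ is informationally complete by Corollary \ref{prop:well-known}. For (i)$\Rightarrow$(ii), assume $(x_0,\xi_0)\in Z(\tau)$. By Proposition \ref{prop:finite_zero_set} one has $(-x_0,-\xi_0)\in Z(\tau)$ as well, and from \eqref{eq:-x-xi} it follows that $W(x_0,\xi_0)^\ast$ is a scalar multiple of $W(-x_0,-\xi_0)$; therefore both selfadjoint combinations $W(x_0,\xi_0) + W(x_0,\xi_0)^\ast$ and $i\bigl(W(x_0,\xi_0) - W(x_0,\xi_0)^\ast\bigr)$ lie in $\rr(\Co_\tau)^\perp$. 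At least one of them is nonzero, because $W(x_0,\xi_0)$ is unitary and hence nonzero. Consequently $\rr(\Co_\tau)^\perp \neq \{0\}$, contradicting informational completeness.

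There is essentially no obstacle here: the delicate construction in the proof of Proposition \ref{prop:covariant_infocomplete}, which produced a nonzero $S\in\trhs$ with $\operatorname{supp}\widehat{\tau}\subseteq Z(S)$ via a Bochner-type argument on a nonempty open subset of $\mathcal{G}\times\widehat{\mathcal{G}}$, is only needed because a general ${\rm supp}\,\widehat\tau$ may be a closed set with dense complement. In the finite setting, points are isolated, the closure in the definition of ${\rm supp}\,\widehat\tau$ does nothing, and the required nonzero element of $\rr(\Co_\tau)^\perp$ can be written down explicitly as above.
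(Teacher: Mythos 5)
Your proof is correct, and your first argument is precisely the paper's: the paper gives no separate proof of Proposition~\ref{prop:eqZ0finite}, presenting it as the specialization of Proposition~\ref{prop:covariant_infocomplete} to the discrete phase space $\Z_d\times\Z_d$, where ${\rm supp}\,\widehat{\tau}=Z(\tau)^{c}$ and hence ${\rm supp}\,\widehat{\tau}=\Z_d\times\Z_d$ if and only if $Z(\tau)=\emptyset$. Your self-contained alternative via \eqref{eqn:finite_weyl_span_comp}, the symmetry of $Z(\tau)$, \eqref{eq:-x-xi} and Corollary~\ref{prop:well-known} is also sound: the explicit selfadjoint elements $W(x_0,\xi_0)+W(x_0,\xi_0)^{\ast}$ and $i\bigl(W(x_0,\xi_0)-W(x_0,\xi_0)^{\ast}\bigr)$, at least one of which is nonzero, yield a nonzero element of $\rr(\Co_\tau)^{\perp}$, correctly replacing the Bochner-type construction that is only needed when the complement of the support can be dense.
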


The next result shows that for covariant phase space observables in dimensions $2$ and $3$ all of the notions of informational completeness are equivalent.  For $d=2$, indeed this is true for arbitrary observables (Example \ref{ex:2}); but the fact that in dimension $3$ all the notions of informational completeness are equivalent is specific to covariant phase space observables (compare with Example \ref{ex:3}).

\begin{proposition}
Let $d=2$ or $d=3$. Then the  following conditions are equivalent.
\begin{enumerate}[(i)]
\item The observable $\Co_\tau$ is $(\state^{\leq t},\state^{\leq p})$-informationally complete for some $t,p\in\nat$ such that $1\leq t\leq p\leq d$.
\item The observable $\Co_\tau$ is informationally complete.
\end{enumerate}
\end{proposition}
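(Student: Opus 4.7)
The plan is to establish (ii)$\Rightarrow$(i) for free (informational completeness implies $(\state^{\leq t},\state^{\leq p})$-informational completeness for every admissible pair) and to focus the work on the converse. The key observation that organizes the converse is that, by the hierarchy of implications in Section \ref{sec:equivalent}, the weakest of the notions listed in (i) is $(\state^1,\state^1)$-informational completeness. Hence it suffices to show that, in dimensions $2$ and $3$, a covariant phase space observable $\Co_\tau$ which is $(\state^1,\state^1)$-informationally complete is already informationally complete. I will argue this by contrapositive, constructing in each dimension a nonzero $T\in\rr(\Co_\tau)^\perp$ with $\rank_\uparrow(T)\leq 1$; such a $T$ prevents $(\state^1,\state^1)$-informational completeness by Cor.~\ref{prop:SpSp}.

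For $d=2$ the construction is immediate. If $\Co_\tau$ is not informationally complete, Cor.~\ref{prop:well-known} guarantees a nonzero $T\in\rr(\Co_\tau)^\perp$, and any such $T$ is selfadjoint and traceless on a two-dimensional Hilbert space, hence has eigenvalues $\pm\lambda\neq 0$ and thus $\rank_\uparrow(T)=1$. (This is of course already part of Example~\ref{ex:2}, but it is instructive to see the proof in this form.)

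For $d=3$ the substantive step is the spectral analysis of Weyl operators. Using Prop.~\ref{prop:eqZ0finite}, pick $(x,\xi)\in Z(\tau)$ with $(x,\xi)\neq (0,0)$; symmetry of $Z(\tau)$ gives $(-x,-\xi)\in Z(\tau)$ as well. Define the candidate
$$
T = i\bigl(W(x,\xi) - W(x,\xi)^\ast\bigr).
$$
Selfadjointness is immediate, and using \eqref{eq:-x-xi} one rewrites $T$ as a complex linear combination of $W(x,\xi)$ and $W(-x,-\xi)$; this places $T$ inside the span described by \eqref{eqn:finite_weyl_span_comp} and so $T\in\rr(\Co_\tau)^\perp$. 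The main computational point is to determine the spectrum of $W(x,\xi)$: a direct induction on the defining formula $W(x,\xi)\psi_j=\omega^{\xi j}\psi_{j+x}$ with $\omega=e^{2\pi i/3}$ gives $W(x,\xi)^3=\id$, so its three eigenvalues lie in $\{1,\omega,\omega^2\}$; Hilbert-Schmidt orthogonality of the Weyl basis to $W(0,0)=\id$ forces $\tr W(x,\xi)=0$ for $(x,\xi)\neq (0,0)$, which pins the spectrum down to exactly $\{1,\omega,\omega^2\}$. Then $T$ has spectrum $\{0,\sqrt{3},-\sqrt{3}\}$, so $T\neq 0$ and $\rank_+(T)=\rank_-(T)=1$, giving $\rank_\uparrow(T)=1$ as desired.

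The main obstacle is conceptual rather than technical. The first hurdle is recognizing that (i) is to be reduced to the single weakest statement $(\state^1,\state^1)$-informational completeness; the second is finding the explicit rank-$2$ witness $T$. The fact that for $d=3$ every nontrivial Weyl operator is a traceless order-$3$ unitary is exactly what makes the antihermitian part $(W-W^\ast)/(2i)$ have a zero eigenvalue together with a matched pair of opposite nonzero eigenvalues; this is precisely the feature that blocks $(\state^1,\state^1)$-informational completeness. In higher dimensions the analogous combinatorics breaks down, in line with the examples of covariant phase space observables that separate the various notions of informational completeness treated in the rest of the paper.
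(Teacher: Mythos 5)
Your proof is correct, and for $d=3$ it takes a genuinely different route from the paper. The paper reduces, exactly as you do, to showing that $(\state^1,\state^1)$-informational completeness of $\Co_\tau$ forces informational completeness, but it then argues by dimension counting: it invokes the characterization from \cite{HeMaWo13} (Prop.~5 there), by which $(\state^1,\state^1)$-informational completeness in $d=3$ forces $\rr(\Co_\tau)^\perp=\{0\}$ or $\rr(\Co_\tau)^\perp=\R T$ with $T$ invertible, and then rules out the one-dimensional case by a parity argument: $\dim\rr(\Co_\tau)^\perp=\#Z(\tau)$ by \eqref{eqn:finite_weyl_span_comp} and linear independence of the Weyl operators, and $Z(\tau)$ is a symmetric set avoiding $(0,0)$ in odd dimension, hence has even cardinality. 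Your argument instead exhibits, whenever $Z(\tau)\neq\emptyset$, the explicit witness $T=i\bigl(W(x,\xi)-W(x,\xi)^\ast\bigr)\in\rr(\Co_\tau)^\perp$ (membership via \eqref{eq:-x-xi} and symmetry of $Z(\tau)$), and uses $W(x,\xi)^3=\id$ together with $\tr{W(x,\xi)}=0$ to pin the spectrum of $W(x,\xi)$ to $\{1,\omega,\omega^2\}$, so that $T$ has spectrum $\{0,\sqrt{3},-\sqrt{3}\}$ and $\rank_\uparrow(T)=1$, contradicting Cor.~\ref{prop:SpSp} (equivalently Cor.~\ref{prop:S1S1}, since $\rank(T)=2<3$). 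What your approach buys is self-containedness: you do not need the external result of \cite{HeMaWo13}, and you produce a concrete low-rank element of the annihilator rather than merely a dimension obstruction; it also works uniformly in the size of $Z(\tau)$. What the paper's approach buys is brevity and a structural insight — the evenness of $\#Z(\tau)=\dim\rr(\Co_\tau)^\perp$ in odd dimensions — which is precisely the feature special to covariant phase space observables that fails for general observables (Example~\ref{ex:3}). Your $d=2$ case is a direct reproof of what Example~\ref{ex:2} already gives, which is fine.
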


\begin{proof}
We already remarked that, in the case $d=2$, all the properties of $(S^{\leq t},S^{\leq p})$-informational completeness are equivalent by Example \ref{ex:2}. If $d=3$, then by Example \ref{ex:3} we still have to show that, for the observable $\Co_\tau$, $(\state^1,\state^1)$-informational completeness implies informational completeness. Now, the observable $\Co_\tau$ is $(\state^1,\state^1)$-informationally complete if and only if either $\rr(\Co_\tau)^\perp = \{ 0 \}$, in which case we are done, or  $\rr(\Co_\tau)^\perp  = \R T$ for some invertible $T\in\trhs$ with $\tr{T}=0$ by \cite[Prop.~5]{HeMaWo13}. In particular, $\dim \rr(\Co_\tau)^\perp =1$. In order to complete the proof we only need to show that this is not possible. By \eqref{eqn:finite_weyl_span_comp} and linear independence of the set $\{W(x,\xi)\}_{(x,\xi)\in\Z_d\times\Z_d}$, we have $\dim\rr(\Co_\tau)^\perp = \# Z(\tau)$. But since $Z(\tau)$ is symmetric, $(0,0)\notin Z(\tau)$ and the dimension $d=3$ is odd, $Z(\tau)$ must contain an even number of points, hence $\dim\rr(\Co_\tau)^\perp$ is even.
\end{proof}

As we have noted before, by increasing the size of the zero set the observable becomes less capable of performing state determination tasks. 
The next proposition shows that  already in the simplest case of an informationally \emph{incomplete} observable, namely, one having a zero set consisting of a single point, certain tasks become impossible. 

\begin{proposition}\label{prop:Z=1}
Suppose $d\geq 4$. 
The condition $\# Z(\tau) = 1$ can hold for some fiducial state $\tau$ only if $d$ is even.
If $\tau$ is a fiducial state with $\# Z(\tau) = 1$, then the observable $\Co_\tau$ is
\begin{enumerate}[(a)]
\item $(\state^{\leq t},\state)$-informationally complete for all $t < \frac{d}{2}$.
\item not $(\state^{\leq t},\state^{\leq t})$-informationally complete for any $t \geq \frac{d}{2}$.
\end{enumerate}
\end{proposition}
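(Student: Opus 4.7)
The plan is to proceed in three stages: first establish the parity claim, then identify the one real degree of freedom in $\rr(\Co_\tau)^\perp$ together with its spectrum, and finally read off (a) and (b) from the rank characterizations of Section 4.

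First I would recall from Prop.~\ref{prop:finite_zero_set} that $Z(\tau)$ is symmetric under $(x,\xi)\mapsto(-x,-\xi)$ and does not contain $(0,0)$. If $Z(\tau)=\{(x_0,\xi_0)\}$, symmetry forces $(x_0,\xi_0)=-(x_0,\xi_0)$, i.e., $2x_0\equiv 2\xi_0\equiv 0\pmod d$. Since $(x_0,\xi_0)\neq(0,0)$, such a nontrivial $2$-torsion element exists only when $d$ is even, proving the first assertion.

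Next I would pin down $\rr(\Co_\tau)^\perp$. By \eqref{eqn:finite_weyl_span_comp} it is the selfadjoint part of $\C W(x_0,\xi_0)$. Since $(x_0,\xi_0)$ is $2$-torsion, \eqref{eq:-x-xi} gives $W(x_0,\xi_0)^\ast=\pair{\xi_0}{x_0}W(x_0,\xi_0)$, and $\pair{\xi_0}{x_0}^2=\pair{\xi_0}{2x_0}=1$ forces $\pair{\xi_0}{x_0}=\pm 1$. In the $+1$ case $W(x_0,\xi_0)$ is already selfadjoint; in the $-1$ case $iW(x_0,\xi_0)$ is selfadjoint. Denote the relevant operator by $T_0$, so that $\rr(\Co_\tau)^\perp=\R T_0$. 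From \eqref{eq:J**} one computes $W(x_0,\xi_0)^2=\pair{\xi_0}{x_0}W(2x_0,2\xi_0)=\pair{\xi_0}{x_0}\id$, hence $T_0^2=\id$ in both cases. Moreover $\tr T_0=0$ because $\tr W(x,\xi)=0$ for $(x,\xi)\neq(0,0)$. Thus $T_0$ is a traceless selfadjoint involution on a $d$-dimensional space, which forces $\rank_+(T_0)=\rank_-(T_0)=d/2$ and therefore $\rank_\uparrow(T_0)=\rank_\downarrow(T_0)=d/2$.

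Both claims now follow immediately. For (a), Theorem~\ref{prop:structure}(a) requires every nonzero $T\in\rr(\Co_\tau)^\perp=\R T_0$ to satisfy $\rank_\downarrow(T)\geq t+1$; since $\rank_\downarrow(\lambda T_0)=d/2$ for every $\lambda\neq 0$, this holds precisely when $t\leq d/2-1$, i.e., $t<d/2$. For (b), Cor.~\ref{prop:SpSp} characterizes $(\state^{\leq t},\state^{\leq t})$-informational completeness by $\rank_\uparrow(\lambda T_0)\geq t+1$, and $\rank_\uparrow(\lambda T_0)=d/2$ violates this inequality for every $t\geq d/2$.

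The only step that requires genuine work is the spectral analysis of $T_0$: one has to notice that $2$-torsion of $(x_0,\xi_0)$ simultaneously produces selfadjointness (possibly after multiplication by $i$), the involutivity relation $T_0^2=\id$, and tracelessness, which together pin down the signature $(d/2,d/2)$. Once this is in place, the rest is a direct application of the rank criteria already proved in Section 4.
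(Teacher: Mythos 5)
Your proposal is correct and follows essentially the same route as the paper: symmetry of $Z(\tau)$ forces the single point to be a nontrivial $2$-torsion element (hence $d$ even), $\rr(\Co_\tau)^\perp=\R T_0$ with $T_0$ a traceless selfadjoint multiple of $W(x_0,\xi_0)$ squaring to $\id$, so its signature is $(d/2,d/2)$, and the conclusions follow from Theorem~\ref{prop:structure} and Cor.~\ref{prop:SpSp}. The only cosmetic difference is that the paper handles the phase uniformly by taking $\sigma$ a square root of $\pair{\xi}{x}$ instead of your case split $c=\pm1$.
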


\begin{proof}
Let $Z(\tau) = \{(x,\xi)\}$ with $(x,\xi)\neq (0,0)$. 
Since $Z(\tau)$ is symmetric, we have $(x,\xi) = (-x,-\xi)$, and this implies that $d$ is even and $x=d/2$ or $\xi=d/2$. 
In particular, $\pair{\xi}{x} \in \{1,-1\}$. 
We fix a square root of $\pair{\xi}{x}$ and denote it by $\sigma$.
Then the operator $T\equiv\sigma W(x,\xi)$ is selfadjoint (by \eqref{eq:-x-xi}) and generates $\rr(\Co_\tau)^\perp$ (by \eqref{eqn:finite_weyl_span_comp}). 
Since $T^2 = \id$ and $\tr{T} = 0$, we have $\rank_+(T) = \rank_-(T) = d/2$. The rest of the claim then follows from Theorem~\ref{prop:structure}.
\end{proof}

For the next possible case, i.e., a zero set consisting of two points, we can give the following characterization, analogous to Prop. \ref{prop:Z=1}, in odd prime dimensions.

\begin{proposition}
Suppose $d$ is an odd prime number and $\tau$ is a fiducial state with $\# Z(\tau) = 2$.
The observable $\Co_\tau$ is
\begin{enumerate}[(a)]
\item $(\ss^{\leq t},\ss)$-informationally complete for all $t < \floor{\frac{d}{2}}$. 
\item not $(\ss^{\leq t},\ss^{\leq t})$-informationally complete  for any $t \geq \floor{\frac{d}{2}}$.
\end{enumerate}
\end{proposition}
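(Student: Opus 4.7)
The plan is to follow the structure of the proof of Prop.~\ref{prop:Z=1}, but now using the two-dimensional annihilator $\rr(\Co_\tau)^\perp$ and the finer arithmetic available in odd prime dimension.

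Since $Z(\tau)$ is symmetric, does not contain $(0,0)$, and $d$ is odd (so no nonzero point of $\Z_d \times \Z_d$ is its own negative), we must have $Z(\tau) = \{(x,\xi),(-x,-\xi)\}$ with $(x,\xi) \neq (-x,-\xi) \neq (0,0)$. By \eqref{eqn:finite_weyl_span_comp} together with the selfadjointness constraint and the identity $W(-x,-\xi)=\overline{\pair{\xi}{x}}W(x,\xi)^{\ast}$, every element of $\rr(\Co_\tau)^\perp$ can be written in the form $T_\sigma = \sigma W(x,\xi) + \bar\sigma W(x,\xi)^{\ast}$ for some $\sigma\in\C$, and conversely every such $T_\sigma$ is selfadjoint. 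Thus $\rr(\Co_\tau)^\perp$ is a $2$-real-dimensional family parametrized by $\sigma$.

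The core step is to determine the spectrum of $T_\sigma$. First I would prove that, in prime dimension $d$, the unitary $U := W(x,\xi)$ has simple spectrum equal to the full set of $d$-th roots of unity $\{\zeta_d^k : 0\leq k < d\}$. This follows by noting $U^d = \id$ (a short calculation using \eqref{eq:J**} and the fact that $(d-1)/2 \in \Z$) and computing $\tr U^j = 0$ for $1\leq j \leq d-1$ (because $U^j$ is a scalar multiple of $W(jx,j\xi)$ and $(jx,j\xi)\neq (0,0)$ in $\Z_d\times\Z_d$ by primality); the DFT then forces each root of unity to appear exactly once. Consequently, $U$ and $U^\ast$ are simultaneously diagonalizable, and writing $\sigma = re^{i\theta}$ with $r>0$, the spectrum of $T_\sigma$ is $\{\,2r\cos(2\pi k/d + \theta)\,\}_{k=0}^{d-1}$.

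Now I would analyze positive/negative counts geometrically: $\rank_+ (T_\sigma)$ equals the number of rotated $d$-th roots $\zeta_d^k e^{i\theta}$ lying strictly in the open right half-plane, and analogously for $\rank_-(T_\sigma)$. Because $d$ is odd prime, no two distinct rotated roots can simultaneously lie on the imaginary axis (this would demand $d\mid 2(k-k')$ for some $k\neq k'$), so $T_\sigma$ has at most one zero eigenvalue. Splitting cases on $\theta$: when no root is on the imaginary axis, $\{\rank_+(T_\sigma),\rank_-(T_\sigma)\}=\{(d-1)/2,(d+1)/2\}$; when exactly one root lands on the imaginary axis, the remaining $d-1$ roots are symmetrically distributed so that $\rank_+(T_\sigma)=\rank_-(T_\sigma)=(d-1)/2$. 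In every case, $\rank_\downarrow(T_\sigma)=\lfloor d/2\rfloor$, and the second case supplies a nonzero $T$ with $\rank_\uparrow(T)=\lfloor d/2\rfloor$.

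Part (a) is then immediate from Theorem~\ref{prop:structure}(a): the condition $\rank_\downarrow(T)\geq t+1$ for every nonzero $T\in\rr(\Co_\tau)^\perp$ holds precisely when $t+1\leq \lfloor d/2\rfloor$, i.e.\ $t<\lfloor d/2\rfloor$. Part (b) follows from Cor.~\ref{prop:SpSp}: for $t\geq\lfloor d/2\rfloor$, the $T_\sigma$ produced at a ``crossing'' value of $\theta$ has $\rank_\uparrow(T_\sigma)=\lfloor d/2\rfloor\leq t$, so the necessary condition $\rank_\uparrow(T)\geq t+1$ fails. The main obstacle is the spectral claim in the second paragraph (showing $U$ is cyclic with full $d$-th-root spectrum) and the combinatorial bookkeeping needed to rule out two simultaneous zero eigenvalues; primality of $d$ is used precisely at these points.
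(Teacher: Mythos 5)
Your proposal is correct and follows the same overall skeleton as the paper: identify the two-real-dimensional annihilator $\rr(\Co_\tau)^\perp$ spanned (over $\R$) by the selfadjoint combinations of $W(x,\xi)$ and $W(-x,-\xi)$, diagonalize, observe that the eigenvalues are $2r\cos(2\pi k/d+\theta)$, count signs, and invoke Theorem~\ref{prop:structure} and Cor.~\ref{prop:SpSp}. Where you genuinely differ is in the key spectral lemma. The paper passes to the phase-adjusted Weyl system $W'(y,\zeta)=\omega^{2^{-1}\zeta y}W(y,\zeta)$ (using $2^{-1}\in\Z_d$ for odd $d$), so that $t\mapsto W'(tx,t\xi)$ is an honest unitary representation of $\Z_d$, and then cites an equivalence of this representation with $V$ to get a simple spectrum consisting of all $d$-th roots of unity; you instead prove this directly and self-containedly, via $W(x,\xi)^d=\id$ (using oddness of $d$ through $(d-1)/2\in\Z$) and $\tr{W(x,\xi)^j}=0$ for $1\leq j\leq d-1$ (using primality, since $(jx,j\xi)\neq(0,0)$), with a discrete Fourier inversion showing each eigenvalue has multiplicity one. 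You also spell out, geometrically, the sign-counting that the paper merely lists as three possible rank patterns, including the observation that at most one rotated root can sit on the imaginary axis when $d$ is odd. The trade-off: your route avoids reliance on the cited (in-preparation) representation-equivalence result and makes the primality hypothesis visibly do its work, at the cost of a slightly longer argument; the paper's route is shorter and makes selfadjointness of the generators $T_\alpha$ immediate through the symmetrized $W'$.
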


\begin{proof}
We have $Z(\tau) = \{(x,\xi), (-x,-\xi)\}$ for some nonzero $(x,\xi)\in\Z_d\times\Z_d$. As $d$ is odd, $2$ has a multiplicative inverse in the ring $\Z_d$ and we can define the following projective representation $W'$ of $\Z_d\times\Z_d$ on $\hh$
$$
W'(y,\zeta) = \omega^{2^{-1}\zeta y} W(y,\zeta) \, .
$$
Note that, as $W'(x,\xi)^* = W'(-x,-\xi)$, the operators 
$$
T_\alpha \equiv \alpha W'(x,\xi) + \overline{\alpha} W'(-x,-\xi)
$$ are selfadjoint for all $\alpha\in\C$, and $\rr(\Co_\tau)^\perp = \{T_\alpha\mid \alpha\in\C\}$ by \eqref{eqn:finite_weyl_span_comp}. For all $(y,\zeta)\in\Z_d\times\Z_d$, the map
$$
W'_{(y,\zeta)} : \Z_d \to \lh \, , \qquad W'_{(y,\zeta)} (t) = W'(ty,t\zeta)
$$
is actually a unitary representation of $\Z_d$, which is equivalent to the representation $V$ \cite{CaHeTo20??}. So, there is a Hilbert basis $\{\fii_\eta\}_{\eta\in\Z_d}$ such that $W'(ty,t\zeta) \fii_\eta = \omega^{\eta t} \fii_\eta$ for all $t\in\Z_d$. In particular, for $\alpha\neq 0$ the eigenvalues of $T_\alpha$ are $\{r\cos(2\pi t / d + \theta)\}_{t=0}^{d-1}$ for fixed $r\in\R_+$ and $\theta\in [0,2\pi)$. We thus see that the following three possibilities occur:
\begin{enumerate}
\item $\rank_+ (T_\alpha) = \rank_- (T_\alpha) = (d-1)/2$;
\item $\rank_+ (T_\alpha) = (d+1)/2$, \quad $\rank_- (T_\alpha) = (d-1)/2$;
\item $\rank_+ (T_\alpha) = (d-1)/2$, \quad $\rank_- (T_\alpha) = (d+1)/2$.
\end{enumerate}
In all three cases we see that $\rank_\downarrow(T_\alpha) \geq (d-1)/2$, and thus $\Co_\tau$ is $(\ss^{\leq t},\ss)$-informationally complete for $t = (d-1)/2 - 1$ by Theorem~\ref{prop:structure}. 
Moreover, choosing $T_\alpha$ as in item (1), by the same Theorem we see that $\Co_\tau$ is not $(\ss^{\leq t},\ss^{\leq t})$-informationally complete for $t = (d-1)/2$.
\end{proof}

We remark that if the dimension $d$ is not an odd prime, then in the case $\#Z(\tau)= 2$ the observable $\Co_\tau$ need not be $(\ss^1,\ss^1)$-informationally complete. 
Indeed, fix $d=4$ and let $Z(\tau) =\{ (0,1),(0,3)\}$. 
Then 
$$
\rr(\Co_\tau)^\perp = \{ \alpha W(0,1) + \beta W(0,3) \mid \alpha, \beta\in\C \}\cap\lhs \, ,
$$
and the elements of $\rr(\Co_\tau)^\perp$ are thus
$$
A(\alpha) = \left(\begin{array}{cccc} 
\alpha  + \overline{\alpha}  & 0 & 0 & 0 \\
0 & i (\alpha -  \overline{\alpha}  )& 0 & 0\\
0 & 0 & -(\alpha + \overline{\alpha} )& 0 \\
0 & 0 & 0 & -i (\alpha  -  \overline{\alpha})
\end{array}\right)\qquad \alpha \in\C \, .
$$
Now, for instance any  $\alpha\in\R$, $\alpha \neq 0$, will give $\rank A(\alpha) =  2$ which  implies that $\Co_\tau$ is not $(\ss^1,\ss^1)$-informationally complete by Cor.~\ref{prop:S1S1}.

As a final result concerning the finite dimensional phase space, we show that there is an upper bound on the size of the zero set after which the observable fails to be even $(\state^1,\state^1)$-informationally complete.

\begin{proposition}
Let $d\geq 4$ and denote by $\alpha$ the number of $1$'s in the binary expansion of $d-1$. If
\begin{enumerate}[(a)]
\item $\# Z(\tau) \geq (d-2)^2 + 2\alpha - 1$, or 
\item  $\# Z(\tau) \geq (d-2)^2  + 2\alpha - 3$, $d$ is odd and $ \alpha=3\,{\rm mod}\, 4$, or
\item $\# Z(\tau) \geq (d-2)^2  + 2\alpha - 2$, $d$ is odd and $\alpha=2\,{\rm mod}\, 4$,
\end{enumerate}
then $\Co_\tau$ is not $(\state^1,\state^1)$-informationally complete.
\end{proposition}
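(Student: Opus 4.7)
The plan is to recast the condition via Corollary~\ref{prop:S1S1} and then invoke the lower bounds for the minimal dimension of a $(\state^1,\state^1)$-informationally complete real operator system from~\cite{HeMaWo13}.

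First, by Corollary~\ref{prop:S1S1} the observable $\Co_\tau$ is $(\state^1,\state^1)$-informationally complete if and only if every nonzero $T\in\rr(\Co_\tau)^\perp$ has $\rank(T)\geq 3$. Combining \eqref{eqn:finite_weyl_span_comp} with the linear independence of $\{W(x,\xi)\}_{(x,\xi)\in\Z_d\times\Z_d}$ and the symmetry of $Z(\tau)$, one obtains $\dim_\R\rr(\Co_\tau)^\perp = \# Z(\tau)$, and therefore
\begin{equation*}
\dim_\R\rr(\Co_\tau) = d^2-\# Z(\tau)\, .
\end{equation*}

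The argument then proceeds by contrapositive. For any weak*-closed real operator system $\rr\subseteq\lhs$, the condition that every nonzero element of $\rr^\perp$ has rank at least $3$ is equivalent, via Corollary~\ref{prop:S1S1} together with Proposition~\ref{prop:exM}(a), to the existence of a $(\state^1,\state^1)$-informationally complete observable $\Mo$ with $\rr(\Mo)=\rr$ and exactly $\dim_\R\rr$ outcomes. Hence the lower bounds established in~\cite{HeMaWo13} for the minimal number of outcomes of such an observable apply directly to $\dim_\R\rr(\Co_\tau)$. These bounds, derived from non-immersion and non-embedding theorems for the complex projective space $\mathbb{CP}^{d-1}$ in Euclidean space and controlled by the binary digit invariant $\alpha=\alpha(d-1)$, state that if $\Co_\tau$ is $(\state^1,\state^1)$-informationally complete then
\begin{enumerate}[(i)]
\item $\dim_\R\rr(\Co_\tau)\geq 4d-2-2\alpha$ in general;
\item $\dim_\R\rr(\Co_\tau)\geq 4d-2\alpha$ if $d$ is odd and $\alpha\equiv 3\pmod 4$;
\item $\dim_\R\rr(\Co_\tau)\geq 4d-1-2\alpha$ if $d$ is odd and $\alpha\equiv 2\pmod 4$.
\end{enumerate}

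Using the identity $d^2-(d-2)^2=4d-4$ and the formula $\dim_\R\rr(\Co_\tau)=d^2-\# Z(\tau)$, substitution in (i)--(iii) converts these three lower bounds into the three upper bounds $\# Z(\tau)\leq (d-2)^2+2\alpha-2$, $\# Z(\tau)\leq (d-2)^2+2\alpha-4$ and $\# Z(\tau)\leq (d-2)^2+2\alpha-3$, respectively. Their contrapositives are exactly the three conditions (a), (b) and (c) of the proposition. The main obstacle is selecting from \cite{HeMaWo13} the correct bound (with the sharper additive correction in the odd-dimensional cases), as the improvements in (ii) and (iii) rely on the refined non-immersion theorems of Mahowald, Sanderson and Davis for $\mathbb{CP}^{d-1}$ rather than on the basic non-embedding obstruction; once the correct statement is quoted, the remaining reduction is purely arithmetic.
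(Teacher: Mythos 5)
Your proof is correct and follows essentially the same route as the paper: it uses \eqref{eqn:finite_weyl_span}--\eqref{eqn:finite_weyl_span_comp} to get $\dim_\R\rr(\Co_\tau)=d^2-\# Z(\tau)$ and then invokes the lower bounds of \cite[Theorem~6]{HeMaWo13} on $(\state^1,\state^1)$-informational completeness, with arithmetic matching the three cases exactly. The only difference is that you make explicit the bridging step via Prop.~\ref{prop:exM}(a) (passing from the outcome-number bound to a bound on $\dim_\R\rr(\Co_\tau)$), which the paper leaves implicit in its citation.
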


\begin{proof}
If $\# Z(\tau)$ is as in the statement, then by \eqref{eqn:finite_weyl_span} we have
$$
\dim \rr(\Co_\tau)\leq \left\{\begin{array}{cl} 4d-3 -2\alpha & \mbox{ in case }(a) \\
4d-1 -2\alpha & \mbox{ in case }(b)\\
4d-2 -2\alpha & \mbox{ in case }(c) 
\end{array}\right. \, ,
$$
and $\Co_\tau$ is not $(\state^1,\state^1)$-informationally complete by \cite[Theorem~6]{HeMaWo13}.
\end{proof}

\subsection{Infinite dimension}

Let $\gg = \R^n$, with dual group $\widehat{\R^n}\equiv \R^n$, pairing $\pair{\xi}{x} = e^{i \xi x}$ and Haar measures on $\gg$ and $\ggh$ coinciding with $(2\pi)^{-n/2}\times$ the Lebesgue measure.. Then, the Weyl map acts on the Hilbert space $\hh=L^2(\R^n)$ and is given by $W(x,\xi)=e^{-ix\cdot P}e^{i\xi\cdot Q}$ for all $x,\xi\in\R^n$, where $Q$ and $P$ are the usual $n$-dimensional position and momentum operators. In this case, by \eqref{eqn:weyl_span} it is the support of $\widehat{\tau}$ that is relevant for $\rr(\Co_\tau)$. 
The characterization of the possible supports is a difficult task, but for the possible zero sets $Z(\tau)$ this can be done. 

\begin{proposition}
For any state $\tau\in\ss$, $Z(\tau)$ is a closed set such that  $-Z(\tau)=Z(\tau)$ and $(0,0)\notin Z(\tau)$. Conversely, if $X\subseteq \R^{2n}$ is a closed set such that $-X = X$ and $(0,0)\notin X$, then there exists a state $\tau\in\ss$ such that $Z(\tau)=X$. 
\end{proposition}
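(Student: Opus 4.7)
The forward direction is routine: $\widehat{\tau}(x,\xi) = \tr{\tau W(x,\xi)}$ is a bounded continuous function on $\R^{2n}$, so $Z(\tau)$ is closed; $\widehat{\tau}(0,0) = \tr{\tau} = 1$, so $(0,0)\notin Z(\tau)$; and $\tau=\tau^*$ combined with \eqref{eq:-x-xi} gives $\widehat{\tau}(-x,-\xi) = \overline{\pair{\xi}{x}}\,\overline{\widehat{\tau}(x,\xi)}$, so $|\widehat{\tau}|$ is invariant under $(x,\xi)\mapsto(-x,-\xi)$, whence $Z(\tau)=-Z(\tau)$.

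For the converse I would emulate the finite-dimensional construction of Prop.~\ref{prop:finite_zero_set}. Fix a reference Gaussian state $\tau_0 = \kb{\psi_0}{\psi_0}$ with $\psi_0(q) = \pi^{-n/4}e^{-|q|^2/2}$; a direct computation shows that its inverse Weyl transform $\widehat{\tau_0}(x,\xi) = e^{-i\xi\cdot x/2}e^{-(|x|^2+|\xi|^2)/4}$ is everywhere nonzero. The problem then reduces to producing a finite positive Borel measure $\mu$ on $\R^{2n}$ whose symplectic Fourier transform $\widehat\mu$ has zero set exactly $X$. Indeed, setting
$$
\tau = \frac{1}{\mu(\R^{2n})}\int_{\R^{2n}} W(y,\eta)\,\tau_0\,W(y,\eta)^*\de\mu(y,\eta),
$$
with the integral understood in the weak*-topology, $\tau$ is positive (since $\mu\geq 0$ and $\tau_0\geq 0$) and has unit trace (by cyclicity of the trace and $\tr{\tau_0}=1$), so $\tau\in\state$. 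A short computation using the commutation relation \eqref{eq:J**} and cyclicity of the trace, analogous to the one in the proof of Prop.~\ref{prop:covariant_nassela}, then yields $\widehat{\tau}(x,\xi) = \mu(\R^{2n})^{-1}\widehat{\mu}(x,\xi)\widehat{\tau_0}(x,\xi)$, so that $Z(\tau) = Z(\widehat\mu) = X$ as required.

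The main obstacle, and what I expect to be the hard part of the argument, is the construction of $\mu$. I propose the following strategy. By second countability of $\R^{2n}$, the open symmetric set $U = X^c$ admits a countable dense sequence $\{p_k\}_{k\in\nat}$. Since $U$ is open and contains both the origin and $\pm p_k$, for each $k$ I can choose a symmetric open neighborhood $N_k$ of $\{(0,0), p_k, -p_k\}$ with $N_k\subset U$. I then take $\mu_k$ of the explicit form $\de\mu_k(z) = g_k(z)(1+\cos(p_k\cdot z))\,\de z$, where $g_k\in L^1(\R^{2n})$ is a non-negative entire function of exponential type — for instance, a product of squared sinc-type factors $\prod_{i=1}^{2n}\sin^2(R_i z_i)/z_i^2$ — whose symplectic Fourier transform $\widehat{g_k}$ is continuous, non-negative, and supported in a sufficiently small symmetric neighborhood of the origin. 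By the modulation formula, $\widehat{\mu_k}(y) = \widehat{g_k}(y) + \tfrac12(\widehat{g_k}(y - p_k) + \widehat{g_k}(y + p_k))$ is continuous, non-negative, supported in $N_k$ (hence identically zero on $X$), and strictly positive on an open neighborhood of $p_k$ inside $N_k$. Setting $\mu = \sum_k c_k\mu_k/\mu_k(\R^{2n})$ with positive weights $c_k$ summing to a finite value produces a finite positive Borel measure with $\widehat\mu$ continuous, identically zero on $X$, and strictly positive at every $p_k$. The delicate part is to arrange the neighborhoods $\{N_k\}$ so that the open positivity sets of the individual $\widehat{\mu_k}$ collectively cover all of $U$ and not merely the dense sequence $\{p_k\}$, thereby ensuring $Z(\widehat\mu)=X$ exactly; this requires an iterative refinement of $\{p_k\}$ and $\{N_k\}$ again using second countability.
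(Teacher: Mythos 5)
Your forward direction and your reduction are exactly the paper's: the paper likewise fixes a fiducial state $\tau_0$ with $\widehat{\tau_0}$ nowhere vanishing (the Gaussian, which it also uses to dispose of the case $X=\emptyset$) and sets $\tau=\int W(y,\eta)\tau_0W(y,\eta)^*\de\mu(y,\eta)$ for a suitable positive measure $\mu$, so that $\widehat{\tau}=\widehat{\mu}\,\widehat{\tau_0}$ and $Z(\tau)=Z(\widehat{\mu})$. The only genuine divergence is how $\mu$ with zero set of $\widehat{\mu}$ equal to $X$ is obtained: the paper simply invokes the Rubin--Sellke theorem \cite{RubinSellke}, which says precisely that every closed symmetric set avoiding the origin is the zero set of a characteristic function, whereas you attempt to build $\mu$ by hand from Fej\'er-type densities modulated by cosines. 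If completed, your construction would be a self-contained proof of the instance of that theorem the paper imports, which has some value; but as written it stops short.

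The gap is the one you flag yourself, and it is not merely ``delicate'': choosing a dense sequence $\{p_k\}\subset U=X^c$ first and neighborhoods afterwards only guarantees $\widehat{\mu}>0$ on a dense subset of $U$, and a continuous nonnegative function can vanish on a closed set whose complement is dense, so this does not yield $Z(\widehat{\mu})=X$; the ``iterative refinement'' is the actual content of the step and is missing. The repair is to choose the cover before the frequencies. Since $(0,0)\notin X$, fix $\delta>0$ with $B(0,\delta)\subseteq U$, and write the open set $U$ as a countable union of balls $B(q_k,r_k)$ with $r_k<\delta/2$ and $B(q_k,2r_k)\subseteq U$. For the $k$-th piece take $g_k$ a product of Fej\'er kernels scaled so that $\widehat{g_k}\geq 0$ has positivity set an open box containing $B(0,r_k)$ and contained in $B(0,\min(2r_k,\delta))$, and modulate by the cosine of the \emph{symplectic} pairing with $q_k$ (note that multiplying the density by $\cos(p\cdot z)$ translates the symplectic Fourier transform by the symplectically rotated vector, not by $p$ itself, so your modulation formula needs this bookkeeping). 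Then $\widehat{\mu_k}\geq 0$ vanishes on $X$, its positivity set lies in $U$ (using symmetry of $U$ for the translate at $-q_k$) and contains $B(q_k,r_k)$; summing with summable positive weights gives a finite positive measure whose transform is a uniformly convergent sum of nonnegative continuous terms, hence continuous, with positivity set exactly $\bigcup_k$ of the individual positivity sets $=U$, i.e.\ $Z(\widehat{\mu})=X$. With that covering argument supplied your proof is correct; without it, the construction as stated only controls $\widehat{\mu}$ on a dense set, which is insufficient. Alternatively, you can simply cite \cite{RubinSellke} as the paper does and skip the construction entirely.
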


\begin{proof}
We already remarked that $Z(\tau)$ is closed and symmetric, and $\tr{\tau} > 0$ implies $(0,0)\notin Z(\tau)$. In order to prove the converse statement, suppose first that  $X=\emptyset$. Choosing $\psi(x) = \pi^{-n/4} e^{-\|x\|^2/2}$ and defining $\tau = \kb{\psi}{\psi}$, it is easy to check that $Z(\tau)=\emptyset$. Now let $X\subseteq\R^{2n}$ be a closed nonempty set such that $-X = X$ and $(0,0)\notin X$. By \cite{RubinSellke}, there exists a probability measure $\mu:\bor{\R^{2n}}  \to[0,1]$ such that 
$$
\widehat{\mu} (x,\xi) =  0 \quad \mbox{if and only if} \quad (x,\xi)\in X \, .
$$ 
Let $\tau_0\in\state$ be such that $\widehat{\tau_0} (x,\xi)\neq 0$ for all $(x,\xi)\in\R^{2n}$, and define
$$
\tau = \int W(x,\xi) \tau_0 W(x,\xi)^* \de\mu(x,\xi) \, .
$$
Then $\tau$ is positive, nonzero, and satisfies $\widehat{\tau} (x,\xi) = \widehat{\mu} (x,\xi) \widehat{\tau_0} (x,\xi) \equiv 0$ if and only if $(x,\xi)\in X$, i.e., $Z(\tau)=X$. 
\end{proof}

In the finite dimensional setting we saw that the cardinality $\#Z(\tau)$ provides a natural way to characterize the state determination properties of the corresponding observables. In particular, observables having a small zero set are able to perform more tasks than those having larger zero sets. We see from the next proposition that in the infinite dimensional case a similar natural property is the compactness of the set.

\begin{proposition}\label{prop:infinite_dimension}
\begin{enumerate}[(a)]
\item ${\rm supp}\, \widehat{\tau}=\R^{2n}$ if and only if $\Co_\tau$ is informationally complete.
\item If $Z(\tau)$ is compact,  then $\Co_\tau$ is $(\state^{\rm fin},\state^{\rm fin})$-informationally complete.
\item If ${\rm supp}\, \widehat{\tau}$ is compact,  then $\Co_\tau$ is not $(\state^1,\state^1)$-informationally complete.
\item If neither $Z(\tau)$ nor ${\rm supp}\, \widehat{\tau}$ is compact, then $\Co_\tau$ may or may not be $(\state^1,\state^1)$-informationally complete.
\end{enumerate}
\end{proposition}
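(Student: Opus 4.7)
Part~(a) is an immediate specialization of Proposition~\ref{prop:covariant_infocomplete} to $\gg=\ggh=\R^n$.

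For part~(b), I would combine Theorem~\ref{prop:structure-fin}(a) with the characterization $\rr(\Co_\tau)^\perp=\{S\in\trhs\mid{\rm supp}\,\widehat{S}\subseteq Z(\tau)\}$ from Proposition~\ref{prop:covariant_nassela}. Any nonzero $T\in\rr(\Co_\tau)^\perp$ then has $\widehat{T}$ compactly supported inside $Z(\tau)$, and the task reduces to the harmonic-analysis lemma: \emph{a nonzero trace-class operator whose inverse Weyl transform has compact support must have infinite rank}. I would derive this via Paley--Wiener---compact support of $\widehat{T}$ forces the Weyl symbol of $T$ to be an entire function of exponential type, which is incompatible with $T$ being a nonzero finite sum of cross-Wigner distributions of $L^2$ vectors---or alternatively via Janssen's uncertainty principle for the ambiguity function. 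Tracelessness of $T$ then forces $\rank_+(T)+\rank_-(T)=\infty$, hence $\rank_\uparrow(T)=\infty$, which is the condition in Theorem~\ref{prop:structure-fin}(a).

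For part~(c), set $K={\rm supp}\,\widehat{\tau}$ and, by compactness, choose $R>0$ with $K\subseteq\{(x,\xi)\in\R^{2n}\mid|x_1|\leq R\}$. The plan is to take two nonzero $L^2$ functions $\psi,\phi$ supported on disjoint intervals of the $y_1$-axis separated by more than $R$---for instance ${\rm supp}\,\psi\subseteq\{0\leq y_1\leq 1\}$ and ${\rm supp}\,\phi\subseteq\{y_1\geq R+2\}$---and to set $T=\kb{\psi}{\phi}+\kb{\phi}{\psi}$. Then $T$ is self-adjoint, of rank $2$, and traceless (because $\langle\phi\mid\psi\rangle=0$ by disjoint supports). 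A direct computation of $\widehat{T}(x,\xi)=\langle\phi\mid W(x,\xi)\psi\rangle+\langle\psi\mid W(x,\xi)\phi\rangle$ via the change of variable $u=y-x$ shows that the integrands $\overline{\phi(u+x)}\psi(u)$ and $\overline{\psi(u+x)}\phi(u)$ vanish identically in $u$ whenever $|x_1|\leq R$; hence $\widehat{T}\equiv 0$ on $\{|x_1|\leq R\}\supseteq K$. Therefore ${\rm supp}\,\widehat{T}\subseteq Z(\tau)$, so $T\in\rr(\Co_\tau)^\perp$ by~\eqref{eqn:weyl_span_comp}, and Corollary~\ref{prop:S1S1} completes the argument.

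For part~(d), I would exhibit two examples using the zero-set prescription result preceding this statement. Choosing $\tau$ so that $Z(\tau)$ is a union of affine hyperplanes (closed, symmetric, avoiding $(0,0)$, non-compact) yields ${\rm supp}\,\widehat{\tau}=\overline{Z(\tau)^c}=\R^{2n}$, also non-compact, and by~(a) $\Co_\tau$ is informationally complete, hence in particular $(\state^1,\state^1)$-informationally complete. On the other hand, choosing $Z(\tau)=\{(x,\xi)\in\R^{2n}\mid|x_1|\geq 1\}$ gives both $Z(\tau)$ and ${\rm supp}\,\widehat{\tau}=\{|x_1|\leq 1\}$ non-compact, and the disjoint-support construction of~(c) applied with $R=1$ produces a rank-$2$ traceless self-adjoint element of $\rr(\Co_\tau)^\perp$, so $\Co_\tau$ fails to be $(\state^1,\state^1)$-informationally complete. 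The main obstacle in the whole proof is establishing the harmonic-analysis lemma used in~(b); once Proposition~\ref{prop:covariant_nassela} is in hand, parts (c) and (d) reduce to the explicit disjoint-support construction.
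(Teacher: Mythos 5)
Your overall route coincides with the paper's: (a) is indeed just Proposition~\ref{prop:covariant_infocomplete}; (b) reduces via Proposition~\ref{prop:covariant_nassela} and Theorem~\ref{prop:structure-fin}(a) to the statement that a nonzero trace-class operator with compactly supported inverse Weyl transform has infinite rank; (c) and (d) rest on a disjoint-support construction producing a traceless rank-two element of $\rr(\Co_\tau)^\perp$. Your (c) is essentially the paper's construction in different clothing (the paper's $\rho_+-\rho_-$ with $\psi_\pm=C(\chi_{-X}\pm\chi_X)$ equals $2C^2(\kb{\chi_{-X}}{\chi_X}+\kb{\chi_X}{\chi_{-X}})$, i.e.\ your $T$), and your computation and the conclusion via Corollary~\ref{prop:S1S1} are correct. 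In (d) your first example (a symmetric pair of affine hyperplanes as $Z(\tau)$, legitimized by the zero-set prescription proposition, giving ${\rm supp}\,\widehat{\tau}=\R^{2n}$) is a perfectly good and arguably more elementary substitute for the paper's citation of the nowhere-dense, infinite-measure zero set of Kiukas et al.; your second example with $Z(\tau)=\{|x_1|\geq 1\}$ matches the paper's strip-type example in spirit and works as you describe.

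The one genuine gap is the harmonic-analysis lemma in (b), which you yourself identify as the main obstacle. The paper does not prove it; it cites Narayanan--Ratnakumar's Benedicks-type theorem for the Heisenberg group (\cite[Theorem~2.2]{NarRat}), which is exactly your lemma. Your sketched derivations do not substitute for it as written. The Paley--Wiener step only tells you that the Weyl symbol (symplectic Fourier transform of $\widehat{T}$) extends to an entire function of exponential type; the asserted ``incompatibility'' of this with $T$ being a nonzero finite sum of cross-Wigner distributions of $L^2$ vectors is not self-evident --- it is precisely the nontrivial content of the theorem, and no identity-theorem or growth argument is offered that rules it out (note, e.g., that Wigner functions of finite-rank operators are typically real-analytic, so analyticity alone decides nothing; one needs a genuine qualitative uncertainty principle). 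Likewise, Janssen-type uncertainty principles for the ambiguity function give the rank-one case, but the annihilator condition only tells you $\widehat{T}$ is compactly supported for a general finite-rank $T$, and passing from rank one to arbitrary finite rank is where the Benedicks-type argument does real work. So either cite \cite{NarRat} as the paper does, or supply an actual proof of the lemma; with that repaired, the rest of your argument stands. (A small cosmetic point: infinite rank already forces $\rank_+(T)+\rank_-(T)=\infty$, hence $\rank_\uparrow(T)=\infty$; tracelessness is not needed for that step.)
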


\begin{proof}
Part (a) is just a restatement of Prop.~\ref{prop:covariant_infocomplete}. 

Suppose then that $Z(\tau)$ is compact. By \cite[Theorem~2.2]{NarRat}, any $S\in\trhs$ such that $\widehat{S}$ is compactly supported, is necessarily of infinite rank. In particular, by \eqref{eqn:weyl_span_comp} and compactness of $Z(\tau)$ every $S\in\rr(\Co_\tau)^\perp$ has $\rank_\uparrow S = \infty$. 
Item (b) then follows from Theorem~\ref{prop:structure-fin}a.

Assume next that ${\rm supp}\, \widehat{\tau}$ is compact and let $R>0$ be such that ${\rm supp}\, \widehat{\tau}\subseteq B^{2n}_R = \{(x,\xi)\in\R^{2n}\mid \|x\|^2+\|\xi\|^2 < R^2\}$. Let $x_0\in\R^n$ be such that $\|x_0\| = 2R$, and denote $X= x_0 + B^n_{R/4} = \{x\in\R^n\mid \|x-x_0\|^2 < (R/4)^2\}$. Now define the unit vectors
$$
\psi_\pm = C (\chi_{-X} \pm \chi_{X})
$$
where $\chi_X$ denotes the characteristic function of the set $X$ and $C$ is the normalization constant. If $(x,\xi)\in {\rm supp}\, \widehat{\tau}$, then $\|x\|  < R$ and we have
\begin{align*}
\ip{\chi_{-X}}{W(x,\xi) \chi_{X}} & = \ip{\chi_{X}}{W(x,\xi) \chi_{-X}}\\
& = e^{-ix\cdot\xi}\,\int e^{i\xi\cdot z}  \chi_{-X}(z) \chi_{X} (z-x)\de z = 0 \, .
\end{align*}
Therefore, if $\rho_{\pm} = \kb{\psi_{\pm}}{\psi_{\pm}}$, then for $(x,\xi)\in {\rm supp}\, \widehat{\tau}$ we have
$$
\widehat{\rho}_{\pm} (x,\xi) = \vert C\vert^2  \left( \ip{\chi_{X}}{W(x,\xi) \chi_{X}} + \ip{\chi_{-X}}{W(x,\xi) \chi_{-X}} \right) \, ,
$$
hence
$$
\widehat{(\rho_+ - \rho_-)} (x,\xi) = 0 \, .
$$
By \eqref{eqn:weyl_span_comp}, $\rho_+ - \rho_- \in \rr(\Co_\tau)^\perp$, hence (c) holds by Prop.~\ref{prop:prem}.

For part (d) we will  give two examples in the case $n=1$. For the first one, let $R>0$ and choose $Z(\tau) = \left( [-6R,-R]\cup[R,6R]  \right)\times \R$. Then a similar argument as before shows that $\rho_+ - \rho_- \in \rr(\Co_\tau)^\perp$, hence $\Co_\tau$ is not $(\state^1,\state^1)$-informationally complete by Prop.~\ref{prop:prem}. For the second example, we refer to \cite[Prop.~9]{Kiukas2012} where the authors constructed a state $\tau$ such that $Z(\tau)$ is nowhere dense but of infinite Lebesgue measure. In other words, $\Co_\tau$ is informationally complete but neither $Z(\tau)$ nor ${\rm supp}\, \widehat{\tau}$ is compact.
\end{proof}

\subsection{An application: noisy measurements}

In any realistic measurement one needs to take into account the effect of noise originating from various imperfections in the measurement setup. This typically results in a smearing of the measurement outcome distribution which appears in the form of a convolution: if $\varrho^{\Co_{\tau_0}}$ is the probability distribution corresponding to the ideal measurement of $\Co_{\tau_0}$, the actually measured distribution is $\mu*\rho^{\Co_{\tau_0}}$ for some probability measure $\mu$ modelling the noise. The convolution does not affect the covariance properties of the observable and hence the general structure of the observable remains the same. That is, the actually measured observable is a covariant phase space observable $\Co_\tau$ with the smeared fiducial state 
$$
\tau =  \int W(x,\xi) \tau_0 W(x,\xi)^*\, \de\mu (x,\xi) \, .
$$
The inverse Weyl transform of $\tau$ now gives $\widehat{\tau}(x,\xi) = \widehat{\mu}(x,\xi) \widehat{\tau_0}(x,\xi)$. In particular, we have   $Z(\tau) = Z(\mu) \cup Z(\tau_0)$ where we have defined analogously $Z(\mu) = \{ (x,\xi)\in\mathcal{G}\times \widehat{\mathcal{G}} \mid \widehat{\mu}(x,\xi) =0\}$. 

Consider next the special case where $Z(\tau_0)=\emptyset$ so that $\Co_{\tau_0}$ is informationally complete. For instance, one may think of the measurement of the Husimi $Q$-function of a state, in which case $\hi=L^2(\R)$ and $\tau_0=\vert \psi_0\rangle\langle \psi_0 \vert$ is the vacuum, i.e., the ground state of the harmonic oscillator $\psi_0(x)  = \pi^{-1/4} e^{-x^2/2}$. Now the overall observable's ability to perform any state determination task is completely determined by the support of $\widehat{\mu}$. In the specific example with the $Q$-function we immediately see, e.g.,  that any Gaussian noise has no effect on the success of the task at hand. However,  from Prop.~\ref{prop:infinite_dimension} we know that any $\mu$ with ${\rm supp}\, \widehat{\mu} \neq \R^2$ but with compact $Z(\mu)$ results in an observable which is not informationally complete but still allows one to determine any finite rank state under the premise that the rank is bounded by some arbitarily high finite number $p$. Finally, if ${\rm supp}\, \widehat{\mu}$ is compact, then even the simplest task of determining pure states among pure states fails. 

\section*{Acknowledgements}

T.H.~acknowledges financial support from the Academy of Finland (grant no. 138135). J.S.~and A.T.~acknowledge financial support of the Italian Ministry of Education, University and Research (FIRB project RBFR10COAQ).


\begin{thebibliography}{10}

\bibitem{VoRi89}
K.~Vogel, and H.~Risken. 
\newblock Determination of quasiprobability distributions in terms of probability distributions for the rotated quadrature phase. 
\newblock {\em Phys. Rev. A} 40:2847--2849, 1989.

\bibitem{Smithey93} 
D.~T.~Smithey, M.~Beck, M.~G.~Raymer, and A.~Faridina. 
\newblock Measurement of the Wigner distribution and the density matrix of a light mode using optical homodyne tomography: application to squeezed states and the vacuum.
\newblock {\em Phys. Rev. Lett.} 70:1244--1247, 1993.

\bibitem{KiSc13}
J.~Kiukas, and J.~Schultz. 
\newblock Informationally complete sets of Gaussian measurements.
\newblock {\em J. Phys. A: Math. Theor.} 46:485303, 2013.

\bibitem{LeMu96}
U.~Leonhardt, and M.~Munroe. 
\newblock Number of phases required to determine a quantum state in optical homodyne tomography.
\newblock {\em Phys. Rev. A} 54:3682--3684, 1996.

\bibitem{Prugovecki77}
E.~Prugove\v{c}ki.
\newblock Information-theoretical aspects of quantum measurements.
\newblock {\em Int. J. Theor. Phys.}, 16:321--331, 1977.

\bibitem{HeMaWo13}
T.~Heinosaari, L.~Mazzarella, and M.M. Wolf.
\newblock Quantum tomography under prior information.
\newblock {\em Comm. Math. Phys.}, 318:355--374, 2013.

\bibitem{CBMOA03}
V.~Paulsen.
\newblock {\em Completely bounded maps and operator algebras}.
\newblock Cambridge University Press, Cambridge, 2003.

\bibitem{CRQM97}
W.~Stulpe.
\newblock {\em Classical Representations of Quantum Mechanics Related to
  Statistically Complete Observables}.
\newblock Wissenschaft und Technik Verlag, Berlin, 1997.
\newblock Also available: quant-ph/0610122.

\bibitem{CFA90}
J.B. Conway.
\newblock {\em A Course in Functional Analysis}.
\newblock Springer, 2nd edition, 1990.

\bibitem{Busch91}
P.~Busch.
\newblock Informationally complete sets of physical quantities.
\newblock {\em Internat. J. Theoret. Phys.}, 30(9):1217--1227, 1991.

\bibitem{SiSt92}
M.~Singer and W.~Stulpe.
\newblock Phase-space representations of general statistical physical theories.
\newblock {\em J. Math. Phys.}, 33:131--142, 1992.

\bibitem{STIA2005}
B.~Simon.
\newblock {\em Trace ideals and their applications}, volume 120 of {\em
  Mathematical Surveys and Monographs}.
\newblock American Mathematical Society, Providence, RI, second edition, 2005.

\bibitem{CuMoWi08}
T.~Cubitt, A.~Montanaro, and A.~Winter.
\newblock On the dimension of subspaces with bounded {S}chmidt rank.
\newblock {\em J. Math. Phys.}, 49:022107, 2008.

\bibitem{ChDaJiJoKrShZe}
J.~Chen, H.~Dawkins, Z.~Ji, N.~Johnston, D.~Kribs, F.~Shultz, and B.~Zeng.
\newblock Uniqueness of quantum states compatible with given measurement
  results.
\newblock {\em Phys. Rev. A}, 88:012109, Jul 2013.

\bibitem{ALP66}
J.~F. Adams, P.~D. Lax, and R.~S. Phillips.
\newblock On matrices whose real linear combinations are non-singular.
\newblock {\em Proc. Amer. Math. Soc.}, 16:318--322, 1965.
\newblock Corrections in
\newblock {\em Proc. Amer. Math. Soc.}, 17:945--947, 1966.

\bibitem{SDBM2000}
J.~R. Silvester.
\newblock Determinants of block matrices.
\newblock {\em The Mathematical Gazette}, 84(501):pp. 460--467, 2000.

\bibitem{Werner84}
R.~Werner.
\newblock Quantum harmonic analysis on phase space.
\newblock {\em J. Math. Phys.}, 25:1404--1411, 1984.

\bibitem{Gro98}
K.~Gr{\"o}chenig.
\newblock Aspects of {G}abor analysis on locally compact abelian groups.
\newblock In {\em Gabor analysis and algorithms}, Appl. Numer. Harmon. Anal.,
  pages 211--231. Birkh\"auser Boston, Boston, MA, 1998.

\bibitem{KiSc20??}
J.~Kiukas, J.~Schultz, R.~F.~Werner.
\newblock In preparation.

\bibitem{CaCaDeToVa04}
C.~Carmeli, G.~Cassinelli, E.~{De Vito}, A.~Toigo, and B.~Vacchini.
\newblock A complete characterization of phase space measurements.
\newblock {\em J. Phys. A}, 37:5057--5066, 2004.

\bibitem{PSAQT82}
A.S. Holevo.
\newblock {\em Probabilistic and Statistical Aspects of Quantum Theory}.
\newblock North-Holland Publishing Co., Amsterdam, 1982.

\bibitem{ATthesis}
A.~Toigo.
\newblock {\em Positive operator measures, generalised imprimitivity theorem,
  and their applications}.
\newblock PhD thesis, Universit{\`a} di Genova, Genova, 2005.
\newblock Available in arxiv.org/abs/math-ph/0505080.

\bibitem{Kiukas2012}
J.~Kiukas, P.~Lahti, J.~Schultz, and R.F. Werner.
\newblock Characterization of informational completeness for covariant phase
  space observables.
\newblock {\em Journal of Mathematical Physics}, 53(10), 2012.

\bibitem{CAHA95}
G.B. Folland.
\newblock {\em A Course in Abstract Harmonic Analysis}.
\newblock CRC Press, Boca Raton, FL, 1995.

\bibitem{CaHeTo20??}
C.~Carmeli, T.~Heinosaari, A.~Toigo.
\newblock In preparation.

\bibitem{RubinSellke}
H.~Rubin and T.~M. Sellke.
\newblock Zeroes of infinitely differentiable characteristic functions.
\newblock In {\em A festschrift for {H}erman {R}ubin}, volume~45 of {\em IMS
  Lecture Notes Monogr. Ser.}, pages 164--170. Inst. Math. Statist., Beachwood,
  OH, 2004.

\bibitem{NarRat}
E.~K. Narayanan and P.~K. Ratnakumar.
\newblock Benedicks' theorem for the {H}eisenberg group.
\newblock {\em Proc. Amer. Math. Soc.}, 138(6):2135--2140, 2010.

\end{thebibliography}
\end{document}